  \providecommand\BibTeX{{%
    \normalfont B\kern-0.5em{\scshape i\kern-0.25em b}\kern-0.8em\TeX}}}
\newtheorem{openpb}{Open problem}
\newcommand{\pqe}{\mathrm{PQE}}
\newcommand\sat{\mathsf{SAT}}
\newcommand\ssat{\#\mathsf{SAT}}
\newcommand\shap{\mathsf{Shapley}}
\newcommand\vars{\mathsf{Vars}}
\newcommand\ar{\mathsf{ar}}
\newcommand\const{\mathsf{Const}}
\newcommand\out{\mathsf{output}}
\newcommand\R{\mathbb{R}}
\newcommand\cprox{\textsf{CNF Proxy}}
\newcommand\mc{\textsf{Monte Carlo}}
\newcommand\kshap{\textsf{Kernel SHAP}}
\newcommand\lin{\mathsf{Lin}}
\newcommand\elin{\mathsf{ELin}}
\renewcommand\phi\varphi
\newcommand{\hrulealg}[0]{\vspace{1mm} \hrule \vspace{1mm}}
\newcommand\defeq{\stackrel{\mathrm{def}}{=}}
\newcommand{\dn}{D_\mathrm{n}}
\newcommand{\dx}{D_\mathrm{x}}
\newcommand{\tr}{\leq_{\mathrm{T}}^{\mathrm{p}}}
\newcommand{\shslices}{\text{$\#$}\mathrm{Slices}}
\newcommand*\conj[1]{\overline{#1}}
\newcommand{\paratitle}[1]{\noindent{\bf #1}}
\newcommand\fptoshp{\mathrm{FP}^{\text{$\#$}\text{\rm P}}}
\newcommand{\revision}[1]{{#1}}
\begin{document}

\title{Computing the Shapley Value of Facts in Query Answering}

\pagenumbering{gobble}
\pagenumbering{arabic}

\settopmatter{authorsperrow=4}
\settopmatter{printacmref=false}
\renewcommand\footnotetextcopyrightpermission[1]{}
\author{Daniel Deutch}
\affiliation{%
  \institution{Tel Aviv University\\
  Blavatnik School of Computer Science,
  Tel Aviv, Israel}
}
\email{danielde@post.tau.ac.il}

\author{Nave Frost}
\affiliation{%
  \institution{Tel Aviv University\\
  Blavatnik School of Computer Science,
  Tel Aviv, Israel}
}
\email{navefrost@mail.tau.ac.il}

\author{Benny Kimelfeld}
\affiliation{%
  \institution{Technion - Israel Institute of Technology \\ 
  Faculty of Computer Science, Haifa, Israel}
}
\email{bennyk@cs.technion.ac.il}

\author{Mika\"el Monet}
\affiliation{%
  \institution{Univ. Lille, Inria, CNRS, Centrale Lille, UMR 9189 CRIStAL, F-59000 Lille, France}
}
\email{mikael.monet@inria.fr}

\begin{abstract}
The Shapley value is a game-theoretic notion for wealth distribution that is nowadays extensively used to explain complex data-intensive computation, for instance, in network analysis or machine learning.
Recent theoretical works show that query evaluation over relational databases fits well in this explanation paradigm.
Yet, these works fall short of providing practical solutions to the computational challenge inherent to the Shapley computation. 
We present in this paper two practically effective solutions for computing Shapley values in query answering.
We start by establishing a tight theoretical connection to the extensively studied problem of query evaluation over probabilistic databases, which allows us to obtain a polynomial-time algorithm for the class of queries for which probability computation is tractable.
We then propose a first practical solution for computing Shapley values that adopts tools from probabilistic query evaluation. In particular, we capture the dependence of query answers on input database facts using Boolean expressions (data provenance), and then transform it, via Knowledge Compilation, into a particular circuit form for which we devise an algorithm for computing the Shapley values. 
Our second practical solution is a faster yet inexact approach that transforms the provenance to a Conjunctive Normal Form and uses a heuristic to compute the Shapley values.
Our experiments on TPC-H and IMDB demonstrate the practical effectiveness of our solutions.       

\end{abstract}

\maketitle
\pagestyle{plain}

\def\e#1{\emph{#1}}

\section{Introduction}
\label{sec:intro}
Explaining query answers has been the objective of extensive research in recent
years~\cite{DBLP:conf/tapp/SalimiBSB16,DBLP:conf/pods/GreenT17,DBLP:journals/pvldb/RoyOS15,DBLP:conf/icdt/LivshitsBKS20,deutch2020explaining}.
A prominent approach is to devise an explanation based on the facts that were
used for deriving the answers; these facts are often termed \e{provenance} or
\e{lineage} of the query answer.  For illustration, consider a query asking
whether there exists a route from the USA to France with at most one connection
over a database of airports and flights. The answer is Boolean, and upon
receiving a positive answer one may seek explanations of why it is so. The
basis for such explanations would include all details of qualifying routes that
are used in the derivation of the answer. 
Unfortunately, the number of relevant routes  might be huge (specifically,
quadratic in the database size). Moreover, it is conceivable that
different facts differ considerably in their importance to the answer at hand;
for instance, some flights may be crucial to enabling the USA-France
connection, while others may be easily replaced by alternatives.

To address these issues, there have been several proposals for principled ways
of \e {quantifying} the contribution of input facts to query
answers~\cite{DBLP:journals/pvldb/MeliouGMS11,DBLP:conf/tapp/SalimiBSB16,DBLP:conf/icdt/LivshitsBKS20,DBLP:journals/pvldb/MeliouRS14}.
We focus here on the recent approach of Livshits et
al.~\cite{DBLP:conf/icdt/LivshitsBKS20} that applies to this setting the notion of \e{Shapley
values}~\cite{shapley1953value}---a game-theoretic function for distributing the
wealth of a team in a cooperative game. This function has strong theoretical
justifications~\cite{1988TSv}, and indeed, it has been applied across various
fields such as economics, law,  environmental science, and network analysis. It
has also been used for explanations in data-centric paradigms such
as knowledge
representation~\cite{DBLP:journals/ai/HunterK10,DBLP:conf/ijcai/YunVCB18} and
machine learning~\cite{lundberg2017unified,lundberg2020local}.  In the context of
relational databases, given a query~$q(\bar x)$, a database~$D$, an input fact~$f\in D$
and a tuple~$\bar{t}$ of same arity as~$\bar x$, \emph{the Shapley value of~$f$ in~$D$ for query~$q(\bar x)$
and tuple~$\bar{t}$} intuitively represents the contribution of~$f$ to the
presence (or absence) of~$\bar{t}$ in the query result.

Livshits et al.~\cite{DBLP:conf/icdt/LivshitsBKS20,reshef2020impact} 
initiated the study of
 the computational complexity of calculating Shapley values in query answering. They showed mainly lower bounds on the complexity of the problem, with the exception of the sub-class of self-join free SPJ queries called \e{hierarchical}, where they gave a polynomial-time algorithm. The results are more positive if imprecision is allowed, as they showed that the problem admits a tractable approximation scheme (FPRAS, to be precise) via Monte Carlo sampling.
The state of affairs is that the class of known tractable cases (namely the hierarchical conjunctive queries) is highly restricted, and the approximation algorithms with theoretical guarantees are impractical in the sense that they require a large number of executions of the query over database subsets (the samples). 
Hence, the theoretical analysis of Livshits et al.~\cite{DBLP:conf/icdt/LivshitsBKS20,reshef2020impact} does not provide sufficient evidence of practical feasibility for adopting the Shapley value as a measure of responsibility in query answering. 
Moreover, 
the results of Livshits et al.~\cite{DBLP:conf/icdt/LivshitsBKS20} imply that, for self-join
free SPJ queries the class of tractable queries for computing Shapley values
coincides with the class of tractable queries in probabilistic
tuple-independent databases~\cite{dalvi2007efficient}. Yet, no direct
connection has been made between these two problems and, theoretically speaking, it
has been left unknown whether algorithms for probabilistic databases can be
used for Shapley computation.  

\revision{Recently, Van den Broeck et
al.~\cite{van2021tractability} and Arenas et
al.~\cite{arenas2021tractability,arenas2021complexity}  investigated the
computational complexity of the \emph{SHAP-score}~\cite{lundberg2017unified}, a
notion used in machine learning for explaining the predictions of a  model. 
While both are based on the general notion of Shapley value,
the SHAP-score for machine learning and
Shapley values for databases are different.
In the latter case, the players are the tuples of the database and the game function that is
used is simply the value of the query on a subset of the database, while in the former case, the players are the features of the model and
the game function is a conditional expectation of the model's output (see Section~\ref{sec:exp_approx} for a more formal definition of the SHAP-score).
Remarkably, Van den Broeck et
al.~\cite{van2021tractability} have shown that computing the
SHAP-score is equivalent (in terms of polynomial-time reductions) to the
problem of computing the expected value of the model.}
\revision{
 One of our contributions is to
show that the techniques developed
by~\cite{van2021tractability,arenas2021tractability,arenas2021complexity} can
be adapted to the context of Shapley values for databases. For instance, by adapting
to our context the proof of Van den Broeck et
al.~\cite{van2021tractability} that computing the SHAP-score reduces
to computing the expected value of the model, we resolve the aforementioned open
question affirmatively: we prove that Shapley computation can be efficiently
(polynomial-time) reduced to probabilistic query answering. Importantly, this
applies not only to the restricted class of SPJ queries without self-joins, but
to \emph{every} database query.  Hence, extending theory to practice, one can
compute the Shapley values using a query engine for probabilistic databases.
}

In turn, a common approach that was shown to be practically effective for
probabilistic databases is based on \e{Knowledge
Compilation}~\cite{darwiche2002knowledge,jha2013knowledge}.  In a nutshell, the idea is to
first compute the Boolean provenance of a given output tuple in the sense of Imielinski and
Lipski~\cite{imielinski84Incomplete},
and then to “compile” the provenance
into a particular circuit form that is more favorable for probability
computation.
Specifically, the target class of this compilation is that of \e{deterministic
and decomposable circuits} (d-D).
In our case, rather than going through probabilistic databases, we devise a
more efficient approach that computes the Shapley values directly from the d-D
circuit.  \revision{This is similar to how Arenas at al.~\cite{arenas2021tractability,arenas2021complexity} directly prove
that the SHAP-score can be computed efficiently over such circuits, without using the more general results of~\cite{van2021tractability}.
By adapting the proof of~\cite{arenas2021tractability,arenas2021complexity},} we show how, given a d-D circuit
representing the provenance of an output tuple, we can efficiently compute
the Shapley value of every input fact. While the aforementioned
properties of the circuit are not guaranteed in general (beyond the class of
hierarchical queries), we empirically show the applicability and usefulness of
the approach even for non-hierarchical queries.

\revision{Our experimental results (see below) indicate that our exact computation algorithm is fast in most cases, but is too costly in others. For the latter cases, we propose a heuristic approach to
retrieve the relative order of the facts by their Shapley values, without actually computing these values. Indeed, determining the most influential facts is in many cases already highly useful, even if their precise contribution remains unknown. The solution that we propose to this end is termed \e{\cprox{}}; it is based on a transformation of the provenance to Conjunctive Normal Form (CNF) and using it to compute proxy values intuitively based on (1) the number of clauses in which a variable occur and (2) its alternatives in each clause. These are two aspects that are correlated with Shapley values. The proxy values may be very different from the real Shapley values, and yet, when we order facts according to their proxy values we may intuitively get an ordering that is similar to the order via Shapley. Our experiments validate that this intuition indeed holds for examined benchmarks.  }

We have experimented with multiple queries from the two standard benchmarks TPC-H and IMDB. Our main findings are as follows. \revision{In most cases (98.67\% of the IMDB output tuples and 83.83\% for TPC-H), our exact computation algorithm terminates in 2.5 seconds or less, given the provenance expression. In the vast majority of remaining cases the execution is very costly, typically running out of memory already in the Knowledge Compilation step. By contrast, our inexact solution \e{\cprox{}} is extremely fast even for these hard cases -- it typically terminates in a few milliseconds with the worst observed case (an outlier) being 4 seconds. In fact, it is faster by several orders of magnitude than sampling-based approximation techniques (the \mc{} sampling proposed in \cite{DBLP:conf/icdt/LivshitsBKS20} as well as a popular sampling-based solution for Shapley values in Machine Learning (\kshap{}~\cite{lundberg2017unified}). To measure quality, we use \e{\cprox{}} to rank the input tuples, and compared the obtained ranked lists to ranking by actual Shapley values (in cases where exact computation has succeeded), using the standard measures of nDCG and Precision@k. Our solution outperforms the competitors in terms of quality as well. }

\revision{
We then propose a simple hybrid approach: execute the exact algorithm until it either terminates or a timeout elapses. If we have reached the timeout, resort to executing \e{\cprox{}} and rank the facts based on the obtained values. We show experiments with different timeout values, justifying our choice of 2.5 seconds.}

Hence, our contributions are both of a theoretical and practical nature and can be summarized as follows.
\begin{compactitem}
    \item \revision{By adapting the proof technique of~\cite{van2021tractability},} we establish a fundamental result about the complexity of computing Shapley values over relational queries: Shapley values can be computed in polynomial time--in data complexity--whenever the query can be evaluated in polynomial time over tuple-independent probabilistic databases (Proposition~\ref{prp:to-pqe}). This holds for \e{every} query.
    \item \revision{By adapting the proof technique of~\cite{arenas2021tractability,arenas2021complexity},} we devise a novel algorithm for computing Shapley values for query evaluation via compilation to a deterministic and decomposable circuit (Proposition~\ref{prp:shap-via-kc}). We show that this algorithm is practical and
    has the theoretical guarantee of running in polynomial time in the size of the circuit. 
    \item We present a novel heuristic, \cprox{}, that is fast yet inexact, and is practically effective if we are interested in ranking input facts by their contribution rather than computing exact Shapley values (Section~\ref{sec:approx}). 
    \item We describe a thorough experimental study of our algorithms over realistic data and show their efficiency (Section~\ref{sec:experiments}).

\end{compactitem}

\paragraph{Related work.} 
Existing models for explaining database query results may roughly be divided in
two categories: (1) models that are geared for tracking/presenting provenance
of output tuples, e.g., the set of all input facts participating in their
computation \cite{cuiweinerwidom}, possibly alongside a description of the ways
they were used, in different granularity levels
(e.g., \cite{green2007provenance,whyprovenance,Buneman2008}); (2) models that quantify
contributions of input facts
\cite{DBLP:journals/pvldb/MeliouGMS11,DBLP:conf/tapp/SalimiBSB16,DBLP:conf/icdt/LivshitsBKS20,DBLP:journals/pvldb/MeliouRS14},
which is the approach that we follow here. Works in the latter context often
have connections with the influential line of work on probabilistic
databases~\cite{suciu2011probabilistic}, and we show that this is the case for
Shapley computation as well.

As already mentioned, an important point of comparison is the work of Van
den Broeck et al.~\cite{van2021tractability} and that of Arenas et
al.~\cite{arenas2021tractability,arenas2021complexity} on the
\emph{SHAP-score}. \revision{While we show that the proof techniques developed
in this area can be adapted to the context of relational databases, we point
out that the two sets of results obtained (for SHAP-score and for Shapley
values for databases) seem incomparable, as we do not see a way of
proving results for Shapley value for query answering using the results on the
SHAP-score, or vice-versa. In fact, this adaptation only works up to a certain
point. For instance, the \emph{efficiency axiom} of the Shapley
value immediately implies that  computing the expected value of a model 
can be reduced in polynomial time to computing the SHAP-score of its features;
in contrast, this axiom does not seem to yield any clear such implication in
our context (see our Open Problem~\ref{open:pqe-to-shap} and the discussion
around it).}

\paragraph{Paper organization.} 
We formalize the notion of Shapley values for query answering in
Section~\ref{sec:def}. In Section \ref{sec:pqe} we present the theoretical
connection to probabilistic databases and its implications. Our exact
computation algorithm is presented in Section \ref{sec:lineages} and our
heuristic in Section \ref{sec:approx}.  Experimental results are presented in
Section \ref{sec:experiments} and we conclude in Section \ref{sec:conc}.

\begin{figure*}
    \centering
    % \boxed{
    \begin{minipage}[c]{.39\textwidth}
    \subfloat[Database of flights and airports]{
    \begin{tabular}{c|c|c|}
    \cline{2-3}
    & \multicolumn{2}{c|}{\cellcolor[HTML]{C0C0C0}\textsc{Flights} (endo)}\\
    \cline{2-3}
    & \textbf{Src} & \textbf{Dest} \\
    \hline
    $a_1$ & JFK & CDG\\
    $a_2$ & EWR & LHR\\
    $a_3$ & BOS & LHR\\
    $a_4$ & LHR & CDG\\
    $a_5$ & LHR & ORY\\
    $a_6$ & LAX & MUC\\
    $a_7$ & MUC & ORY\\
    $a_8$ & LHR & MUC\\    
    \hline
    \end{tabular}
    \quad
    \begin{tabular}{c|c|c|}
    \cline{2-3}
    & \multicolumn{2}{c|}{\cellcolor[HTML]{C0C0C0}\textsc{Airports} (exo)}\\
    \cline{2-3}
    & \textbf{Name} & \textbf{Country} \\
    \hline
    $b_1$ & JFK & USA\\
    $b_2$ & EWR & USA\\
    $b_3$ & BOS & USA\\
    $b_4$ & LAX & USA\\
    $b_5$ & LHR & EN\\
    $b_6$ & MUC & GR\\
    $b_7$ & ORY & FR\\
    $b_8$ & CDG & FR\\    
    \hline
    \end{tabular}
    \label{fig:running_database}
    }
    \end{minipage}%
    \begin{minipage}[c]{.29\textwidth}
    \centering
    \subfloat[Flights in graph view. Dark and light gray depict ``USA'' and ``FR'' airports respectively]{
    \label{fig:running_graph}
    \begin{tikzpicture}
    \begin{scope}[every node/.style={rectangle, rounded corners,thick,draw}]
        \node[fill=gray!30] (1) at (0,0) {\scriptsize JFK};
        \node[fill=gray!30] (2) at (0,1) {\scriptsize EWR};
        \node[fill=gray!30] (3) at (0,2.5) {\scriptsize BOS};
        \node[fill=gray!30] (4) at (0,3.5) {\scriptsize LAX};
        \node (5) at (1.25, 1.75) {\scriptsize LHR};
        \node (6) at (1.25,3.5) {\scriptsize MUC};
        \node[fill=gray!10] (7) at (2.5,2.5) {\scriptsize ORY};
        \node[fill=gray!10] (8) at (2.5,0) {\scriptsize CDG};      
    \end{scope}
    \begin{scope}[every edge/.style={draw=black, thick}]
        \path [->] (1) edge node[above] {$a_1$} (8);
        \path [->] (2) edge node[above, sloped] {$a_2$} (5);
        \path [->] (3) edge node[above, sloped] {$a_3$} (5);
        \path [->] (5) edge node[above, sloped] {$a_5$} (7);
        \path [->] (5) edge node[above, sloped] {$a_4$} (8);
        \path [->] (4) edge node[above] {$a_6$} (6);
        \path [->] (6) edge node[above, sloped] {$a_7$} (7);
        \path [->] (5) edge node[right] {$a_8$} (6);
    \end{scope}
    \end{tikzpicture}    
    }
    \end{minipage}%
    \begin{minipage}[c]{.28\textwidth}
    \centering
    \subfloat[$q$ is a Boolean union of conjunctive queries (UCQ)]{
    \label{fig:query_q}
    \begin{tabular}{l}
        $\begin{aligned} 
            &\texttt{Routes from ``USA'' to ``FR'' with} \\
            &\texttt{one or less connecting flights}
        \end{aligned}$\\
        \hline\\[.1em]
        $\begin{aligned} 
        q_1\, =\, &\exists x,y:\, \textsc{Airports}(x, \text{``USA''})\, \land \\ &\textsc{Airports}(y, \text{``FR''}) \, \land \\ &\textsc{Flights}(x, y)\\
        q_2\, =\, &\exists x,y,z:\,\textsc{Airports}(x, \text{``USA''})\, \land \\ &\textsc{Airports}(z, \text{``FR''})\, \land \\ &\textsc{Flights}(x, y)\, \land \, \textsc{Flights}(y, z) \\
        q\, =\, &q_1 \lor q_2
        \end{aligned}$
    \end{tabular}
    }
    \end{minipage}%    
    \vspace{.3cm}
    \begin{minipage}[c]{\textwidth}
    \centering
    \subfloat[The lineage $\lin(q, D)$ as a formula in Disjunctive Normal Form (DNF)]{
    \label{fig:provenance_q}
    $\begin{aligned} 
    \underbrace{\left(a_1 \wedge b_1 \wedge b_8 \right)}_{q_1 \text{ provenance}} \vee 
    \underbrace{\left(a_2 \wedge a_4 \wedge b_2 \wedge b_8\right) \vee    
    \left(a_2 \wedge a_5 \wedge b_2 \wedge b_7\right) \vee 
    \left(a_3 \wedge a_4 \wedge b_3 \wedge b_8 \right) \vee
    \left(a_3 \wedge a_5 \wedge b_3 \wedge b_7 \right) \vee
    \left(a_6 \wedge a_7 \wedge b_5 \wedge b_7 \right)}_{q_2 \text{ provenance}}
    \end{aligned}$
    }%
    \caption{The database, query $q$ and its lineage used for our running example}
    \label{fig:airports_example}
    \end{minipage}%
\end{figure*}

\section{The Shapley value of facts}
\label{sec:def}
We define here the main notion and illustrate it with an example.\\

\paratitle{Relational databases and queries.}
Let~$\Sigma = \{R_1,\ldots,R_n\}$ be a \emph{signature}, consisting of
\emph{relation names}~$R_i$ each with its associated \emph{arity}~$\ar(R_i) \in
\mathbb{N}$, and~$\const$ be a set of \emph{constants}.  A \emph{fact} over
$(\Sigma,\const)$ is simply a term of the form~$R(a_1,\ldots,a_{\ar(R)})$,
for~$R \in \Sigma$ and~$a_i \in \const$.
A~\emph{($\Sigma,\const)$-database}~$D$, or simply a \emph{database}~$D$, is a
finite set
of facts over $(\Sigma,\const)$.
We assume familiarity with the most common
classes of query languages and refer the reader
to~\cite{abiteboul1995foundations} for the basic definitions.  In particular, we recall the
equivalence between relational algebra and relational
calculus~\cite{abiteboul1995foundations}, and the fact that
Select-Project-Join-Union (SPJU) queries are equivalent to unions of conjunctive queries (UCQs).
Depending of the context and for consistency with relevant past publications,
we will use terminology of either relational calculus or relational algebra.
What we call a \emph{Boolean query} is a query~$q$ that takes as input a
database~$D$ and outputs~$q(D) \in \{0,1\}$.  If~$q(\bar{x})$ is a query with
free variables~$\bar{x}$ and~$\bar{t}$ is a tuple of constants of same length
as~$\bar{x}$, we denote by~$q[\bar{x}/\bar{t}]$ the Boolean query defined by:
$q[\bar{x}/\bar{t}](D) = 1$ if and only if~$\bar{t}$ is in the output
of~$q(\bar{x})$ on~$D$. \\

\paratitle{Shapley values of facts.}
Following~\cite{DBLP:conf/icdt/LivshitsBKS20,reshef2020impact}, we use the notion of Shapley values~\cite{shapley1953value} to attribute a contribution to facts of an
input database. In this context, the database~$D$ is traditionally partitioned into two sets of facts: a set~$\dx$ of so-called \emph{exogenous} facts,
and a set~$\dn$ of \emph{endogenous} facts. The idea is that exogenous facts are considered as given,
while endogenous facts are those to which we would like to attribute contributions. Let~$q$ be a Boolean query
and~$f \in \dn$ be an endogenous fact. \emph{The Shapley value of~$f$ in~$D$
for query~$q$}, denoted~$\shap(q,\dn,\dx,f)$, is defined as

\begin{align}
\label{eq:defshap}
&\shap(q,\dn,\dx,f) \ \defeq \\ \notag
&\sum_{E\subseteq \dn \setminus \{f\}} \frac{|E|!(|\dn|-|E|-1)!}{|\dn|!} \big(
q(\dx \cup E\cup \{f\}) - q(\dx \cup E)\big).
\end{align}
Notice that here, $|E|! (|\dn|-|E|-1)!$ is the number of permutations of~$\dn$
with all endogenous facts in~$E$ appearing first, then~$f$, and finally, all
the other endogenous facts.  Intuitively then, the value $\shap(q,\dn,\dx,f)$
represents the contribution of~$f$ to the query's output: the higher this value
is, the more~$f$ helps in satisfying~$q$.

For non-Boolean queries $q(\bar{x})$, we are interested in the Shapley value of the fact $f$ for every individual tuple $\bar{t}$ in the output~\cite{DBLP:conf/icdt/LivshitsBKS20}. 
The extension to non-Boolean $q(\bar{x})$ is 
then straightforward: the Shapley value of the fact $f$ \emph{for the answer
$\bar{t}$ to $q(\bar x)$} is the value  $\shap(q[\bar{x}/\bar{t}],\dn,\dx,f)$. 
Therefore, the computational challenge reduces to that of the Boolean query $q[\bar{x}/\bar{t}]$.
Hence, in the theoretical analysis we focus on Boolean queries, and we go back to considering non-Boolean queries when we study the implementation aspects (starting in Section~\ref{subsec:strategy}).

\begin{example}
\label{expl:shapley}
Consider the database~$D$ and the Boolean query~$q$ from Figures
\ref{fig:running_database} and \ref{fig:query_q}. All facts in table \textsc{Flights} are endogenous, while facts in \textsc{Airports} are exogenous. To alleviate the notation we write, e.g., $a_1$ for $\textsc{Flights}(\textsc{JFK},\textsc{CDG})$. The query $q$ checks if there are routes from ``USA'' to ``FR'' with one or less connecting flights. 
Let us compute the Shapley value of all endogenous facts. First, we notice that fact $a_8$ is not part of any valid route, so $\shap(q,\dn,\dx,a_8)=0$ by Equation~\eqref{eq:defshap}. Next, let us focus on~$a_1$. Since $a_1$ is a valid route on its own, adding it to any subset of (endogenous) facts $E$ such that $E$ does not contain a valid route results in $q(\dx \cup E\cup \{a_1\}) - q(\dx \cup E) = 1$ (for all other subsets the difference will be 0). The relevant subsets are the empty set, all singletons $\{a_i\}$ for $2 \leq i \leq 8$ (7 singletons), all the pairs of tuples from $a_2, \ldots, a_8$ excluding the pairs $\{a_2, a_4\}$, $\{a_2, a_5\}$, $\{a_3, a_4\}$, $\{a_3, a_5\}$, and $\{a_6, a_7\}$ (so $\binom{7}{2} - 5 = 16$ pairs), 
the quadruples $\{a_2, a_3, a_6, a_8\}$, $\{a_2, a_3, a_7, a_8\}$, $\{a_4, a_5, a_6, a_8\}$, and $\{a_4, a_5, a_7, a_8\}$, and overall 14 triplets (left to the reader).
Summing it all up results in 
\begin{align*}
\shap(q,\dn,\dx,a_1) =\, & 1 \cdot \frac{0! \cdot 7!}{8!} + 7 \cdot \frac{1! \cdot 6!}{8!} + 16 \cdot \frac{2!5!}{8!} + \\ 
& 14 \cdot \frac{3!4!}{8!} + 4 \cdot \frac{4!3!}{8!} =  \frac{43}{105} \approx 0.4095.
\end{align*}
Similarly one can compute the Shapley value of the remaining facts, and find that for $a_i \in \{a_2, a_3, a_4, a_5\}$ it holds that\linebreak $\shap(q,\dn,\dx,a_i) = \frac{23}{210} \approx 0.1095$, and that for $a_i \in \{a_6, a_7\}$ we have $\shap(q,\dn,\dx,a_i) = \frac{8}{105} \approx 0.0762$.
\end{example}

\section{Reduction to probabilistic databases}
\label{sec:pqe}

In this section we investigate the complexity of computing Shapley values. As explained in the previous section, the non-Boolean setting of the problem may be reduced to that of Boolean queries, so we will study the following problem for a given Boolean query~$q$. 
\begin{center}
\fbox{\begin{tabular}{rp{6.1cm}}
\small{PROBLEM}: & $\shap(q)$ 
\\
{\small INPUT}: & A database $D = \dx \cup \dn$ and an endogenous fact~$f \in \dn$. 
\\
{\small OUTPUT}: & The value $\shap(q,\dn,\dx,f)$.
\end{tabular}}
\end{center}

Note that for the sake of complexity analysis, the query~$q$ is assumed to be fixed, so that each query gives
rise to a different computational problem; we are then considering what is called the \emph{data
complexity}~\cite{vardi1982complexity}. This assumption is motivated by the fact that
in practice, the queries are much smaller than the databases.

The complexity of this problem has been studied
in~\cite{DBLP:conf/icdt/LivshitsBKS20,reshef2020impact}, where in particular a dichotomy
has been obtained for \emph{self-join--free Boolean conjunctive queries}
(sjfbcqs).  There, the authors show that, for every sjfbcq~$q$, either~$q$ is
\emph{hierarchical} (we will not need to define this notion here) and~$\shap(q)$ can
be solved in polynomial time, or~$q$ is not hierarchical and then~$\shap(q)$ is
intractable (specifically,~$\fptoshp$-hard). It turns out that the tractability
criterion that is obtained---being hierarchical---is exactly the same as in the
context of probabilistic query evaluation (PQE); see,
e.g.,~\cite{dalvi2007efficient,dalvi2013dichotomy}.  In fact,
the main result of this section is that this is not a coincidence: we prove that,
for \emph{every} 
Boolean query~$q$ (not just for sjfbcqs), if PQE is tractable for~$q$ then so is the
problem~$\shap(q)$. Since PQE has been intensively studied already,
our result allows us to vastly extend the tractable
cases identified in~\cite{DBLP:conf/icdt/LivshitsBKS20,reshef2020impact}.
We now proceed with the definitions and proof of this result, and
explain its consequences.\\

\paratitle{Probabilistic query evaluation.}
A \emph{tuple-independent (TID)
database} is a pair consisting of a database $D$ and a function $\pi$
mapping each fact $f \in D$ to a probability value $\pi(f) \in [0,1]$.
The TID $(D, \pi)$ defines a probability distribution~$\Pr_\pi$ on $D' \subseteq D$, where
$\Pr_\pi(D') \defeq \prod_{f \in D'} \pi(f) \times \prod_{f \in D \backslash D'} (1 - \pi(f))$.
Given a Boolean query $q$, \emph{the probability that $q$ is satisfied by~$(D,\pi)$} is 
$\Pr(q, (D, \pi)) \defeq \sum_{D' \subseteq D\text{~s.t.~}q(D')=1} \Pr_\pi(D')$.
The \emph{probabilistic query evaluation problem for $q$}, $\pqe(q)$ for short, is then defined as follows.

\begin{center}
\fbox{\begin{tabular}{rp{6.1cm}}
\small{PROBLEM}: & $\pqe(q)$ 
\\
{\small INPUT}: & A tuple-independent database $(D,\pi)$. 
\\
{\small OUTPUT}: & The value $\Pr(q, (D, \pi))$.
\end{tabular}}
\end{center}

For two computational problems~$A$ and~$B$, we write~$A \tr B$ to assert the existence of a polynomial-time Turing reduction
from~$A$ to~$B$. We are ready to state the main result of this section.

\begin{proposition}
\label{prp:to-pqe}
For every Boolean query~$q$, we have that\linebreak
 $\shap(q) \tr \pqe(q)$.
\end{proposition}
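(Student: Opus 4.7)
The plan is to reduce computing $\shap(q,\dn,\dx,f)$ to polynomially many queries to a $\pqe(q)$ oracle, using polynomial interpolation to extract the relevant combinatorial counts. First, I would rewrite the sum in~\eqref{eq:defshap} by grouping terms according to $k \defeq |E|$: letting $n \defeq |\dn|$, this gives
\[
\shap(q,\dn,\dx,f) = \sum_{k=0}^{n-1} \frac{k!\,(n-k-1)!}{n!}\,(\alpha_k - \beta_k),
\]
where $\alpha_k$ (resp.\ $\beta_k$) is the number of subsets $E \subseteq \dn \setminus \{f\}$ of size $k$ such that $q(\dx \cup E \cup \{f\})=1$ (resp.\ $q(\dx \cup E)=1$). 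The rational weights are computable in polynomial time, so the task reduces to computing all $\alpha_k$ and $\beta_k$.

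Next, I would encode these counts as coefficients of a polynomial of degree at most $n-1$ in a parameter $p$, evaluated via the $\pqe(q)$ oracle. For $\beta_k$, consider the TID instance $(D, \pi_p)$ defined by $\pi_p(g) \defeq p$ for $g \in \dn \setminus \{f\}$, $\pi_p(f) \defeq 0$, and $\pi_p(g) \defeq 1$ for every $g \in \dx$. Because exogenous facts are then always present and $f$ is never present in the random sub-database, the identity
\[
\Pr(q, (D, \pi_p)) = \sum_{k=0}^{n-1} \beta_k\,p^k (1-p)^{n-1-k}
\]
follows directly from the definition of $\Pr_{\pi_p}$. Switching to $\pi_p(f) \defeq 1$ (so that $f$ is always present in the sub-database) gives the same identity with $\alpha_k$ in place of $\beta_k$.

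I would then invoke the $\pqe(q)$ oracle at $n$ distinct rational values $p_1,\ldots,p_n$ (for instance $p_i \defeq i/(n+1)$) for each of the two TID instances, and recover all the $\alpha_k$ and $\beta_k$ by standard polynomial interpolation in the Bernstein basis $\{p^k(1-p)^{n-1-k}\}_{k=0}^{n-1}$, e.g.\ by inverting the corresponding Vandermonde-type system. This uses $O(n)$ oracle calls and polynomial-time arithmetic, after which $\shap(q,\dn,\dx,f)$ is assembled from the expression above.

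The main obstacle I anticipate is carrying out the interpolation in exact rational arithmetic while keeping the bit complexity polynomial in $n$; this is handled by choosing evaluation points $p_i$ of small description size and bounding the magnitudes of the entries of the inverse Vandermonde matrix, which is routine for nicely spaced rational nodes. A minor conceptual point is that $\pqe$ as defined does not distinguish endogenous from exogenous facts, so we simulate the distinction by pinning the exogenous facts to probability~$1$; we also rely on $\pqe(q)$ returning exact rationals, as is standard in this line of work.
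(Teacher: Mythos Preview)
Your proposal is correct and follows essentially the same approach as the paper: rewrite the Shapley value as a weighted sum of slice counts $\alpha_k,\beta_k$, encode these as coefficients of a univariate polynomial whose evaluations are obtained from the $\pqe(q)$ oracle by pinning exogenous facts to probability~$1$ and the distinguished fact $f$ to $0$ or $1$, and recover the coefficients by interpolation. The only cosmetic difference is that the paper substitutes $p=z/(1+z)$ (equivalently factors out $(1-p)^{n-1}$ from each evaluation) so that the Bernstein-basis system becomes a standard Vandermonde in $z$, whereas you invert the Bernstein evaluation matrix directly; these are the same computation up to a diagonal rescaling.
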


This result implies that for any query~$q$ for which~$\pqe(q)$ is tractable then so
is~$\shap(q)$.  \revision{Dalvi and Suciu~\cite{dalvi2013dichotomy} showed a
dichotomy for unions of conjunctive queries:
for every such query~$q$, 
either $\pqe(q)$ is solvable in polynomial time, in which case~$q$ is called \emph{safe}\footnote{\revision{This notion of safety is distinct from the “usual” notion of query safety~\cite{abiteboul1995foundations} that ensures domain independence.}}, or 
$\pqe(q)$ is $\fptoshp$-hard (and~$q$ is called \emph{unsafe}).} Therefore, we obtain as a direct corollary of
Proposition~\ref{prp:to-pqe} that~$\shap(q)$ can be solved in polynomial time
for all safe queries.

\begin{corollary}
\label{co:safe}
If~$q$ is a safe UCQ then~$\shap(q)$ can be solved in polynomial time.
\end{corollary}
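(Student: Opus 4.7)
The plan is to adapt the proof technique used by Van den Broeck et al.~\cite{van2021tractability} for the SHAP-score to our setting. The first step is to rewrite~\eqref{eq:defshap} by grouping terms according to the size~$k$ of the coalition~$E$. Setting~$n \defeq |\dn|$ and
\[
\Delta_k \defeq \sum_{\substack{E \subseteq \dn \setminus \{f\} \\ |E| = k}} \big(q(\dx \cup E \cup \{f\}) - q(\dx \cup E)\big),
\]
the Shapley value becomes~$\shap(q,\dn,\dx,f) = \sum_{k=0}^{n-1} \tfrac{k!(n-k-1)!}{n!}\,\Delta_k$, so it suffices to compute the~$n$ integers~$\Delta_0,\ldots,\Delta_{n-1}$ in polynomial time with oracle access to~$\pqe(q)$.

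Next I would encode the~$\Delta_k$'s as values read off from the oracle. For a probability~$p \in (0,1)$, consider two TIDs~$D_p^1$ and~$D_p^0$ sharing~$D$ as underlying database, in which every exogenous fact has probability~$1$, every endogenous fact in~$\dn \setminus \{f\}$ has probability~$p$, and~$f$ itself has probability~$1$ in~$D_p^1$ and~$0$ in~$D_p^0$. Expanding the product definition of~$\Pr_\pi$ and grouping subsets of~$\dn \setminus \{f\}$ by size yields
\[
\Pr(q,D_p^1) - \Pr(q,D_p^0) \;=\; \sum_{k=0}^{n-1} p^k\,(1-p)^{n-1-k}\,\Delta_k,
\]
which is a polynomial in~$p$ of degree at most~$n-1$ whose coefficients in the Bernstein basis are precisely the~$\Delta_k$'s.

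The core step is then polynomial interpolation: pick~$n$ distinct rationals~$p_1,\ldots,p_n \in (0,1)$, invoke the~$\pqe(q)$ oracle~$2n$ times to obtain the values of the left-hand side above at each~$p_i$, and solve the resulting linear system---whose matrix factors as the product of a diagonal matrix (with entries~$(1-p_i)^{n-1}$) and a Vandermonde matrix in the variables~$p_i/(1-p_i)$, hence is invertible---for~$\Delta_0,\ldots,\Delta_{n-1}$. Summing these against the Shapley coefficients produces~$\shap(q,\dn,\dx,f)$.

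The main obstacle is not the algebraic identity, which is routine, but making sure the reduction runs in time polynomial in~$|D|$: all intermediate rationals must have polynomially bounded bit-length. Choosing simple evaluation points such as~$p_i = i/(n+1)$ and performing exact rational arithmetic through Gaussian elimination on the Vandermonde system handles this, since the inverse matrix has entries that are rational functions of the~$p_i - p_j$ with polynomially bounded numerators and denominators. One mild subtlety worth noting is that the reduction assigns the boundary probabilities~$0$ and~$1$ to~$f$, which is permitted by the TID definition given in Section~\ref{sec:pqe}.
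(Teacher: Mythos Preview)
Your proof is correct and follows essentially the same approach as the paper's proof of Proposition~\ref{prp:to-pqe}, from which the corollary is immediate: group by coalition size, encode the resulting counts as coefficients of a polynomial realized by PQE calls on TIDs with a uniform endogenous probability, and recover them by Vandermonde interpolation. The only cosmetic differences are that the paper interpolates the slice counts $\shslices$ themselves (rather than your differences~$\Delta_k$) and uses the reparametrization $z=p/(1-p)$ so that the linear system is a plain Vandermonde in~$z$ instead of your diagonal-times-Vandermonde in the Bernstein basis.
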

In particular, this corollary generalizes the tractability result 
obtained in~\cite{DBLP:conf/icdt/LivshitsBKS20}, to account for CQs with self-joins and even unions of such queries.
We now prove
Proposition~\ref{prp:to-pqe}. 
\begin{proof}[Proof of Proposition~\ref{prp:to-pqe}]
For a Boolean query~$q$, database~$D = \dx \cup \dn$,
and integer~$k \in \{0,\ldots,|\dn|\}$,
define
\[\shslices(q,\dx,\dn,k) \ \defeq \  |\{E \subseteq \dn \mid |E|=k \text{ and } q(\dx \cup E)=1 \}|.\]
Then, by grouping by size the terms~$E$ from Equation~\ref{eq:defshap} we obtain 
\begin{align}
\label{eq:kc-slices}
\shap(q,\dn,\dx,f) \  & = 
  \sum_{k=0}^{|\dn|-1} \frac{k!(|\dn|-k-1)}{|\dn|}\bigg( \\
 \nonumber&\hspace{25pt}  \shslices(q,\, \dx \cup \{f\},\, \dn \setminus \{f\},\, k)\\
 \nonumber&\hspace{25pt}  - \shslices(q,\, \dx,\, \dn \setminus \{f\},\, k)\bigg).
\end{align}
All arithmetical terms (such as~$k!$ or~$|\dn|!$) can be computed
in polynomial time. Therefore, to prove that~$\shap(q) \tr \pqe(q)$, it is
enough to show that, given an oracle to the problem~$\pqe(q)$, we can compute
in polynomial time the quantities $\shslices(q,\dx,\dn,k)$, for some
arbitrary~$D=\dx \cup \dn$ and $k \in \{0,\ldots,|\dn|\}$. This is what we do next.

\revision{The proof is similar to
that of~\cite[Theorem 2]{van2021tractability} in the context of SHAP-score for
machine learning (but, as explained in the Introduction, the two results seem to be
incomparable).}

We wish to compute~$\shslices(q,\dx,\dn,k)$, for some database
$D=\dx \cup \dn$ and integer~$k \in \{0,\ldots |\dn|\}$.
Let~$n=|\dn|$ be the number of endogenous facts of~$D$.
For~$z \in \mathbb{Q}$, we define a TID database~$(D_z,\pi_z)$ as follows: $D_z$ contains all the facts of~$D$, and for an exogenous fact~$f$ of~$D$ we define~$\pi_z(f) \defeq 1$
while for an endogenous fact~$f$ of~$D$ we define ~$\pi_z(f) \defeq \frac{z}{1+z}$.
It is then routine to show that the following relation holds:
\[(1+z)^n \Pr(q,(D_z,\pi_z)) = \sum_{i=0}^n z^i\, \shslices(q,\dx,\dn,i).\]
This suffices to conclude the proof. Indeed, we now call an oracle
to~$\pqe(q)$ on~$n+1$ databases~$D_{z_0},\ldots,D_{z_n}$ for~$n+1$ arbitrary distinct
values~$z_0,\ldots,z_n$, forming a system of linear equations as given by the
relation above. Since the corresponding matrix is a Vandermonde with distinct coefficients, it
is invertible, so we can compute in polynomial time the value~$\shslices(q,\dx,\dn,k)$.
\end{proof}

\revision{A intriguing natural question is whether the converse of Proposition~\ref{prp:to-pqe} is true, that is, whether we also have $\pqe(q) \tr \shap(q)$.
This is true when $q$ is a self-join--free conjunctive query: indeed, by the results of~\cite{DBLP:conf/icdt/LivshitsBKS20}, either~$q$ is hierarchical and then both~$\pqe(q)$ and~$\shap(q)$
can be solved in polynomial time (hence $\pqe(q) \tr \shap(q)$), or~$q$ is not hierarchical and then~$\shap(q)$ is $\fptoshp$-hard (which, together with the fact that~$\pqe(q) \in \fptoshp$, implies $\pqe(q) \tr \shap(q)$).
However, to the best of our knowledge, the existence of such a reduction in the general case is unknown.
\begin{openpb}
	\label{open:pqe-to-shap}
	Do we have $\pqe(q) \tr \shap(q)$ for every Boolean query~$q$?
\end{openpb}
Interestingly, we note that this direction is trivial in the setting of SHAP-scores. Indeed, this is directly
implied by the \emph{efficiency axiom} of the Shapley value; see, e.g., \cite[Lemma 4.2]{arenas2021complexity} or \cite[Equation 5]{van2021tractability}.
In our case, this axiom only gives us the following equality:
\[\sum_{f\in \dn} \shap(q,\dn,\dx,f)\, = \, q(\dn \cup \dx) - q(\dx),\]
which does not seem to help in showing~$\pqe(q) \tr \shap(q)$.
}

\section{Exact Computation Through Knowledge Compilation}
\label{sec:lineages}

Motivated by the connection to PQE that we have seen in Section~\ref{sec:pqe}, we now investigate whether an approach using knowledge compilation can be used for computing Shapley values. Indeed, a common method to compute the probability that a probabilistic
database~$(D,\pi)$ satisfies a Boolean query~$q$ is to first compute the
\emph{lineage} of~$q$ on~$D$ in a formalism from knowledge compilation, and then to use the good
properties of said formalism to compute~$\Pr(q,(D,\pi))$ in linear
time~\cite{suciu2011probabilistic,jha2013knowledge,monet2020solving}.  \revision{Recently, Arenas and others~\cite{arenas2021tractability,arenas2021complexity} showed that this
approach is also viable for the notion of SHAP-score used in machine
learning, by proving that SHAP-scores can be computed in polynomial time when
the models are given as circuits from knowledge compilation.  By reusing some
of these techniques,
we can show that this method can also be used in our setting
for computing Shapley values of database facts. Again, to the best of our knowledge, the two results are incomparable, i.e.,
we are not aware of a reduction in either direction between the two problems.}
We start by formally defining
the notions of lineage and the relevant circuit classes from knowledge compilation.

\paragraph{Boolean functions and query lineages.}
Let~$X$ be a finite set of variables.  An \emph{assignment~$\nu$ of~$X$} is a
subset~$\nu \subseteq X$ of~$X$. We denote by~$2^X$ the set of all assignments
of~$X$.  A \emph{Boolean function}~$\phi$ over~$X$ is a function~$\phi:2^X\to
\{0,1\}$.
An assignment~$\nu \subseteq X$ is \emph{satisfying} if~$\phi(\nu) =1$.  We denote
by~$\sat(\phi)\subseteq 2^X$ the set of all satisfying assignments of~$\phi$,
and by ~$\ssat(\phi)$ the size of this set.  For~$k\in \mathbb{N}$, we
define~$\sat_k(\phi) \defeq \sat(\phi) \cap \{\nu \subseteq X \mid |\nu|=k\}$, that
is, the set of satisfying assignments of~$\phi$ of Hamming weight~$k$, and
let~$\ssat_k(\phi)$ be the size of this set.

Let~$q$ be a Boolean query and~$D$ be a database.  The \emph{lineage $\lin(q,
D)$} is the (unique) Boolean function whose variables are the facts of~$D$, and
that maps each sub-database $D' \subseteq D$  to~$q(D')$.  
This
definition extends straightforwardly to queries with free variables as follows:
if~$q(\bar{x})$ is a query with free variables~$\bar{x}$ and~$\bar{t}$ is a
tuple of constants of the appropriate size, then~$\lin(q[\bar{x}/\bar{t}]),D)$
is the \emph{lineage for the tuple~$\bar{t}$}.

\begin{example}
\label{expl:lineage_ex}
Consider again the database~$D$ and the Boolean query $q$ from Figures
\ref{fig:running_database} and \ref{fig:query_q}. 
In Figure \ref{fig:provenance_q}, the lineage
$\lin(q,D)$ is represented as a formula in
\emph{disjunctive normal form} (DNF). 
\end{example}

\begin{figure}
\begin{scalebox}{0.9}{\input{figures/circuit.pspdftex}}\end{scalebox}
\caption{
Deterministic and decomposable circuit for~$\elin(q,\dx,\dn)$ from the running example.
}
\label{fig:d-D}
\end{figure}

For our purposes, we will use a refinement of this lineage
that accounts for the nature of exogenous tuples; specifically, these tuples should be considered as always
being part of the database. Let~$D=\dn \cup \dx$ be a database with endogenous tuples~$\dn$
and exogenous tuples~$\dx$, and let~$q$ be a Boolean query. Then the \emph{endogenous
lineage~$\elin(q,\dx,\dn)$} is the (unique) Boolean function whose variables are~$\dn$
and that maps every set~$E$ of endogenous facts to~$q(\dx \cup E)$. In other words,
$\elin(q,\dx,\dn)$ can be obtained from $\lin(q,D)$ by fixing all variables in~$\dx$
to the value~$1$. Again, we extend this definition to queries with free variables
by using the function~$\elin(q[\bar{x}/\bar{t}],\dx,\dn)$.

\begin{example}
\label{expl:lineage}
Continuing the previous example, 
the endogenous lineage $\elin(q,\dn,\dx)$ can be represented as a DNF by
    \begin{align*} 
    a_1 \vee 
    \left(a_2 \wedge a_4\right) \vee    
    \left(a_2 \wedge a_5 \right) \vee 
    \left(a_3 \wedge a_4 \right) \vee
    \left(a_3 \wedge a_5 \right) \vee
    \left(a_6 \wedge a_7 \right).
    \end{align*}
\end{example}

In the last two examples, lineages were represented with Boolean formulas in DNF.
Since a lineage is a Boolean function, it can be represented with
any formalism that allows to represent Boolean functions. We next review some
classes of circuits from the field of knowledge compilation that will be relevant
for our work.

\paragraph{Knowledge compilation classes.}
Let~$C$ be a Boolean circuit, featuring~$\land$,~$\lor$,~$\lnot$, and
variable gates,
with the usual semantics.\footnote{We allow
unbounded-fanin~$\land$- and~$\lor$-gates, and also allow constant~$1$-gates
and constant~$0$-gates as, respectively, $\land$-gates with no input
and~$\lor$-gates with no inputs.} For a gate~$g$ of~$C$, we denote by $\vars(g)$  
the set of variables that have a directed path to~$g$.  An
$\land$-gate~$g$ of~$C$ is \emph{decomposable} if for every two input gates
$g_1\neq g_2$ of~$g$ we have $\vars(g_1) \cap \vars(g_2) = \emptyset$.  We
call~$C$ \emph{decomposable} if all~$\land$-gates are.  An $\lor$-gate~$g$
of~$C$ is \emph{deterministic} if the Boolean functions captured by each pair of distinct
input gates of~$g$ are pairwise disjoint; i.e.,
no assignment satisfies both.  We call~$C$
\emph{deterministic} if all~$\lor$-gates in it are.
A \emph{deterministic and decomposable (d-D~\cite{monet2020solving}) Boolean circuit} is a Boolean circuit that is
both deterministic and decomposable. If~$C$ is a Boolean circuit we write~$\vars(C)$ to denote
the set of variables that appear in it.

\begin{example}
Recall $\elin(q,\dn,\dx)$ from Example~\eqref{expl:lineage} represented as a
DNF.  Figure~\ref{fig:d-D} depicts a d-D circuit for
$\elin(q,\dn,\dx)$.  The output gate, for instance, is a deterministic~$\lor$-gate:
indeed, its left child requires~$a_1$ to be~$1$, whereas its right child
requires~$a_1$ to be~$0$. The right child of the output gate is a
decomposable~$\land$-gate: indeed, for its left child~$g_1$ we have~$\vars(g_1) =
\{a_1\}$, whereas for its right child~$g_2$ we have $\vars(g_2) =
\{a_2,a_3,a_4,a_5,a_6,a_7\}$, and these are indeed disjoint. The reader can easily
check that all other~$\lor$-gates are deterministic, and that all
other~$\land$-gates are decomposable.
\end{example}

\subsection{Algorithm}
The main result of this section
is then the following.

\begin{proposition}
\label{prp:shap-via-kc}
Given as input a deterministic and decomposable circuit~$C$
representing~$\elin(q,\dn,\dx)$ for a database~$D= \dx \cup \dn$ and Boolean
query~$q$, and an endogenous fact~$f\in \dn$, we can compute in polynomial time (in~$|C|$)
the value~$\shap(q,\dn,\dx,f)$.
\end{proposition}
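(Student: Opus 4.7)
The plan is to follow Equation~\eqref{eq:kc-slices}, which already reduces the computation of $\shap(q,\dn,\dx,f)$ to a polynomial-time-computable linear combination of the slice counts $\shslices(q, \dx \cup \{f\}, \dn \setminus \{f\}, k)$ and $\shslices(q, \dx, \dn \setminus \{f\}, k)$ for $k = 0, \ldots, |\dn|-1$. To extract these counts from $C$, I would first construct the two circuits $C_1$ and $C_0$ obtained by substituting the input variable $f$ with the constants $1$ and $0$, respectively. Substitution preserves both decomposability and determinism, so $C_0$ and $C_1$ remain d-D; and by the definition of the endogenous lineage one has $\shslices(q,\dx,\dn \setminus \{f\},k) = \ssat_k(C_0)$ and $\shslices(q,\dx \cup \{f\}, \dn \setminus \{f\}, k) = \ssat_k(C_1)$, where $\ssat_k$ is counted over the variable set $\dn \setminus \{f\}$. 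So the entire task reduces to computing, for a given d-D circuit $C'$, all the values $\ssat_k(C')$.

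For this, I would compute the generating polynomial $P_{C'}(z) = \sum_k \ssat_k(C')\, z^k$ by a bottom-up dynamic program over the gates of $C'$. For each gate $g$, maintain $P_g(z) = \sum_k c_{g,k} z^k$ where $c_{g,k}$ counts the $\nu \subseteq \vars(g)$ of Hamming weight $k$ that satisfy $g$. The recurrences are: $P_g(z) = z$ for a variable gate; $P_g(z) = 1$ (resp.\ $0$) for the constant $1$- (resp.\ $0$-)gate; $P_g(z) = (1+z)^{|\vars(g')|} - P_{g'}(z)$ for a negation with child $g'$; $P_g(z) = \prod_i P_{g_i}(z)$ for a decomposable $\land$-gate, whose children have pairwise disjoint scopes so satisfying assignments factor cleanly; and $P_g(z) = \sum_i P_{g_i}(z)\,(1+z)^{|\vars(g)| - |\vars(g_i)|}$ for a deterministic $\lor$-gate, where the $(1+z)$-factor fills in the variables of $\vars(g) \setminus \vars(g_i)$ freely and determinism guarantees that the contributions of distinct children do not overlap. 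Finally, to account for endogenous variables of $\dn \setminus \{f\}$ that do not appear in $C'$ at all, multiply $P_{C'}(z)$ by $(1+z)^{|\dn \setminus \{f\}| - |\vars(C')|}$; the coefficient of $z^k$ in the resulting polynomial is then $\ssat_k(C')$ over the full set $\dn \setminus \{f\}$.

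The main subtlety I expect will be in verifying the $\lor$-gate recurrence: determinism is stated in terms of the Boolean functions the children capture over their own scopes, and I would need to argue that this pairwise disjointness is preserved once each child is implicitly lifted to the scope $\vars(g)$ by the $(1+z)$-factor. This follows because an assignment on $\vars(g)$ satisfies the lifted $g_i$ iff its restriction to $\vars(g_i)$ satisfies $g_i$, so disjointness transfers up; once this is in place, the $\land$-gate case is immediate from decomposability, and the other recurrences are elementary. For complexity, every polynomial that appears has degree at most $|\dn|$, and each gate contributes $O(\mathrm{fanin})$ polynomial additions and multiplications of degree $O(|\dn|)$, yielding an overall runtime polynomial in $|C|$; plugging the resulting slice counts into Equation~\eqref{eq:kc-slices} then gives $\shap(q,\dn,\dx,f)$ in the same bound.
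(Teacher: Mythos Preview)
Your proposal is correct and follows essentially the same approach as the paper. The paper also reduces via Equation~\eqref{eq:kc-slices} to computing the weight-stratified model counts of the two circuits obtained by fixing~$f$ to~$1$ and~$0$, and then computes those counts bottom-up over the d-D circuit; the only cosmetic differences are that the paper works with the individual counts~$\alpha_g^\ell$ rather than packaging them into your generating polynomial~$P_g(z)$, preprocesses gates to fan-in at most~$2$, and handles the missing endogenous variables by completing the circuit upfront (conjoining tautologies~$f'\lor\lnot f'$) instead of your final multiplication by~$(1+z)^{|\dn\setminus\{f\}|-|\vars(C')|}$.
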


Next, we prove Proposition~\ref{prp:shap-via-kc} 
and present the algorithm, and then explain in
Section~\ref{subsec:strategy} the architecture of the implementation.

\paragraph{Proof of Proposition~\ref{prp:shap-via-kc}}
\label{subsec:d-D-proof}

Let~$C$ be a deterministic and decomposable
circuit representing~$\elin(q,\dn,\dx)$, and let ~$f\in \dn$.  First, we complete
the circuit~$C$ so that all variables of~$\dn$ appear in~$C$.  Indeed, it could
be the case that~$\vars(C) \subsetneq \dn$: this happens for instance with
the deterministic and decomposable circuit in Figure~\ref{fig:d-D}, where the
endogenous fact~$a_8$ does not appear in the circuit. To do this, we
conjunct~$C$ with the conjunction~$\bigwedge_{f' \in \dn \setminus \vars(C)}
(f' \lor \lnot f')$.  Note that this does not change the semantics of the
circuit (as this conjunction always evaluates to~$1$) and that the resulting circuit is still
deterministic and decomposable. 
Now, let~$C_1$ (resp.,~$C_2$) be the Boolean circuit obtained from~$C$
by replacing all variable gates corresponding to the fact~$f$ by a
constant~$1$-gate (resp., by a constant~$0$-gate). Observe then that the
variables of~$C_1$ and~$C_2$ are exactly~$\dn \setminus \{f\}$, and moreover
that~$C_1$ and~$C_2$ are still deterministic and decomposable.
By definition of the endogenous lineage, we can
rewrite Equation~\eqref{eq:kc-slices} 
into the following.

\begin{align}
\label{eq:shap-ssat}
\shap(q,\dn,\dx,f) \  & =  \sum_{k=0}^{|\dn|-1} \frac{k!(|\dn|-k-1)}{|\dn|}\big(\\
\nonumber &\hspace{25pt} \ssat_k(C_1) - \ssat_k(C_2)\big).
\end{align}
Proposition~\ref{prp:shap-via-kc} will thus directly follow from the next
lemma.

\begin{lemma}
\label{lem:ssat-kc}
Given as input a deterministic and decomposable Boolean circuit~$C$ and an
integer~$k\in \{0,\ldots, |\vars(C)|\}$, we can compute in polynomial time the
quantity~$\ssat_k(C)$.
\end{lemma}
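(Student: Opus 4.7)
The plan is to compute, by bottom-up dynamic programming on $C$, the whole vector $(\ssat_i(g))_{i=0}^{|\vars(g)|}$ at every gate $g$, where $\ssat_i(g)$ denotes the number of weight-$i$ assignments of $\vars(g)$ that satisfy the Boolean function captured by $g$. Reading the entry for index $k$ at the output gate gives $\ssat_k(C)$. Since $C$ has $O(|C|)$ gates and each vector has at most $|\vars(C)|+1$ entries, and each entry can be computed from the vectors of the children of $g$ by the (polynomial-size) formulas described below, the whole procedure runs in polynomial time in $|C|$.

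For the base cases, a constant $1$-gate has $\ssat_0 = 1$; a constant $0$-gate has $\ssat_i = 0$ for all $i$; and a variable gate has $\ssat_0 = 0$ and $\ssat_1 = 1$. For a $\lnot$-gate $g$ with unique input $g'$, the scope is unchanged, i.e. $\vars(g) = \vars(g')$, and
\[\ssat_i(g) \;=\; \binom{|\vars(g)|}{i} - \ssat_i(g').\]
For a decomposable $\land$-gate $g$ with children $g_1,\ldots,g_m$, decomposability means the $\vars(g_j)$ are pairwise disjoint with union $\vars(g)$, so weight-$i$ satisfying assignments of $g$ correspond bijectively to tuples of satisfying assignments of the children whose weights sum to $i$. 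This gives the convolution
\[\ssat_i(g) \;=\; \sum_{i_1+\cdots+i_m=i}\;\prod_{j=1}^m \ssat_{i_j}(g_j),\]
which can be evaluated incrementally by $m-1$ two-way convolutions.

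For a deterministic $\lor$-gate $g$ with children $g_1,\ldots,g_m$, determinism guarantees that any satisfying assignment $\nu$ of $g$ satisfies exactly one child $g_j$, so one avoids any inclusion-exclusion. However, one must be careful that $\vars(g_j)$ may be a strict subset of $\vars(g) = \bigcup_{j'} \vars(g_{j'})$: a satisfying assignment $\mu$ of $g_j$ over $\vars(g_j)$ extends to a weight-$i$ satisfying assignment of $g$ over $\vars(g)$ by freely assigning the remaining $|\vars(g) \setminus \vars(g_j)|$ variables. Summing over children and splitting the weight budget between the child scope and the free variables, we obtain, without overcounting,
\[\ssat_i(g) \;=\; \sum_{j=1}^m \sum_{\ell=0}^i \ssat_\ell(g_j)\,\binom{|\vars(g) \setminus \vars(g_j)|}{i-\ell}.\]

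The main obstacle is precisely this $\lor$-case, which requires using determinism (to count each satisfying $\nu$ once) and decomposability-like bookkeeping of variable scopes (to handle the binomial extension factor) at the same time; the other cases are essentially routine. With the lemma in hand, Proposition~\ref{prp:shap-via-kc} follows by evaluating the algorithm on the circuits $C_1$ and $C_2$ obtained from $C$ by fixing $f$ to $1$ and $0$ respectively (both still d-D), and plugging the values $\ssat_k(C_1)$ and $\ssat_k(C_2)$ for $k \in \{0,\ldots,|\dn|-1\}$ into Equation~\eqref{eq:shap-ssat}, after the preliminary completion step ensuring $\vars(C) = \dn$.
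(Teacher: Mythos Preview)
Your proposal is correct and follows essentially the same approach as the paper's proof: bottom-up dynamic programming computing the weight-stratified model counts~$\ssat_\ell$ at each gate, with the binomial complement for~$\lnot$, convolution for decomposable~$\land$, and the binomial ``extension factor'' combined with determinism for~$\lor$. The only cosmetic difference is that the paper first preprocesses~$C$ so every~$\land$- and~$\lor$-gate has fan-in~$0$ or~$2$ and then writes the binary formulas, whereas you treat arbitrary fan-in directly (noting the incremental two-way convolutions for~$\land$); these are equivalent.
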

\begin{proof}
Our proof is similar to that of~\cite[Section 3.2]{arenas2021tractability}.  Let~$X
\defeq \vars(C)$ and $n \defeq |X|$.  First of all, we preprocess~$C$ so that
the fanin of every~$\lor$- and~$\land$-gate is exactly~$0$ or~$2$; this can 
simply be done by rewriting every~$\land$-gate of fanin~$m >2$ with~$m-1$ $\land$-gates of
fanin~$2$ (same for~$\lor$-gates), and adding a constant gate of the appropriate
type to every~$\lor$- and~$\land$-gate of fan-in~$1$. We then compute, for every
gate~$g$ of~$C$, the set of variables~$\vars(g)$ upon which the value of~$g$
depends. For a gate~$g$ of~$C$, let us denote by~$\phi_g$ the Boolean
function over the variables~$\vars(g)$ that is represented by this gate.  For a
gate~$g$ and an integer~$\ell\in \{0,\ldots,|\vars(g)|\}$, we
define~$\alpha_g^\ell \defeq \ssat_\ell(\phi_g)$, i.e. the number
of assignments of size~$\ell$ to ~$\vars(g)$ that satisfy~$\phi_g$. We will show
how to compute all the values~$\alpha_g^\ell$ for every gate~$g$ of~$C$
and~$\ell\in \{0,\ldots,|\vars(g)|\}$ in polynomial time. This will conclude
the proof since, for the output gate~$g_\out$ of~$C$, we have
that~$\alpha_{g_\out}^k = \ssat_k(f)$.
We will need the following notation: for two disjoint sets of
variables~$X_1,X_2$ and two subsets~$S_1 \subseteq 2^{X_1}$, $S_2 \subseteq
2^{X_2}$ of assignments to ~$X_1$ and~$X_2$, we denote by~$S_1 \otimes S_2 \subseteq 2^{X_1\cup X_2}$ the
set of assignments of~$X_1\cup X_2$ defined by~$S_1 \otimes S_2 \defeq
\{\nu_1\cup \nu_2 \mid \nu_1 \in S_2, \nu_2\in S_2\}$.  We next show how to
compute the values~$\alpha_g^\ell$ by bottom-up induction on~$C$.

\begin{description}
    \item[Variable gate.] If $g$ is a variable gate corresponding to some variable~$y$, then~$\vars(g)=\{y\}$.  Then, ~$\alpha_g^0$ is~$0$ and~$\alpha_g^1$ is~$1$.
    \item[$\lnot$-gate.] If $g$ is a~$\lnot$-gate with input gate~$g'$, then~$\alpha_g^\ell = \binom{|\vars(g)|}{l} - \alpha_{g'}^\ell$ for every~$\ell\in \{0,\ldots,|\vars(g)|\}$.

    \item[Deterministic $\lor$-gate.] If $g$ is a deterministic~$\lor$-gate
with no input then~$\phi_g$ is the Boolean function on variables~$\vars(g) =
\emptyset$ that is always false, hence~$\alpha_g^0=0$. Otherwise~$g$ has exactly two
input gates; let us denote them~$g_1$ and~$g_2$.
Observe that~$\vars(g) = \vars(g_1) \cup \vars(g_2)$ by definition.
Define~$S_1 \defeq \vars(g_2) \setminus \vars(g_1)$ and similarly~$S_2 \defeq
\vars(g_1) \setminus \vars(g_2)$.  Since ~$g$ is deterministic, we have:
\[\sat(\phi_g) = (\sat(\phi_{g_1}) \otimes 2^{S_1}) \cup (\sat(\phi_{g_2}) \otimes
2^{S_2})\] with the union being disjoint. By intersecting with the assignments
of~$\vars(g)$ of size~$\ell$, we obtain: \begin{align*} \sat_\ell(\phi_g) = &
\big[(\sat(\phi_{g_1}) \otimes 2^{S_1})\cap \{\nu\subseteq \vars(g) \mid |\nu|=\ell\}
\big]\\ & \cup \big[(\sat(\phi_{g_2}) \otimes 2^{S_2})\cap \{\nu\subseteq \vars(g)
\mid |\nu|=\ell\} \big] \end{align*} with again the middle union being disjoint,
therefore: \begin{align*} \ssat_\ell(\phi_g) = & |(\sat(\phi_{g_1}) \otimes 2^{S_1})\cap
\{\nu\subseteq \vars(g) \mid |\nu|=\ell\}|\\ & + |(\sat(\phi_{g_2}) \otimes
2^{S_2})\cap \{\nu\subseteq \vars(g) \mid |\nu|=\ell\}| \end{align*}

We now explain how to compute the first term, that is, $|(\sat(\phi_{g_1}) \otimes
2^{S_1})\cap \{\nu\subseteq \vars(g) \mid |\nu|=\ell\}|$;
the second term is similar.  This is equal\footnote{This comes from the
fact that, for disjoint~$X_1,X_2$ and assignments~$\nu_1$ of~$X_1$
and~$\nu_2$ of~$X_2$ we have~$|\nu_1 \cup \nu_2| = |\nu_1| + |\nu_2|$, and~$\vars(\phi_{g_1})$ and~$S_1$ are disjoint.} 
to \[\sum_{i\, =\, \max(0,\, \ell - |S_1|)}^{\min(\ell,\, |\vars(g_1)|) }
\alpha^i_{g_1} \times \binom{|S_1|}{\ell-i}.\]

    \item[Decomposable $\land$-gate.] If $g$ is a decomposable~$\land$-gate 
with no input then~$\phi_g$ is the Boolean function on variables\linebreak $\vars(g) =
\emptyset$ that is always true, hence~$\alpha_g^0=1$.
Otherwise, let~$g_1$ and~$g_2$ be the two input gates of~$g$. Since~$g$ is decomposable we have~$\vars(g)=\vars(g_1) \cup \vars(g_2)$ with the union being disjoint.
But then we have:
\[\sat(\phi_g) = \sat(\phi_{g_1}) \otimes \sat(\phi_{g_2})\]
We now intersect with the set of assignments of~$\vars(g)$ of size~$l$ to obtain
\begin{align*}
\alpha_g^\ell = \ssat_\ell(\phi_g) = \sum_{i\, =\, \max(0,\, \ell - |\vars(g_2)|)}^{\min(\ell,\, |\vars(g_1)|)} \alpha_{g_1}^i \times \alpha_{g_2}^{\ell-i}\
\end{align*}
This concludes the proof of the lemma, as well as the proof of Proposition~\ref{prp:shap-via-kc}.\qedhere
\end{description}
\end{proof}

\paragraph{Algorithm.\,}
Algorithm~\ref{algo:main} depicts the solution underlying Proposition~\ref{prp:shap-via-kc}.  The subroutine
ComputeAll\#SAT$_k$ takes as input a d-D
circuit~$C$ and outputs 
all the values 
$\ssat_0(C),\ldots,\ssat_{|\vars(C)|}(C)$. This function computes values~$\alpha_g^\ell$ by
bottom-up induction on~$C$ just as in the proof of Lemma~\ref{lem:ssat-kc}, by
using the appropriate equations depending on the type of each gate. 
Then, Lines~\ref{line:bb}--\ref{line:return2} in the algorithm simply
follow the part of the proof that starts at the beginning of this section until
Lemma~\ref{lem:ssat-kc}.  For instance, the returned value on
Line~\ref{line:return2} corresponds to Equation~\eqref{eq:shap-ssat}.
A quick inspection of Algorithm~\ref{algo:main} reveals that, if one ignores
the complexity of performing arithmetic operations (i.e., considering that
additions and multiplications take constant time), the running
time is~$O(|C|\cdot |\dn|^2)$. If one wishes to compute the Shapley value of
every endogenous fact (as will be done in the experiments), then the overall complexity is~$O(|C|\cdot |\dn|^3)$. \revision{Last, we point out that, in the case of non-Boolean queries, this cost is incurred for each potential output tuple that one wants to analyze.}

\IncMargin{1em}
\begin{algorithm}
\caption{Shapley values from deterministic and decomposable Boolean circuits}
\label{algo:main}
\SetKwInOut{Input}{Input}\SetKwInOut{Output}{Output}
\Input{\ Deterministic and decomposable Boolean circuit~$C$ with output gate~$g_\out$ representing~$\elin(q,\dx,\dn)$ and an endogenous fact~$f\in \dn$.}
\Output{\ The value $\shap(q,\dx,\dn,f)$.}
\SetKwFunction{FSub}{ComputeAll\#SAT$_k$}
\BlankLine
\hrulealg
\BlankLine
\BlankLine
Complete~$C$ so that~$\vars(g_\out) = \dn$\label{line:bb}\;
Compute $C_1 = C[f \rightarrow 1]$ and $C_2 = C[f \rightarrow 0]$;\tcp*[f]{Partial evaluations of~$C$ by setting~$f$ to~$1$ and to~$0$}
\BlankLine
$\Gamma = \text{ComputeAll\#SAT$_k$}(C_1)$;\tcp*[f]{As an array}

$\Delta= \text{ComputeAll\#SAT$_k$}(C_2)$;\tcp*[f]{As an array}

\BlankLine
\Return ${\displaystyle \sum_{k=0}^{|\dn|-1} \frac{k! \, (|\dn| - k - 1)!}{|\dn|!} \cdot (\Gamma[k] - \Delta[k]) }$\label{line:return2}\;
\BlankLine
\BlankLine
\SetKwProg{Fn}{Def}{:}{}
  \Fn{\FSub{C}}{
Preprocess $C$ so that each $\lor$-gate and $\land$-gate has fan-in exactly 0 or 2\label{line:preprocess}\;
\BlankLine
Compute the set~$\vars(g)$ for every gate~$g$ in $C$\;
\BlankLine
Compute values~$\alpha_g^{\ell}$ for every gate $g$ in $C$ and~$\ell\in \{0,\ldots,|\vars(g)|\}$ by bottom-up induction on~$C$ using the inductive relations from the proof of Lemma~\ref{lem:ssat-kc}\;
\BlankLine
\Return $[\alpha_{g_\out}^0,\ldots,\alpha_{g_\out}^{|\vars(C)|}]$\;
  }
\end{algorithm}
\DecMargin{1em}

\subsection{Implementation Architecture}
\label{subsec:strategy}

In this section, we present our architecture for implementing the knowledge compilation
approach over realistic datasets. The relevant parts, for now, are the middle and top
part of Figure~\ref{fig:strategy}, which we next explain.
Given a database~$D=\dx \cup \dn$, a query~$q(\bar{x})$, a tuple~$\bar{t}$ of
the same arity as~$\bar{x}$, and an endogenous fact~$f\in \dn$, we want to
compute~$\shap(q[\bar{x}/\bar{t}],\dx,\dn,f)$.  We use two existing tools to
help us with this task: ProvSQL~\cite{senellart2018provsql} and the knowledge
compiler c2d~\cite{darwiche2004new}. ProvSQL is a tool integrated into PostgreSQL
that can perform provenance (lineage) computation in various semirings.  For
our purposes, a knowledge compiler is a tool that takes as input a Boolean
function in CNF and outputs an equivalent Boolean function into another
formalism. The target formalism that we will use is the so-called ``d-DNNF". A
d-DNNF is simply a deterministic and decomposable Boolean circuit such that negation
gates are only applied to variables (NNF stands for \emph{negation normal
form}).\footnote{This additional NNF restriction is not important here, but,
as far as we know, no knowledge compiler has the more general
``deterministic and decomposable circuits'' (without NNF) as a target. \revision{It is currently unknown whether d-DNNFs and deterministic and decomposable
circuits are exponentially separated or not~\cite[Table 7]{darwiche2002knowledge}.}}

In our case, we use ProvSQL as follows: we feed it the database~$D$,
query~$q(\bar{x})$ and tuple~$\bar{t}$, and ProvSQL
computes~$\lin(q[\bar{x}/\bar{t}],D)$ as a Boolean circuit, called~$C$ 
in Figure~\ref{fig:strategy}.  
We note here that for SPJU queries, $C$ can be computed in polynomial-time data complexity.
We then set to~$1$ all the exogenous facts to obtain a Boolean
circuit~$C'$ for~$\elin(q[\bar{x}/\bar{t}],\dx,\dn)$.  Then, ideally, we would
like to use the knowledge compiler to transform~$C'$ into an equivalent d-DNNF,
in order to be able to apply Algorithm~\ref{algo:main}. Unfortunately, every
knowledge compiler that we are aware of takes as input Boolean formulas in
conjunctive normal form (CNF), and not arbitrary Boolean circuits. To
circumvent it, we use the \emph{Tseytin
transformation}~\cite{tseitin1983complexity} to transform the circuit~$C'$ into
a CNF~$\phi \defeq \textsc{Tseytin}(C')$, whose size is linear in that of~$C'$. This CNF~$\phi$ has the following
properties: (1) its variables are the variables of~$C'$ plus a set~$Z$ of additional
variables; 
(2) for every valuation~$\nu
\subseteq \vars(C')$ that satisfies~$C'$, there exists exactly one
valuation~$\nu' \subseteq Z$ such that~$\phi(\nu \cup \nu')=1$; and (3) for every
valuation~$\nu \subseteq \vars(C')$ that does not satisfy~$C'$, there is no
valuation~$\nu' \subseteq Z$ such that~$\phi(\nu \cup \nu')=1$. We then
feed~$\phi$ to the knowledge compiler, which produces a d-DNNF~$C''$ 
equivalent to~$\phi$ (the variables
of~$C''$ are again $\vars(C') \cup Z$).  We note here that there is no
theoretical guarantee that this step is efficient; indeed,
the task of transforming a CNF into an equivalent d-D
circuit is~$\fptoshp$-hard in general; see Section \ref{sec:experiments} for an experimental analysis of its tractability in practice.  Next, we need to
eliminate the additional variables~$Z$ in order to be able to apply
Algorithm~\ref{algo:main}. To this end, we use the following Lemma.

\begin{lemma}
\label{lem:remove-extra}
Given as input a d-DNNF~$C''$ that is equivalent to $\textsc{Tseytin}(C')$ for
a Boolean circuit~$C'$, we can compute in time~$O(|C''|)$ a d-DNNF~$C'''$
that is equivalent to~$C'$ (in particular, the variables of~$C'''$ are the same
as the variables of~$C'$; in our case, they consist only of endogenous facts).
\end{lemma}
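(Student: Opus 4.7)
The plan is to build $C'''$ by \emph{existentially projecting} the Tseytin auxiliary variables $Z$ out of $C''$. Properties~(2) and~(3) of the Tseytin transformation (as recalled just above the lemma) say exactly that, for every $\nu \subseteq \vars(C')$, we have $C'(\nu)=1$ iff there is some $\nu' \subseteq Z$ with $C''(\nu \cup \nu')=1$, and that when such a $\nu'$ exists it is \emph{unique}. So the function computed by $C'$ is precisely $\exists Z\cdot C''$, and restriction to $\vars(C')$ is a bijection between the satisfying assignments of $C''$ and those of $C'$. This uniqueness is what I will leverage to preserve determinism during the projection.

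Concretely, I would apply the standard ``forget'' operation available in decomposable NNF: in a single pass over $C''$, replace every literal gate $z$ or $\lnot z$ with $z\in Z$ by a constant-$1$ gate, leaving everything else untouched. This clearly runs in $O(|C''|)$ time and yields a circuit $C'''$ whose variables are contained in $\vars(C'')\setminus Z = \vars(C')$. Decomposability is preserved trivially, since the variable set of every gate can only shrink. Equivalence to $\exists Z\cdot C''$, and hence to $C'$, is the classical fact that in a decomposable NNF each $z$ occurs in at most one child of every $\land$-gate, so distributing the existential quantifier through the circuit causes no double counting and amounts exactly to setting $z$-literals to~$1$.

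The main obstacle is to show that $C'''$ remains \emph{deterministic}. In a general d-DNNF this step fails: for instance, $(x\land y)\lor(x\land \lnot y)$ collapses to $x\lor x$ once $y$ is forgotten, so the Tseytin-specific uniqueness in~(2) is essential. My approach would be to first \emph{smooth} $C''$ in linear time so that the two children of every $\lor$-gate carry the same variable set, and then argue as follows. Consider a $\lor$-gate $g$ with children $g_1,g_2$ and suppose for contradiction that some $\mu$ over $\vars(g)\cap\vars(C')$ satisfies both projected children, witnessed by $\mu\cup\nu_1'\in\sat(g_1)$ and $\mu\cup\nu_2'\in\sat(g_2)$. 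Using decomposability of $C''$, I would extend both witnesses to full satisfying assignments of $C''$ that coincide on all of $\vars(C')$ (by walking from $g$ up to the output gate and, at each $\land$-gate met on the way, picking the same assignment on the other subtrees, whose variables are disjoint from $\vars(g)$). Property~(2) would then force these two full assignments to coincide on $Z$ as well, so in particular $\nu_1'=\nu_2'$, contradicting the original determinism of $g$ in $C''$. The delicate technical point is making this ``consistent extension'' rigorous: one has to argue that a satisfying assignment of a child of $g$ can indeed be extended to the whole of $C''$, and that the extensions chosen for the two witnesses can be made identical on $\vars(C')$ while independently filling in the $Z$-part.
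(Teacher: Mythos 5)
Your core idea---replace the $Z$-literals by constant-$1$ gates and invoke the uniqueness of the Tseytin witness to argue that determinism survives---is the same as the paper's, but your construction omits the step that actually makes the determinism argument go through, and you have correctly located the gap yourself. The extension of a satisfying assignment of a child of an $\lor$-gate $g$ to a satisfying assignment of all of $C''$ is only possible when $g$ is connected to the output gate and every $\land$-sibling encountered on the way up is satisfiable. For $\lor$-gates sitting in a ``dead'' part of the circuit this fails, and determinism genuinely can be destroyed there: take $g=(x\land z)\lor(x\land\lnot z)$ with $z\in Z$, which is deterministic and decomposable, and place it under an $\land$-gate whose other input is a constant-$0$ gate; disjoining the result into a correct d-DNNF for $\textsc{Tseytin}(C')$ changes neither the computed function nor determinism or decomposability, yet after forgetting $z$ both children of $g$ compute $x$, so $g$ is no longer deterministic and the output circuit is not a d-DNNF as the lemma requires. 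The paper closes exactly this hole by preprocessing $C''$ before the literal substitution: it first deletes all unsatisfiable gates and then all gates not connected to the output, after which every surviving $\lor$-gate admits the consistent extension you describe and your uniqueness argument becomes sound.

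Two further remarks. First, the smoothing step you insert is not needed (without it one argues with the restrictions of $\mu$ to $\vars(g_1)$ and $\vars(g_2)$, padding with the missing variables exactly as in the proof of Lemma~\ref{lem:ssat-kc}), and it endangers the stated running time: smoothing a circuit costs $O(|C''|\cdot|\vars(C'')|)$ in general rather than $O(|C''|)$, since each child of an $\lor$-gate may have to be conjoined with a linear number of gadgets of the form $(y\lor\lnot y)$. Second, the paper's cleanup does respect the $O(|C''|)$ bound: in a decomposable NNF a gate is unsatisfiable if and only if it is a constant-$0$ gate, an $\land$-gate with an unsatisfiable input, or an $\lor$-gate all of whose inputs are unsatisfiable, so both the satisfiability marking and the reachability-from-the-output pruning are single linear passes.
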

\begin{proof}[Proof sketch]
Let~$Z$ be the additional variables coming from the Tseytin transformation. First, we
remove all the gates of~$C''$ that are not satisfiable,
and then we remove all the gates that are not connected to the output gate.  Now, let~$C'''$ be the circuit
that is obtained from this intermediate circuit by replacing every literal~$z$ or~$\lnot z$ for~$z
\in Z$ by a constant~$1$-gate. We return~$C'''$.  The proof that this algorithm
is correct uses the properties (1--3) of the Tseytin transformation, and is omitted due to lack of space.
\end{proof}

Using this lemma, we obtain a d-DNNF~$C'''$ for the endogenous lineage
$\elin(q[\bar{x}/\bar{t}],\dx,\dn)$, to which we can finally apply
Algorithm~\ref{algo:main} to obtain the value
$\shap(q[\bar{x}/\bar{t}],\dx,\dn,f)$.  

\begin{figure*}
\scalebox{0.85}{
  \input{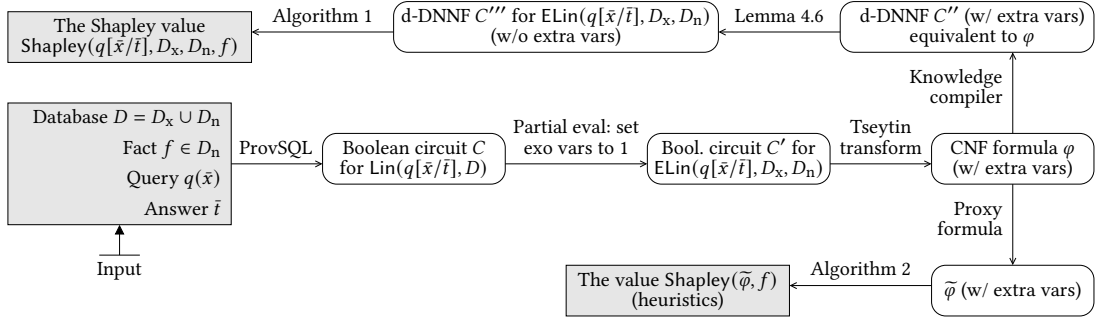}
  }
  \caption{Our implementation architecture.}
\label{fig:strategy}
\vskip-1em
\end{figure*}

\section{Inexact Computation}
\label{sec:approx}

As we will show in the experimental section, the exact computation algorithm that we have proposed performs well in most cases but is too costly in others. In the latter cases, we may wish to resort to methods that do not necessarily compute exact Shapley values, if their results still typically suffice to determine the {\em order} of facts according to their Shapley contribution.
In this section we propose \emph{\cprox{}}, a heuristic solution that is very efficient, and we will experimentally show that  the ranking of facts based on \cprox{} tends to match the ranking based on the exact Shapley values.

At a high level, \cprox{} is based on the observation that having a high Shapley score is correlated  (albeit in a complex manner) with (1) appearing many times in the provenance and (2) having few ``alternatives,'' that is, facts that could compensate for the absence of the given fact. The first factor (number of occurrences) may be directly read from the CNF obtained by applying the Tseytin transformation to the provenance circuit ($C'$ in Figure~\ref{fig:strategy}). It is also easy to read from the CNF partial information about the second factor (number of alternatives), namely the number of alternatives in each clause (ignoring intricate dependencies between clauses).  
Next, we present the details of \cprox.

\IncMargin{1em}
\begin{algorithm}[h]
    \SetKwFunction{CNF Proxy}{CNF Proxy}
    \SetKwInOut{Input}{Input}\SetKwInOut{Output}{Output}
    \LinesNumbered
    \Input{\ CNF $\varphi$ and a set of endogenous facts $\dn$.}
    \Output{\ The value $\shap(\widetilde{\varphi}, x)$ for each $x \in \dn$.} \BlankLine
    \hrulealg
    \BlankLine
    $n \gets \lvert \varphi.clauses() \rvert$\;\label{line:cprox_count_clauses}
    
    $v \gets 0^{|\dn|}\text{;}$ \tcp*[f]{As an array}\label{line:cprox_init}
    
    \For{$\psi \in \varphi.clauses()$}{
    \label{line:cprox_loop}
        $\mathcal{L} \gets \psi.literals()$\;\label{line:cprox_literals}
        
        $m \gets \lvert \mathcal{L} \rvert$\;\label{line:cprox_literals_count}
        
        $pos \gets \{\ell \in \mathcal{L} \mid \ell \text{ is positive} \}$\;\label{line:cprox_pos}
        
        $neg \gets \{\ell \in \mathcal{L} \mid \ell \text{ is negative} \}$\;\label{line:cprox_neg}
        
        \For{$\ell \in pos \cap \dn$}{
            $v[\ell.var()] \gets v[\ell.var()] + \frac{1}{nm \cdot \binom{m-1}{\lvert neg \rvert}}$\;
            \label{line:cprox_add}
        }
        \For{$\ell \in neg  \cap \dn$}{
            $v[\ell.var()] \gets v[\ell.var()] - \frac{1}{nm \cdot \binom{m-1}{\lvert pos \rvert}}$\;
            \label{line:cprox_remove}
        }
    }
    \Return $v$\label{line:cprox_retufn}
\caption{CNF Proxy}
\label{algo:cnf_proxy}
\end{algorithm} 
\DecMargin{1em}

We will start with an auxiliary definition, denoting the Shapley value of a general function $h : 2^X \to \R$ and a variable $x \in X$ as

\begin{align*}
\label{eq:defshapbool}
\shap(h, x) \ \defeq \  & \sum_{S \subseteq X \setminus \{x\}} \frac{|S|!(|X|-|S|-1)!}{|X|!} \big(h(S \cup \{x\}) - h(S)\big).
\end{align*}
Naturally, if $h = \elin(q,\dn,\dx)$, i.e., the endogenous lineage, and~$x$ is a fact in $\dn$, then $\shap(h, x) = \shap(q, \dn, \dx, x)$.

Now, note that for a CNF formula $\varphi=\bigwedge_{i=1}^n \psi_i$ (where each $\psi_i$ is a disjunction of literals) and an assignment $\nu$ it holds that $\varphi(\nu) = \prod_{i=1}^n \psi_i(\nu)$.
Instead of calculating the Shapley values of a CNF formula~$\phi$ (which may be a hard problem), \cprox{} computes Shapley values with respect to a \emph{proxy function}, denoted~$\widetilde{\varphi}$. The proxy function of $\varphi$ is defined as the sum (instead of the product) of the clauses of~$\varphi$, i.e., $\widetilde{\varphi}(\nu) \defeq \sum_{i=1}^n \frac{1}{n} \psi_i(\nu)$. Intuitively, a fact that appears in many clauses of the CNF $\varphi$ will occur in many summands of $\widetilde{\varphi}$, and when we compute Shapley values with respect to $\widetilde{\varphi}$, the number of alternatives in each clause will be reflected in decreased value of the respective summands.

\begin{example}
\label{ex:proxfirst}
Consider the CNF formula $\varphi = (x_1 \vee x_2) \wedge (x_1 \vee x_3 \vee x_4)$. The Shapley values of $x_1, x_2, x_3, x_4$ are $\frac{7}{12}, \frac{3}{12}, \frac{1}{12}, \frac{1}{12}$ respectively.
Note that $x_1$ has the highest influence, which intuitively may be attributed to its appearance in two clauses whereas each other variable appears only in a single clause. The variable $x_2$ has more influence than $x_3$ and $x_4$, intuitively since it has less alternatives. These comparative features are preserved in $\widetilde{\varphi} = (x_1 \vee x_2) + (x_1 \vee x_3 \vee x_4)$, and indeed the  Shapley values of $x_1, x_2, x_3, x_4$ with respect to $\widetilde{\varphi}$ are $\frac{5}{6}, \frac{1}{2}, \frac{1}{3}, \frac{1}{3}$ respectively. Observe that although the values assigned to the variables are very different from their actual Shapley values, their order remains intact in this case.
\end{example}

Due to the linearity of Shapley values, their computation with respect to $\widetilde{\varphi}$ is much more efficient, as implied by the following lemma
(whose proof we omit for space reasons):

\begin{lemma}\label{lem:proxy}
Let $h = \sum_{i=1}^n \frac{1}{n} \psi_i$, where each $\psi_i$ is
a Boolean function representing a disjunction of literals. Without loss of
generality let us assume that for each $\psi_i$ there is no variable that
appears in more than one literal of~$\psi_i$.  Denote by $a_i$ and $b_i$ the
number of positive and negative literals in $\psi_i$ respectively. Then, for
every variable~$x$, it holds that $\shap(h,x) = \frac{1}{n} \sum_{i=1}^n
\Phi(\psi_i, x)$, where 
\[
\Phi(\psi_i, x) =  
\begin{cases}
  \frac{1}{(a_i + b_i) \cdot \binom{a_i + b_i - 1}{b_i}} & \text{if $x$ appears in $\psi_i$ in positive form;} \\
  \frac{-1}{(a_i + b_i) \cdot \binom{a_i + b_i - 1}{a_i}} & \text{if $x$ appears in $\psi_i$ in negative form;} \\  
  0 & \text{otherwise.}
\end{cases}
\]
\end{lemma}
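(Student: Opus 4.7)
The plan is to decompose $\shap(h,x)$ via the linearity of the Shapley functional and then evaluate the Shapley value of each individual clause by a direct combinatorial computation.

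First, I would observe that $\shap(\cdot, x)$ is a linear functional on the space of set functions, as is immediate from Equation~\eqref{eq:defshap} (the coefficients depend only on the ambient variable set and on $|S|$, not on the function). Hence $\shap(h,x) = \frac{1}{n}\sum_{i=1}^n \shap(\psi_i, x)$, viewing each $\psi_i$ as a function on the full variable set of $h$, and it suffices to establish $\shap(\psi_i, x) = \Phi(\psi_i, x)$ for each clause $\psi_i$ separately.

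Second, I would reduce the computation of $\shap(\psi_i, x)$ to a Shapley computation over $\vars(\psi_i)$ alone. If $x \notin \vars(\psi_i)$, then $\psi_i(S \cup \{x\}) = \psi_i(S)$ for every $S$, so every summand vanishes and $\shap(\psi_i,x) = 0$, matching the ``otherwise'' case of $\Phi$. If $x \in \vars(\psi_i)$, I would invoke the standard marginalization property of the Shapley value via its permutation interpretation: $\shap$ equals the expected marginal contribution of $x$ to its set of predecessors in a uniformly random ordering of the ambient variable set; since $\psi_i$ depends only on $\vars(\psi_i)$, the distribution of those predecessors intersected with $\vars(\psi_i)$ coincides with the predecessors of $x$ in a uniformly random ordering of $\vars(\psi_i)$ itself. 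Thus the computation collapses to the Shapley value of $\psi_i$ viewed over its own $a_i + b_i$ variables.

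Third, I would directly evaluate $\shap(\psi, x)$ for a single clause $\psi$ with $a$ positive and $b$ negative literals, working over $\vars(\psi)$. If $x$ appears in $\psi$ as a positive literal, then $\psi(S \cup \{x\}) = 1$ for every $S \subseteq \vars(\psi) \setminus \{x\}$, while $\psi(S) = 0$ iff $S$ contains all the negated variables and none of the remaining positive variables --- a unique set of size $b$. The sole nonzero term in the Shapley sum contributes weight $\frac{b!(a-1)!}{(a+b)!} = \frac{1}{(a+b)\binom{a+b-1}{b}}$. The case where $x$ appears negatively is symmetric: now $\psi(S) = 1$ for every $S \subseteq \vars(\psi) \setminus \{x\}$ (the literal $\lnot x$ is satisfied since $x \notin S$), while $\psi(S \cup \{x\}) = 0$ only for the unique $S$ of size $b-1$ consisting of all other negated variables, yielding the claimed $-\frac{1}{(a+b)\binom{a+b-1}{a}}$. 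The hypothesis that no variable appears twice in a clause is precisely what rules out mixed positive/negative interactions within a single $\psi_i$.

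The arithmetic is elementary and no step appears to present a real obstacle. The only place requiring genuine care --- rather than difficulty --- is the marginalization step, since $\shap$ is defined over the full variable set of $h$ whereas $\Phi$ only references $|\vars(\psi_i)| = a_i + b_i$. The permutation interpretation disposes of this in a few lines and makes the passage from the ambient domain to $\vars(\psi_i)$ transparent.
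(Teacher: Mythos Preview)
Your argument is correct. The paper omits the proof of this lemma ``for space reasons,'' only hinting at the intended approach by remarking that ``due to the linearity of Shapley values'' the computation becomes efficient; your decomposition via linearity, followed by the permutation-based reduction to $\vars(\psi_i)$ and the explicit identification of the single nonzero marginal term in each case, is exactly the natural way to fill in the details and matches that hint.
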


Algorithm~\ref{algo:cnf_proxy} then describes the operation of \cprox{}, computing Shapley values of~$\widetilde{\varphi}$ according to Lemma~\ref{lem:proxy}. The input to the algorithm is a CNF formula $\varphi$ and the set of endogenous facts $\dn$. 
In Line~\ref{line:cprox_count_clauses}, \cprox{} counts the number~$n$ of clauses of~$\phi$, and in Line~\ref{line:cprox_init} it initializes the contribution of every variable of~$\phi$ to zero. Then, it iterates over the clauses (Line~\ref{line:cprox_loop}). For each clause, it counts the number~$m$ of literals, and identifies the set of positive and negative literals, $pos$ and $neg$ respectively (Lines~\ref{line:cprox_literals}--\ref{line:cprox_neg}). In Lines~\ref{line:cprox_add} and ~\ref{line:cprox_remove} we add (resp., subtract) quantities for each variable in a positive (resp., negative) literal according to Lemma ~\ref{lem:proxy} (note that $m$ corresponds to $a_i+b_i$ as denoted in the lemma). Finally, in Line~\ref{line:cprox_retufn} the algorithm returns the contribution value of each fact $x\in \dn$ based on $\shap(\widetilde{\varphi}, x)$.
Observe that \cprox{} runs in linear time in the size of~$\varphi$ (which itself, being the 
Tseytin transformation of the provenance circuit $C'$ from Figure~\ref{fig:strategy}, is linear in~$C'$).

\begin{example}
\label{ex:cprox}
Recall the queries $q_1, q_2, q$ and their lineages depicted in Figure~\ref{fig:airports_example}. 
Using endogenous lineages, we have:
\begin{itemize}
    \item $\elin(q_1, \dn, \dx) = a_1$
    \item 
    $
    \!
    \begin{aligned}[t]
    \elin(q_2, \dn, \dx) = &\left( a_2 \wedge a_4 \right) \vee \left( a_2 \wedge a_5 \right) \vee \left( a_3 \wedge a_4 \right) \vee \\
    & \left( a_3 \wedge a_5 \right) \vee \left( a_6 \wedge a_7 \right)
    \end{aligned}
    $
    \item
    $
    \!
    \begin{aligned}[t]
    \elin(q, \dn, \dx) = &a_1 \vee \left( a_2 \wedge a_4 \right) \vee \left( a_2 \wedge a_5 \right) \vee \left( a_3 \wedge a_4 \right) \vee \\
    & \left( a_3 \wedge a_5 \right) \vee \left( a_6 \wedge a_7 \right)
    \end{aligned}
    $
\end{itemize}
The lineage of $q_1$ is a CNF with a single variable, thus the contribution of $a_1$ to $q_1$ as computed by Algorithm~\ref{algo:cnf_proxy} is~$1$, which is indeed equal to $\shap(q_1, \dn, \dx, a_1)$.
Applying the Tseytin transformation to the lineage of~$q_2$ introduces $6$ new variables ($\{z_i\}_{i=1}^6$) and results in the following equisatisfiable CNF:
{\footnotesize
\begin{align*}
& \left( z_1 \right) \wedge \left( z_1 \vee \conj{z_2} \right) \wedge \left( z_1 \vee \conj{z_3} \right) \wedge \left( z_1 \vee \conj{z_4} \right) \wedge \left( z_1 \vee \conj{z_5} \right) \wedge \left( z_1 \vee \conj{z_6} \right) \wedge \\
&\left( \conj{z_1} \vee z_2 \vee z_3 \vee z_4 \vee z_5 \vee z_6 \right) \wedge \\
& \left( \conj{z_2} \vee a_2 \right) \wedge \left( \conj{z_2} \vee a_4 \right) \wedge \left(z_2 \vee \conj{a_2} \vee \conj{a_4}\right) \wedge
\left( \conj{z_3} \vee a_2 \right) \wedge \left( \conj{z_3} \vee a_5 \right) \wedge \left(z_3 \vee \conj{a_2} \vee \conj{a_5} \right) \wedge \\
& \left( \conj{z_4} \vee a_3 \right) \wedge \left( \conj{z_4} \vee a_4 \right) \wedge \left(z_4 \vee \conj{a_3} \vee \conj{a_4} \right) \wedge \left( \conj{z_5} \vee a_3 \right) \wedge \left( \conj{x_5} \vee a_5 \right) \wedge \left(z_5 \vee \conj{a_3} \vee \conj{a_5}\right) \wedge \\ 
& \left(\conj{z_6} \vee a_6 \right) \wedge \left( \conj{z_6} \vee a_7 \right) \wedge \left(z_6 \vee \conj{a_6} \vee \conj{a_7} \right)
\end{align*}
}
Algorithm~\ref{algo:cnf_proxy} iterates over the above clauses, and computes the contribution of the endogenous facts $\dn$ over the proxy function. Note that the facts $\dn$ appear in clauses of two forms. 
The first form is $\left(\conj{z_j} \vee a_i\right)$; appearance in this type of clause adds $\frac{1}{22 \cdot 2 \cdot \binom{1}{1}} = \frac{1}{44}$ to the contribution of~$a_i$. The second form is $\left( z_j \vee \conj{a_i} \vee \conj{a_h}\right)$, which adds $\frac{-1}{22 \cdot 3 \cdot \binom{2}{1}} = \frac{-1}{132}$.
Note  that each of $a_2, a_3, a_4, a_5$ has two appearances in clauses of the first form, and one appearance in clauses of the second form. Thus, according to Algorithm~\ref{algo:cnf_proxy} the contribution of $a_2, a_3, a_4, a_5$ is $\frac{5}{132} \approx 0.038$.
In contrast, $a_6$ and $a_7$ each have a single appearance in a clause of the first form and a single appearance in a clause of the second form, thus their contribution is $\frac{1}{66} \approx 0.015$.
We note that the values calculated by Algorithm~\ref{algo:cnf_proxy} are very different from the actual Shapley values, as $\shap(q_2, \dn, \dx, a_i)=\frac{11}{60} \approx 0.183$ for $a_i \in \{a_2, a_3, a_4, a_5\}$ and $\shap(q_2, \dn, \dx, a_i)=\frac{2}{15} \approx 0.133$ for $a_i \in \{a_6, a_7\}$. However, the facts $a_2, a_3, a_4, a_5$, are correctly determined to be more influential than $a_6$ and $a_7$.

\end{example}

Our experimental evaluation indicates that in most cases, the ordering of facts according to the values assigned to them by Algorithm~\ref{algo:cnf_proxy} agrees with the order obtained by using the actual Shapley values. There is however no theoretical guarantee that this will always be the case, as shown by the following example.   

\begin{example}
\label{ex:proxfar}
Applying the Tseytin transformation over the lineage of~$q$ will result in a CNF similar to that obtained for $q_2$, with a new variable ($z_7$) and the new clauses 
$\left( z_1 \vee \conj{z_7} \right) \wedge \left( z_7 \vee \conj{a_1} \right) \wedge \left( \conj{z_7} \vee a_1 \right)$.
In addition, the disjunct $\vee z_7$ will be added to the clause\linebreak $\left( \conj{z_1} \vee z_2 \vee z_3 \vee z_4 \vee z_5 \vee z_6 \right)$. Similarly to the case of $q_2$, the contributions of $a_2, a_3, a_4$, and $a_5$ are correctly determined to be larger than those of $a_6$ and $a_7$. As for $a_1$, its contribution according to Algorithm~\ref{algo:cnf_proxy} is 0 while in fact it is the most influential fact.

\end{example}

\section{Experiments}\label{sec:experiments}

Our system is implemented in Python 3.6 and using the PostgreSQL 11.10 database engine, \revision{and the experiments were performed on a 
Linux Debian 14.04 machine with 1TB of RAM and an Intel(R) Xeon(R) Gold 6252 CPU @ 2.10GHz processor}.
ProvSQL~\cite{senellart2018provsql} was used to capture the provenance. For knowledge compilation we have used the c2d compiler \cite{darwiche2001tractability,darwiche2004new}. The source code of our implementation is available in \cite{frost2021github}.

Since no standard benchmark for our problem exists, we have created such a benchmark of 40 queries over the TPC-H (1.4GB) and IMDB (1.2GB) databases.\revision{The TPC-H queries are based on the ones in \cite{tpch}, where we have only removed nested queries (which ProvSQL does not handle) and aggregation operations (for which provenance is not Boolean). The queries for the IMDB database are based on the join queries in \cite{leis2015good}, where for each query we have added a (last) projection operation over one of the join attributes to make provenance more complex and thus more challenging for our algorithms. The resulting queries are quite complex: in particular, only 4 out of the 40 are hierarchical.  See \cite{frost2021github} for details of the obtained queries.}

\subsection{Exact computation}
\label{sec:exp_exact}

We have evaluated our solution for exact Shapley computation, presented in Section~\ref{subsec:strategy}, on each of the 40 queries.
In total, we have obtained 95,803 output tuples
along with their provenance expressions (computed with ProvSQL). We have then transformed each
provenance expression into a d-DNNF structure using the c2d
\cite{darwiche2001tractability,darwiche2004new} knowledge compiler. \revision{In this experiment}, for both
the knowledge compilation (KC) and Shapley evaluation steps we have set a timeout
of \revision{one hour (in fact, as we show below, a much shorter timeout of 2.5 seconds typically suffices)}. In case the compilation completed successfully within this
timeframe, we have computed the Shapley values using
Algorithm~\ref{algo:main}.  Table \ref{tab:exp_exact} presents the execution times of our solution for 16 representative queries; next we will overview different
aspects of the results.

\begin{table*}[!htb]
    \centering
    \footnotesize
    \caption{\revision{Statistics on the exact computation of Shapley values for 16 representative queries}}
    \label{tab:exp_exact}
    \begin{tabular}{| c | c | c c c c c| c c c c c | c c c c c |}
        \hline
         \multirow{2}{*}{\textbf{Dataset}} & \multirow{2}{*}{\textbf{Query}} &
         \multirow{2}{*}{\makecell{\textbf{\#Joined} \\ \textbf{tables}}} & 
         \multirow{2}{*}{\makecell{\textbf{\#Filter} \\ \textbf{conditions}}} &         
         \multirow{2}{*}{\makecell{\textbf{Execution} \\ \textbf{time [sec]}}} &
         \multirow{2}{*}{\makecell{\textbf{\#Output} \\ \textbf{tuples}}} & 
         \multirow{2}{*}{\makecell{\textbf{Success} \\ \textbf{rate}}} &
         \multicolumn{5}{c |}{\textbf{KC execution times [sec]}} & \multicolumn{5}{c |}{\textbf{Alg.~\ref{algo:main} execution times [sec]}}\\
         & & & & & & & \textbf{Mean} &  \textbf{p25} & \textbf{p50} & \textbf{p75} & \textbf{p99} & \textbf{Mean} & \textbf{p25} & \textbf{p50} & \textbf{p75} & \textbf{p99} \\
         \hline
         \parbox[|c|]{2mm}{\multirow{8}{*}{\rotatebox[origin=c]{90}{\texttt{TPC-H}}}} 
        
        \revision{} & \revision{3} & \revision{3} & \revision{5} & \revision{20980.71} &
        \revision{100} & \revision{100\%} & \revision{0.06} & 
        \revision{0.04} & \revision{0.07} & \revision{0.08} & \revision{0.13} & \revision{0.00} & 
        \revision{0.00} & \revision{0.00} & \revision{0.00} & \revision{0.01}\\
        
        \revision{} & \revision{5} & \revision{6} & \revision{9} & \revision{48.67} & \revision{5} & \revision{0\%} & \revision{-} & 
        \revision{-} & \revision{-} & \revision{-} & \revision{-} & 
        \revision{-} & \revision{-} & \revision{-} & \revision{-} & \revision{-}\\
        
         \revision{} & \revision{7} & \revision{6} & \revision{8} & \revision{30.57} & \revision{4} & \revision{0\%} & \revision{-} & 
        \revision{-} & \revision{-} & \revision{-} & \revision{-} & \revision{-} & 
        \revision{-} & \revision{-} & \revision{-} & \revision{-}\\

         \revision{} & \revision{10} & \revision{4} & \revision{6} & \revision{4.72} & \revision{10} & \revision{100\%} & \revision{0.14} & 
        \revision{0.13} & \revision{0.13} & \revision{0.14} & \revision{0.19} & \revision{0.01} &
        \revision{0.01} & \revision{0.01} & \revision{0.01} & \revision{0.01}\\

         \revision{} & \revision{11} & \revision{6} & \revision{7} & \revision{0.13} & \revision{10} & \revision{100\%} & \revision{0.09} & 
        \revision{0.08} & \revision{0.08} & \revision{0.08} & \revision{0.13} & \revision{0.01} & 
        \revision{0.00} & \revision{0.00} & \revision{0.01} & \revision{0.03}\\

         \revision{} & \revision{16} & \revision{3} & \revision{5} & \revision{1.25} & \revision{10} & \revision{100\%} & \revision{0.26} & 
        \revision{0.13} & \revision{0.18} & \revision{0.33} & \revision{0.57} & \revision{0.18} & 
        \revision{0.01} & \revision{0.03} & \revision{0.29} & \revision{0.88}\\

         \revision{} & \revision{18} & \revision{4} & \revision{3} &  \revision{37.31} & \revision{10} & \revision{100\%} & \revision{0.13} & 
        \revision{0.08} & \revision{0.08} & \revision{0.18} & \revision{0.23} & \revision{0.01} & 
        \revision{0.00} & \revision{0.01} & \revision{0.01} & \revision{0.03}\\
         
         \revision{} & \revision{19} & \revision{2} & \revision{21} & \revision{2.04} & \revision{1} & \revision{100\%} & \revision{1.20} & 
        \revision{1.20} & \revision{1.20} & \revision{1.20} & \revision{1.20} & \revision{156.06} & 
        \revision{156.06} & \revision{156.06} & \revision{156.06} & \revision{156.06}\\

         \hline

         \parbox[c]{.5mm}{\multirow{8}{*}{\rotatebox[origin=c]{90}{\texttt{IMDB}}}}

        \revision{} & \revision{1a} & \revision{5} & \revision{10} & \revision{0.25} & \revision{35} & \revision{100\%} & \revision{0.17} & 
        \revision{0.08} & \revision{0.08} & \revision{0.13} & \revision{2.10} & \revision{5.92} &  
        \revision{0.01} & \revision{0.01} & \revision{0.01} & \revision{206.72}\\  
         
        \revision{} & \revision{6b} & \revision{5} & \revision{8} & \revision{2.61} & \revision{1} & \revision{100\%} & \revision{0.44} & 
        \revision{0.44} & \revision{0.44} & \revision{0.44} & \revision{0.44} & \revision{0.08} & 
        \revision{0.08} & \revision{0.08} & \revision{0.08} & \revision{0.08}\\

         \revision{} & \revision{7c} & \revision{8} & \revision{21} & \revision{77.33} & \revision{2415} & \revision{99\%} & \revision{0.82} & 
        \revision{0.18} & \revision{0.28} & \revision{0.63} & \revision{9.44} & \revision{24.28} & 
        \revision{0.02} & \revision{0.05} & \revision{0.39} & \revision{787.12}\\
         
         \revision{} & \revision{8d} & \revision{7} & \revision{10} &  \revision{145.10} & \revision{44517} & \revision{99.9\%} & \revision{0.26} & 
        \revision{0.13} & \revision{0.18} & \revision{0.29} & \revision{1.09} & \revision{0.36} & 
        \revision{0.01} & \revision{0.02} & \revision{0.05} & \revision{2.28}\\
                 
        \revision{} & \revision{11a} & \revision{8} & \revision{18} & \revision{3.20} & \revision{10} & \revision{100\%} & \revision{0.75} & 
        \revision{0.18} & \revision{0.48} & \revision{1.14} & \revision{2.15} & \revision{23.34} & 
        \revision{0.01} & \revision{0.13} & \revision{1.27} & \revision{151.99}\\
         
         \revision{} & \revision{11d} &  \revision{8} & \revision{16} & \revision{56.99} & \revision{210} & \revision{98.1\%} & \revision{0.99} &  
         \revision{0.19} & \revision{0.32} & \revision{0.78} & \revision{5.97} & \revision{96.19} & 
        \revision{0.02} & \revision{0.06} & \revision{0.48} & \revision{1650.72}\\

        \revision{} & \revision{13c} &  \revision{9} & \revision{19} & \revision{2.44} & \revision{14} & \revision{100\%} & \revision{0.22} & 
        \revision{0.13} & \revision{0.18} & \revision{0.28} & \revision{0.53} & \revision{0.02} & 
        \revision{0.01} & \revision{0.01} & \revision{0.01} & \revision{0.06}\\
        
        \revision{} & \revision{15d} & \revision{9} & \revision{18} & \revision{24.25} & \revision{207} & \revision{97.6\%} & \revision{1.89} & 
        \revision{0.24} & \revision{0.48} & \revision{1.28} & \revision{10.01} & \revision{70.05} &  
        \revision{0.06} & \revision{0.30} & \revision{3.52} & \revision{1821.13}\\
         
        \revision{} & \revision{16a} & \revision{8} & \revision{15} & \revision{5.56} & \revision{173} & \revision{100\%} & \revision{0.18} & 
        \revision{0.13} & \revision{0.14} & \revision{0.20} & \revision{0.53} & \revision{0.02} &  
        \revision{0.01} & \revision{0.01} & \revision{0.02} & \revision{0.12}\\
        
        \hline
    \end{tabular}
\end{table*}

\smallskip

\revision{
\textit{Success rate.} 
We report here the rate of successful executions.
The IMDB queries resulted in 95,636 output tuples; the KC step completed successfully for 95,599 out of 
them, where all 37 failures were the result of insufficient memory. For each of the IMDB output tuples that were successfully compiled into a d-DNNF, we have executed Algorithm~\ref{algo:main};
only a single execution failed in this step (due to a timeout of one hour). Overall, the exact computation of Shapley values was successful for 95,598 out of the 95,636 IMDB output tuples (i.e., 99.96\% success rate).
The TPC-H queries resulted in 167 output tuples; the KC step has completed successfully for 141 out of 
them, again all 26 failures were the result of insufficient memory. For all TPC-H outputs that compiled successfully Algorithm~\ref{algo:main} was successful, yielding an overall 84.43\% success rate.
}

\begin{figure}[t]
    \centering
    \begin{subfigure}{.48\linewidth}
    \centering
    \includegraphics[width=\linewidth]{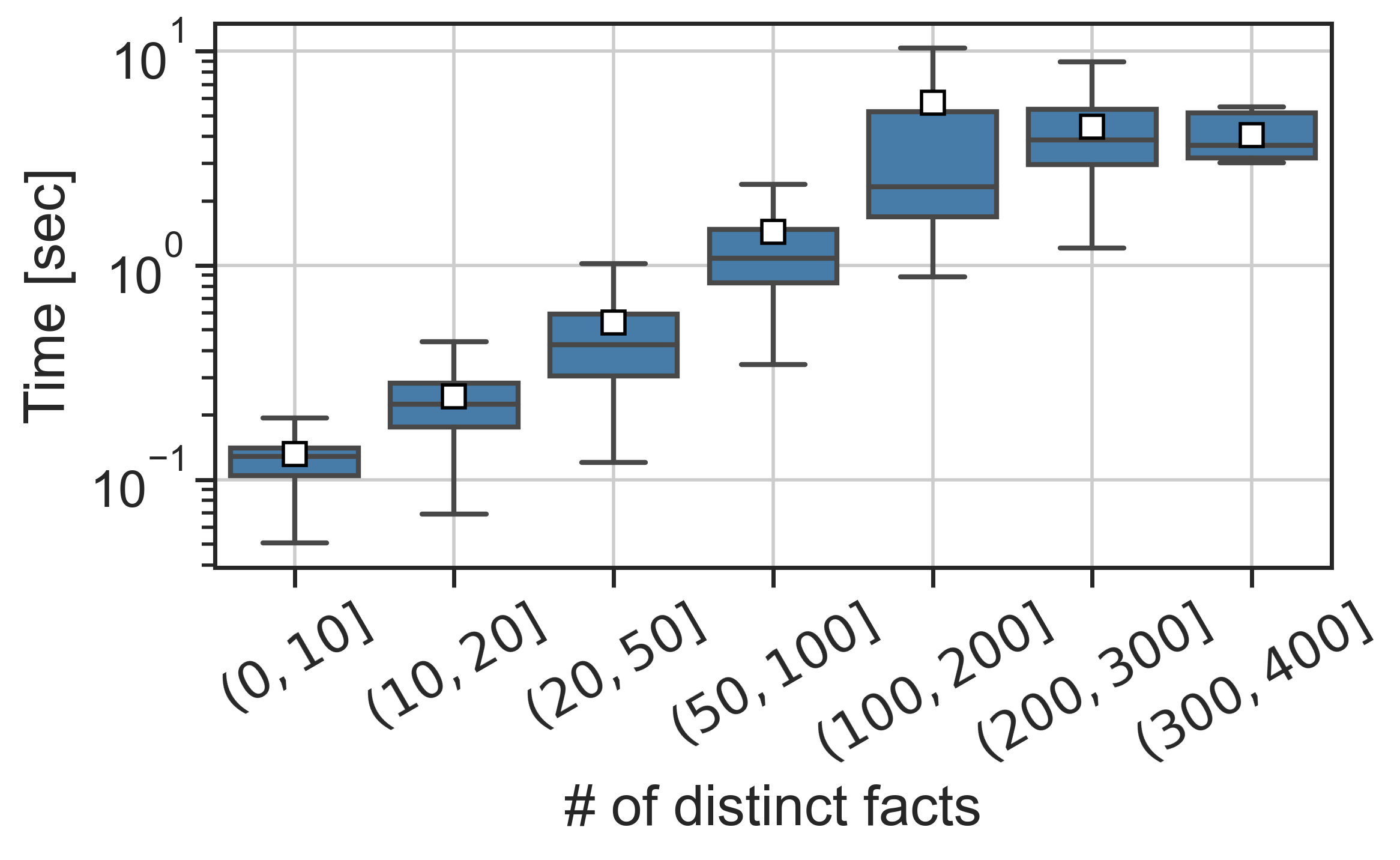}
    \caption{\revision{KC time per \#facts}}
    \label{fig:runnig_time_kc_tuples}
    \end{subfigure}%
    \begin{subfigure}{.48\linewidth}
    \centering
    \includegraphics[width=\linewidth]{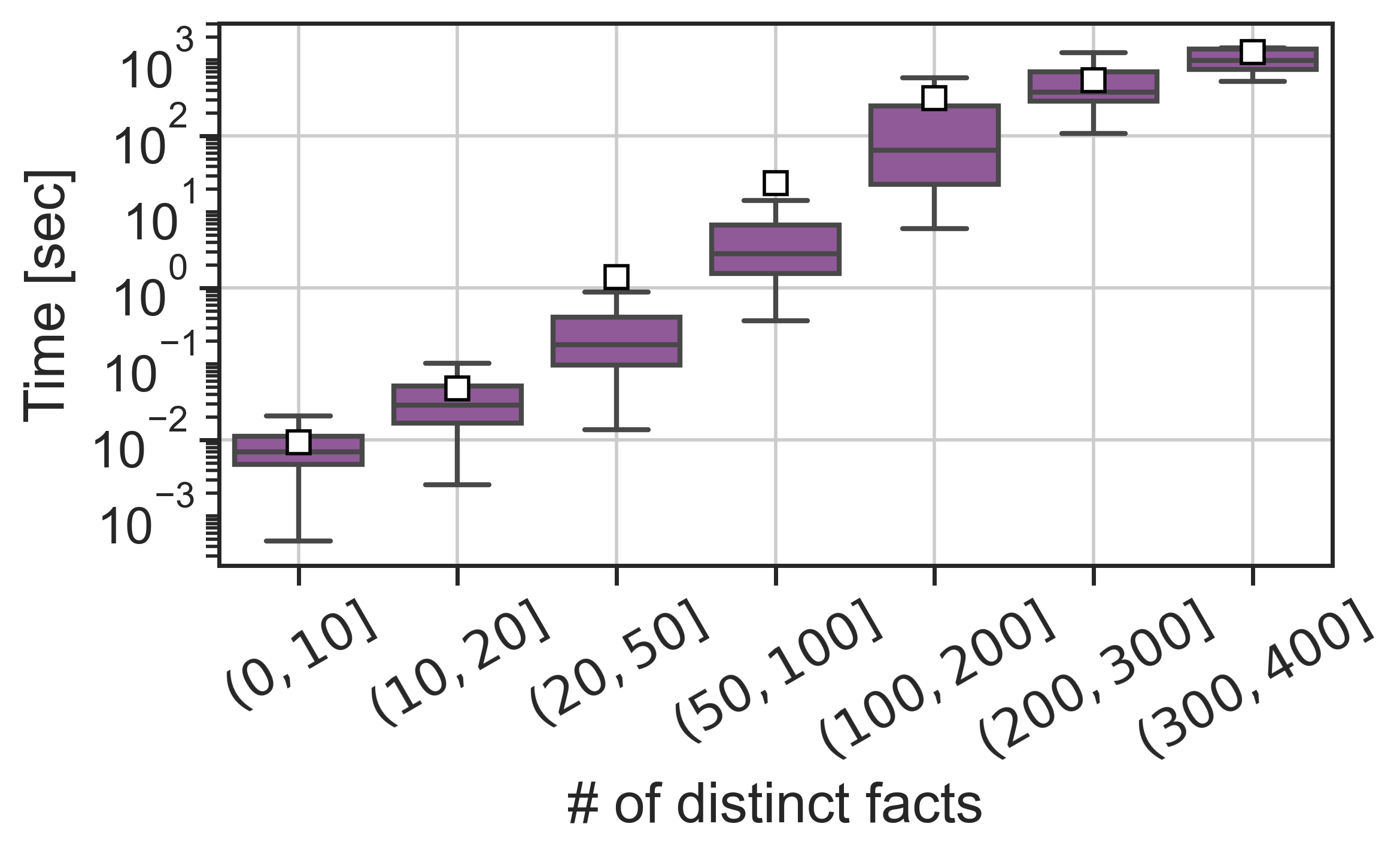}
    \caption{\revision{Alg.~\ref{algo:main} time per \#facts}}
    \label{fig:runnig_time_shapley_tuples}
    \end{subfigure}\\   \begin{subfigure}{.48\linewidth}
    \centering
    \includegraphics[width=\linewidth]{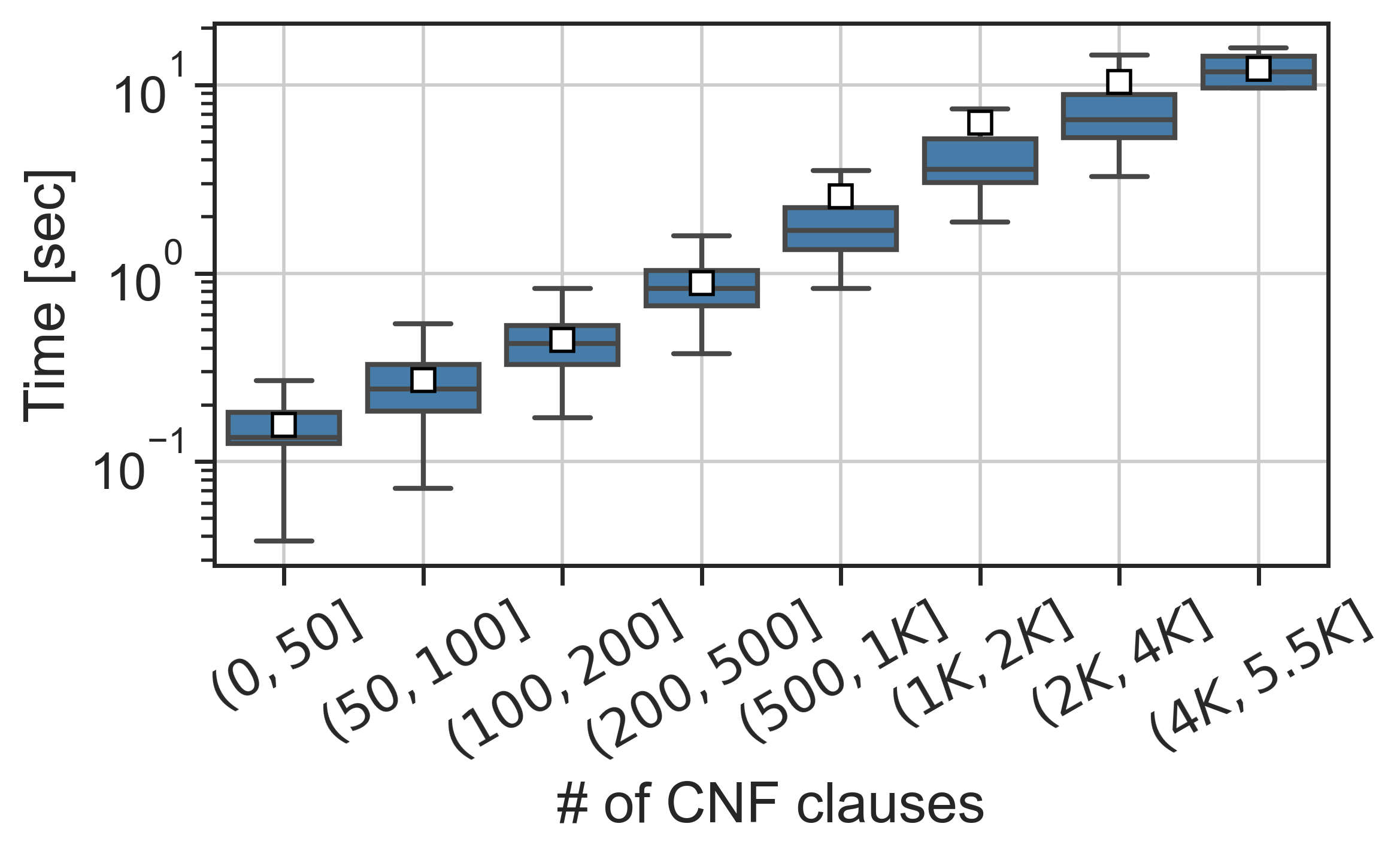}
    \caption{\revision{KC time per \#CNF clauses}}
    \label{fig:runnig_time_kc_clauses}
    \end{subfigure}%
    \begin{subfigure}{.48\linewidth}
    \centering
    \includegraphics[width=\linewidth]{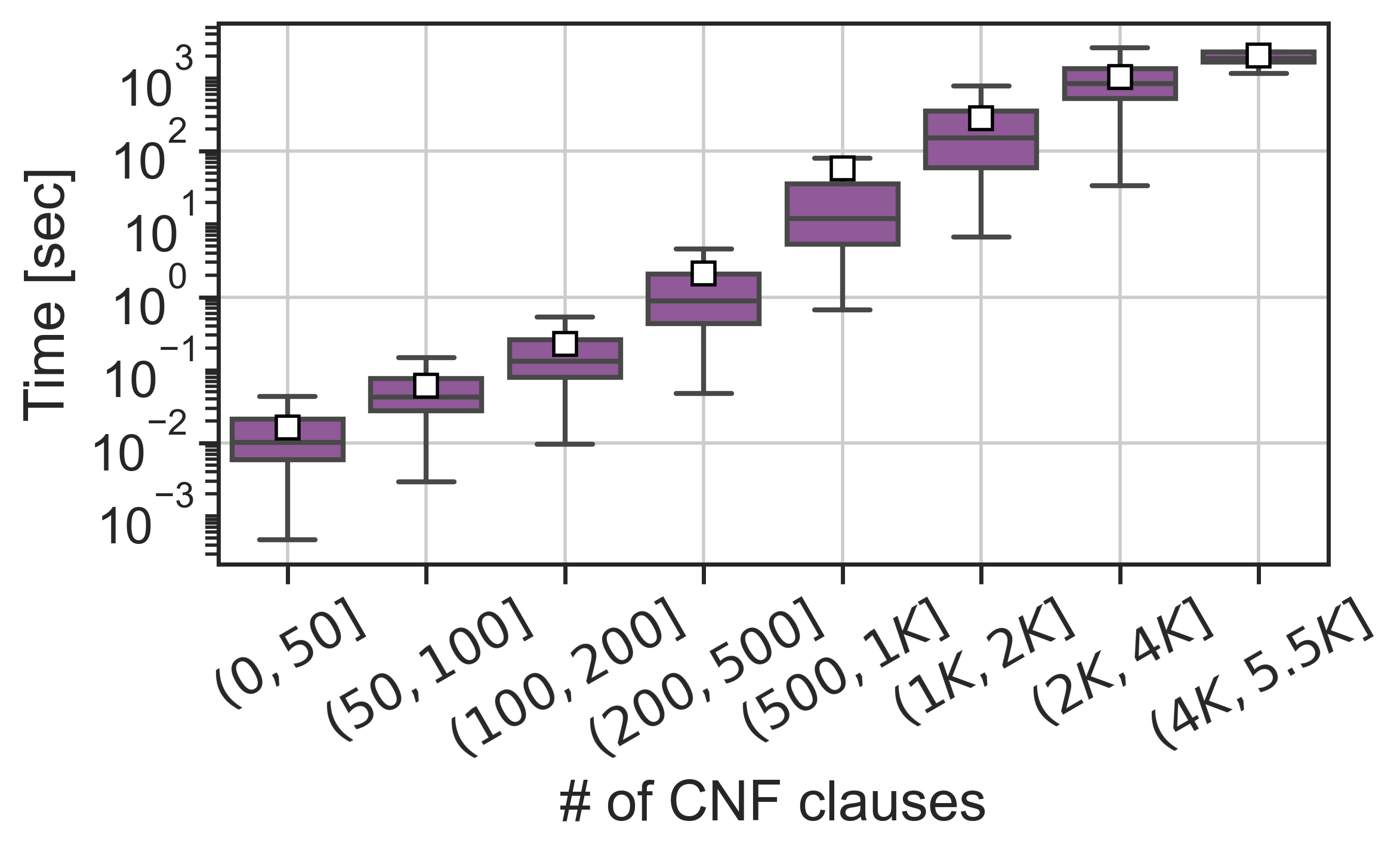}
    \caption{\revision{Alg.~\ref{algo:main} time per \#CNF clauses}}
    \label{fig:runnig_time_shapley_clauses}
    \end{subfigure}\\    
    \begin{subfigure}{.48\linewidth}
    \centering
    \includegraphics[width=\linewidth]{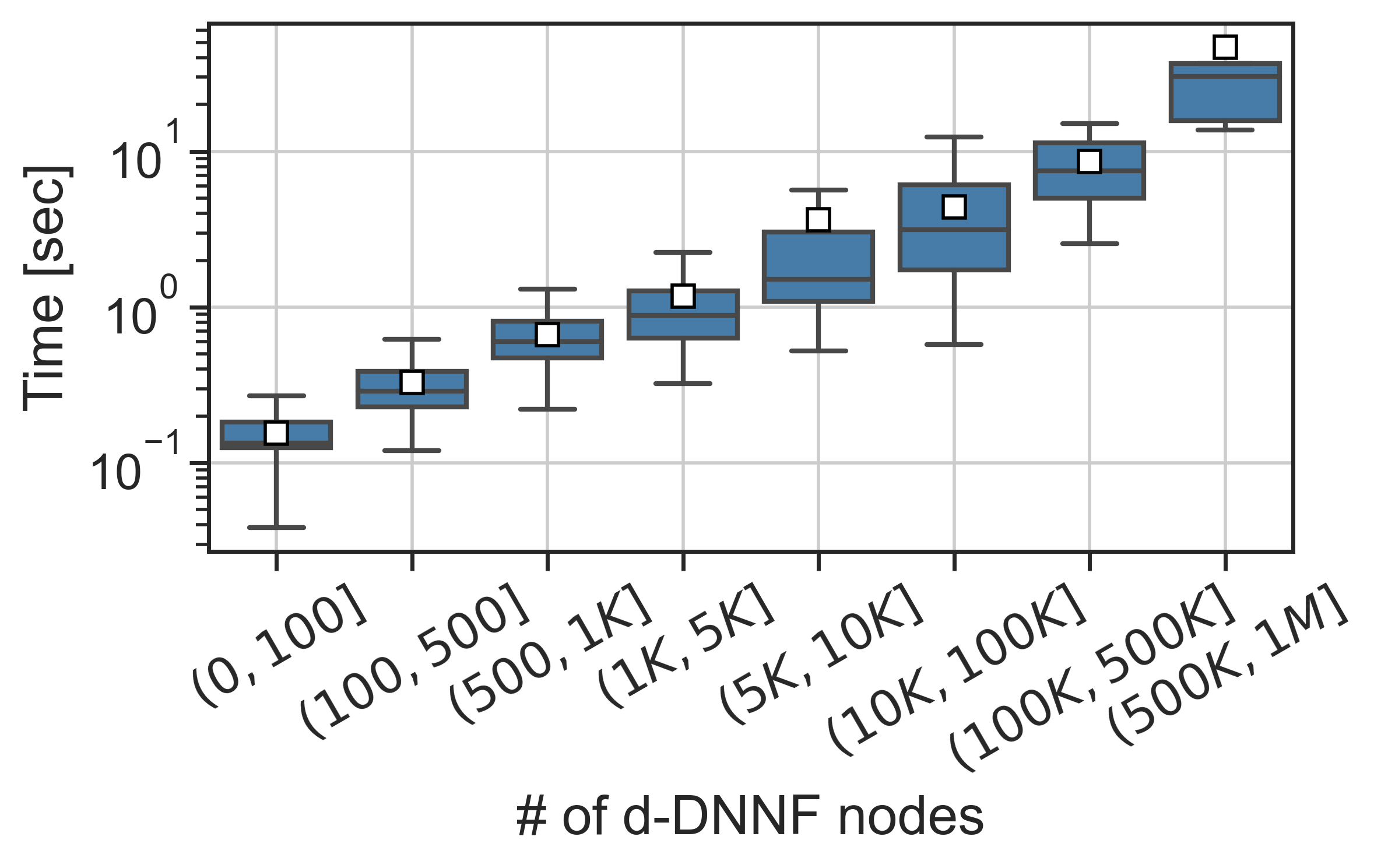}
    \caption{\revision{KC time per d-DNNF size}}
    \label{fig:runnig_time_kc_ddnnf}
    \end{subfigure}%
    \begin{subfigure}{.48\linewidth}
    \centering
    \includegraphics[width=\linewidth]{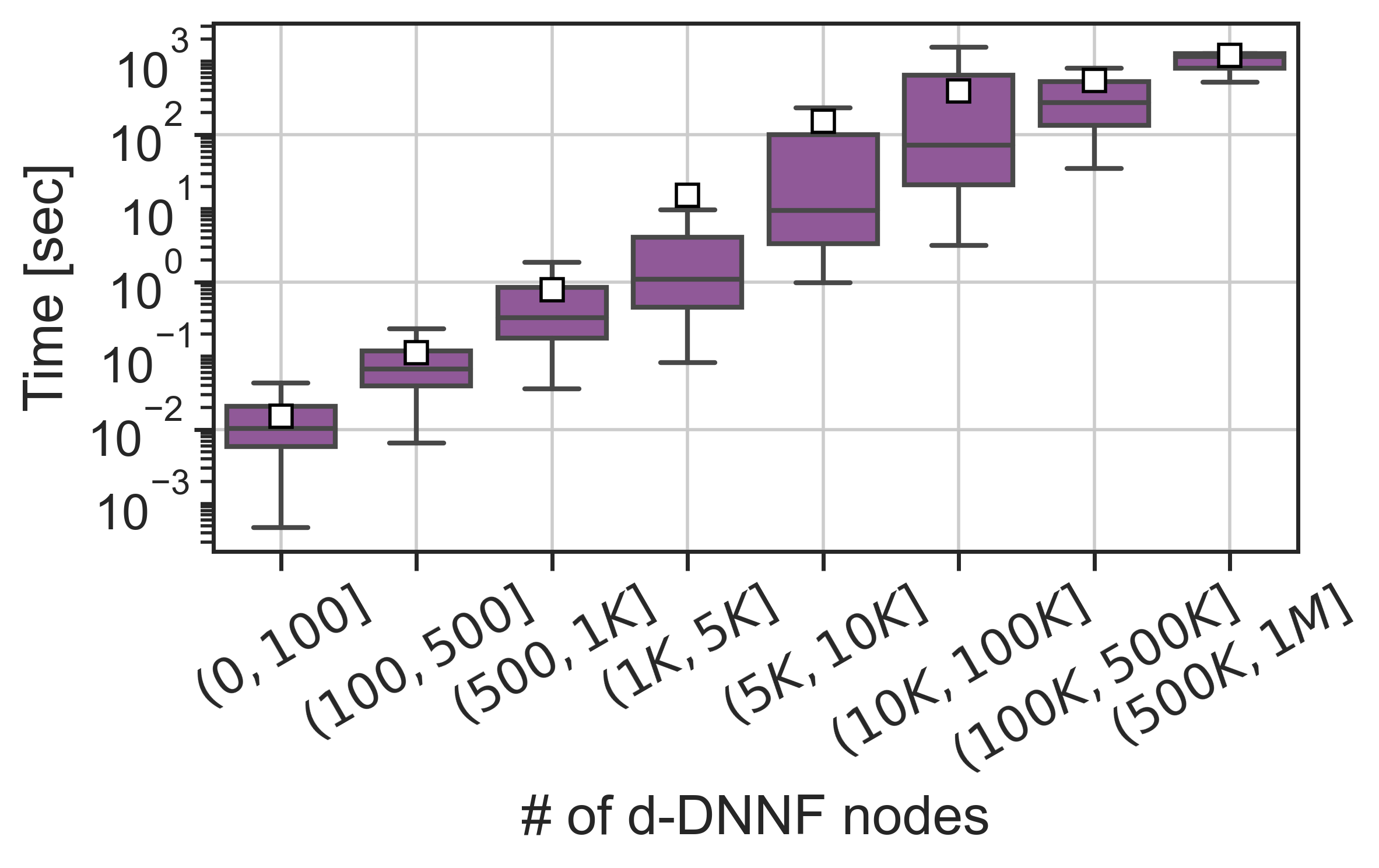}
    \caption{\revision{Alg.~\ref{algo:main} time per d-DNNF size}}
    \label{fig:runnig_time_shapley_ddnnf}
    \end{subfigure}
    \caption{\revision{Running times of knowledge compilation and computation of Shapley values from the d-DNNF as function of the number of distinct facts, CNF clauses, and d-DNNF size}}
    \label{fig:running_time_factors}
    \vspace{-1em}
\end{figure}

\paragraph{Execution time.}
For each of the 40 queries, we have measured the execution time of each step
of the computation. First, we have measured (in the column “Execution time”)
the execution time in PostgreSQL, which includes provenance generation for every
output tuple using ProvSQL. Then, for each tuple $\bar{t}$ in the output of the query, we
have measured the KC execution time and the execution time of
Algorithm~\ref{algo:main} to compute the contribution of all input facts with respect to the output tuple $\bar{t}$.
For the latter two algorithms, the execution
times varied significantly for the different output tuples, and thus we report
the execution times for different percentiles (mean, p25, p50, p75 and p99).
Observe that the computation is typically efficient; outliers include q11d for
which the execution time of Algorithm~\ref{algo:main} was over~96 seconds in average.

Figure \ref{fig:running_time_factors} depicts the running time of the KC step and of the computation of Shapley values from the d-DNNF as a function of different features of the provenance, such as the number of facts appearing in the provenance, the number of clauses in its CNF representation, and the number of gates in its d-DNNF representation.

\revision{
\paragraph{Scalability.}
To evaluate the scalability of Algorithm~\ref{algo:main} we have further looked at different scales of the TPC-H database.
Figure~\ref{fig:scalability} depicts the running time of Algorithm~\ref{algo:main} over 8 representative query outputs as a function of the number of facts in the \textsc{lineitem} table. Figure~\ref{fig:scalability_easy} depicts 4 representative query outputs, for which the running time takes a few milliseconds, e.g., for ``Q3 23426'' (result 23426 of query $Q3$) and the full TPC-H dataset, the computation of Shapley values for all relevant input facts complete in 4.3ms. In contrast, Figure~\ref{fig:scalability_hard} depicts 4 query outputs for which the exact computation failed to complete over the full TPC-H database. We observe that the algorithm does succeed in these cases if we execute the queries over subsets of the input database, though its execution time may still be high: e.g., if we take a “slice” of the \textsc{lineitem} table consisting of 480,097 facts, then computation of the contribution of all input facts w.r.t. ``Q9 ALGERIA'' takes 556sec.
}

\begin{figure}[b]
    \centering
    \begin{subfigure}{.48\linewidth}
    \includegraphics[width=\linewidth]{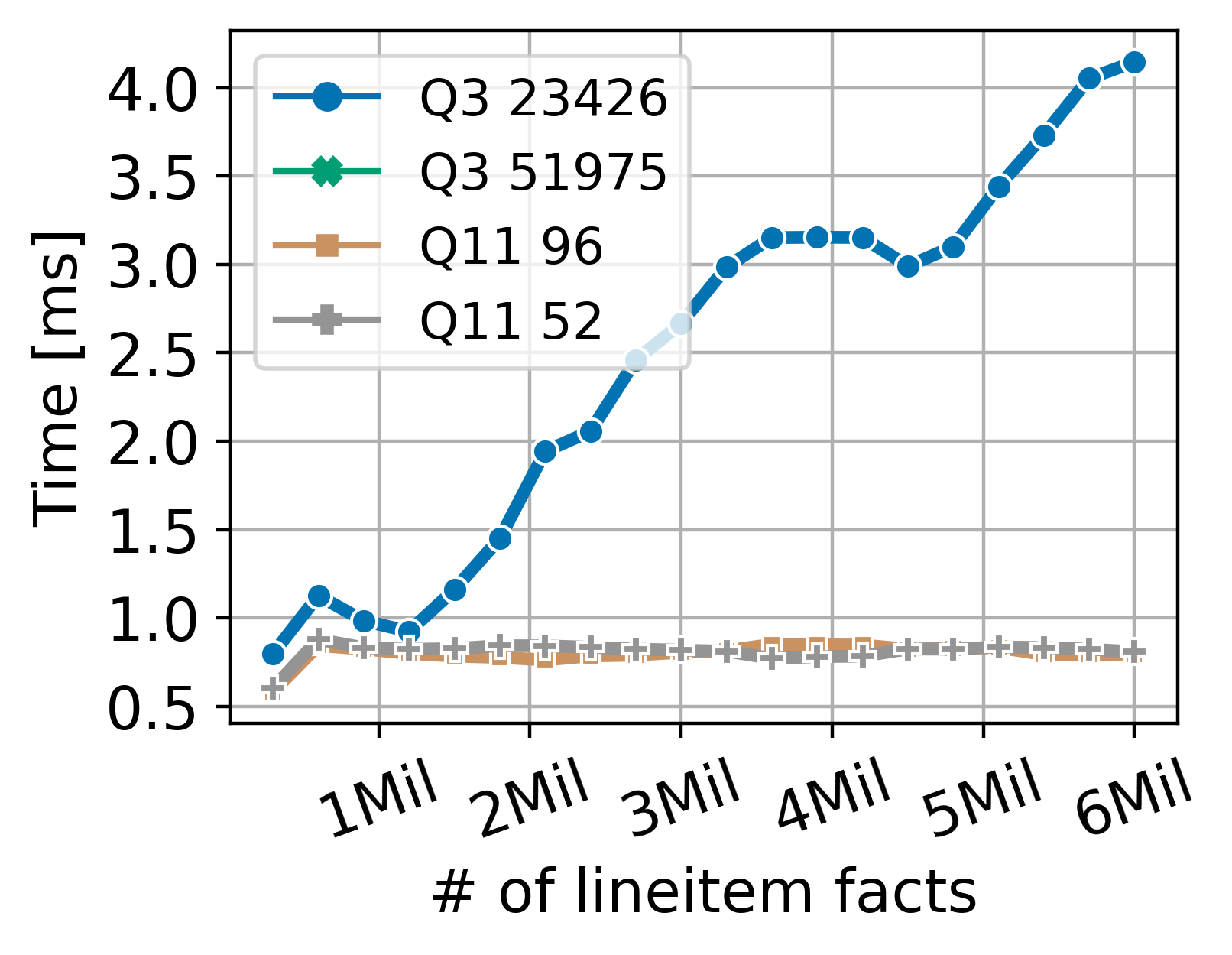}
    \caption{\revision{Representative outputs}}
    \label{fig:scalability_easy}
    \end{subfigure}
    \begin{subfigure}{.48\linewidth}
    \includegraphics[width=\linewidth]{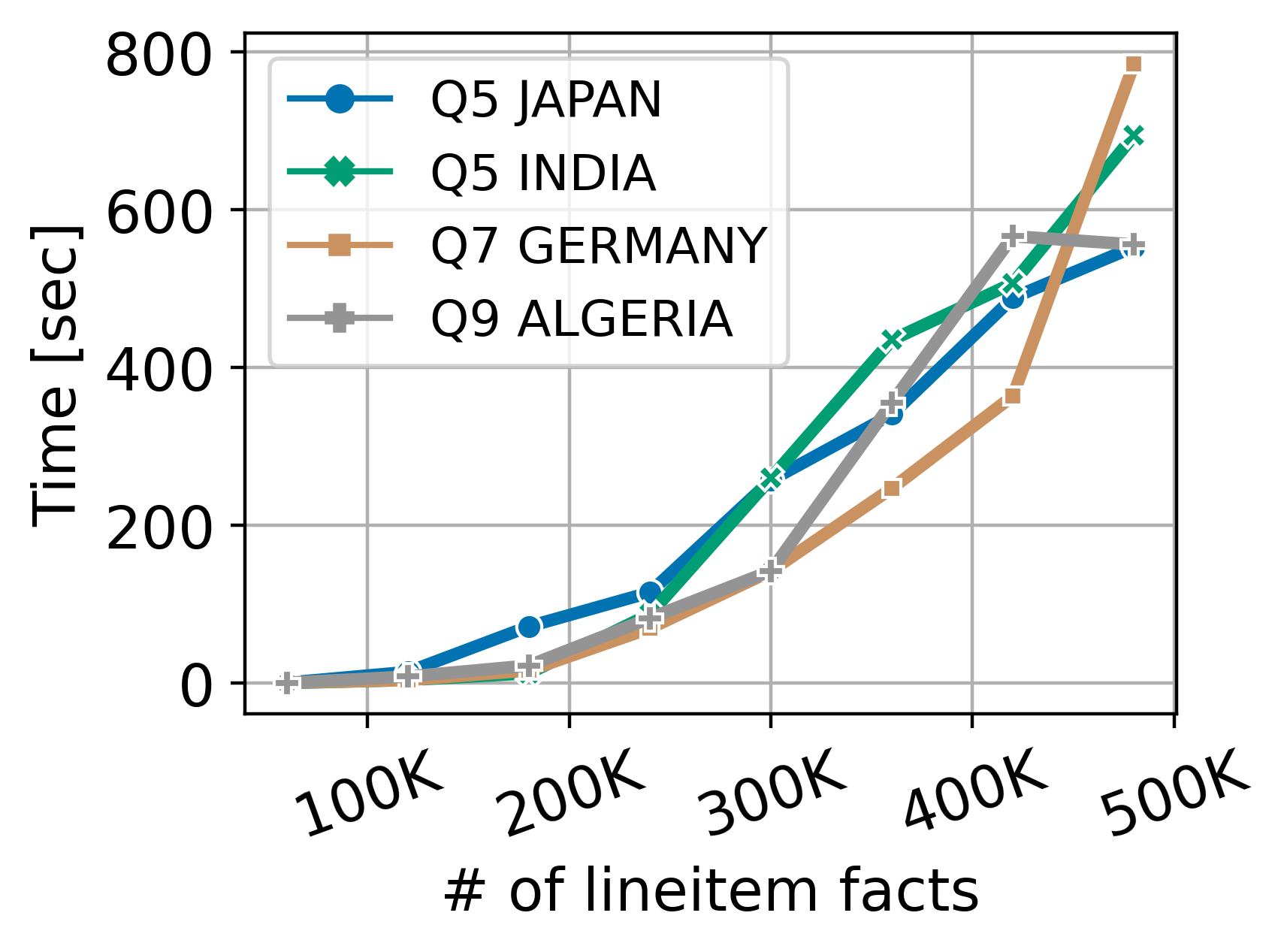}
    \caption{\revision{``Difficult" outputs}}
    \label{fig:scalability_hard}
    \end{subfigure}
    \caption{\revision{Alg.~\ref{algo:main} running time for various TPC-H query outputs as function of table \textsc{lineitem} size}}
    \label{fig:scalability}
\end{figure}

\subsection{Inexact computation}\label{sec:exp_approx}
As observed above, computing exact Shapley values using our solution for a given output tuple is typically fast, but may be costly or even fail in some cases. In this section we evaluate inexact computation alternatives.

\paragraph{Algorithms.}
We compare three algorithms: \cprox{} (Section~\ref{sec:approx}) and two existing baselines: \mc{}, and \kshap{}.
\paragraph*{\textbf{\mc{}}.} 
This is a well-known sampling algorithm \cite{mann1960values} for approximating
Shapley values in general.  To employ the \mc{} algorithm in our setting, we
feed it a provenance expression~$h$ containing~$n$ distinct input facts, and a
budget of~$r\cdot n$ samples, for some $r \in \mathbb{Z}_+$. The Shapley value
of each fact $f$ is approximated by sampling~$r$ permutations ($\pi_1, \ldots,
\pi_r$) of the input facts, and then outputting $\frac{1}{r}\cdot \sum_{i=1}^r
\left( h(S_{\pi_i,<f} \cup \{f\}) - h(S_{\pi_i,<f}) \right)$, where
$S_{\pi_i,<f}$ is the coalition of all facts preceding $f$ in the
permutation~$\pi_i$. 

\paragraph*{\textbf{\kshap{}}} 
Lundberg and Lee~\cite{lundberg2017unified} have defined the notion of SHAP values in the context of ML explainability.
Given a function $h : \R^d \to \R$ (the model whose decisions we want to explain), a probability distribution~$\mathcal{D}$ on~$\R^d$ (the inputs), 
and an input vector $\bar{e} \in \R^d$, SHAP values were defined to measure the contribution of $\bar{e}$'s features to the outcome $h(\bar{e})$.
To overcome the issue that $h$ does not operate over subsets of features, the notion of SHAP-score has been defined as follows
\[\mathsf{SHAP}(h, \bar{e}, x) \defeq \sum_{S\subseteq X\setminus \{x\}} \frac{|S|!(|X|-|S|-1)!}{|X|!} (h_{\bar{e}}(S\cup \{x\}) - h_{\bar{e}}(S)), \]
where $X$ is the set of $d$ features, $x$ is a specific feature whose contribution we wish to assess, and $h_{\bar{e}} : 2^X \to \R$ is defined by $h_{\bar{e}}(S) \defeq \mathbb{E}_{\bar{z} \sim \mathcal{D}}[h(\bar{z}) \mid \bar{e}_S = \bar{z}_S]$, where $\bar{e}_S$ and $\bar{z}_S$ denote the vectors $\bar{e}$ and $\bar{z}$ restricted to the features in $S$.

In \cite{lundberg2017unified} the authors have
proposed a method for approximating SHAP values, called \kshap{}.
\kshap{} assumes feature independence and estimates the probability by
multiplying the marginal distributions $\prod_{i\notin S}\Pr(z_i)$. The
marginal probabilities $\Pr(z_i)$ in turn are estimated from a background data $T$. To
approximate SHAP values, \kshap{} then samples $m$ coalitions $S_1, \ldots,
S_m$ of features,
% (facts in our case), 
and trains a linear model~$g:2^X\to \R$ by minimizing the weighted loss
$\sum_{i=1}^{m} w_i \cdot (g(S_i) - \hat{h}_{\bar{e}}(S_i))^2$, where $w_i$ is
proportional to the size of $S_i$ and $\hat{h}_{\bar{e}}$ is the estimation of $h_{\bar{e}}$
using the feature independence assumption. The coefficient associated with a
feature $x$ in the trained model $g$ is the approximated SHAP value of $x$.

We adapt \kshap{} to our setting and use it to approximate the Shapley values
of facts, as follows. The features are the input facts of the database, we set
$h$ to be the Boolean function representing the (endogenous) provenance; the vector of
interest, $\bar{e}$, has the value~$1$ for all facts, and the background data $T$
contains a single example with value~$0$ in all its entries. Note that for such
$\bar{e}$ and $T$, \kshap{} estimates $h_{\bar{e}}(S)$ as the result of applying $h$ to a
vector with ones in $S$ features and zeros in the remaining entries. 

The input of both \mc{} and \kshap{} includes a budget of $m$ samples; for provenance expressions containing $n$ distinct facts we have experimented with $m \in \{10n,20n,30n,40n,50n\}$.\\

\begin{figure*}[!htb]
    \centering
    \begin{subfigure}{.32\linewidth}
    \centering
    \includegraphics[width=\linewidth]{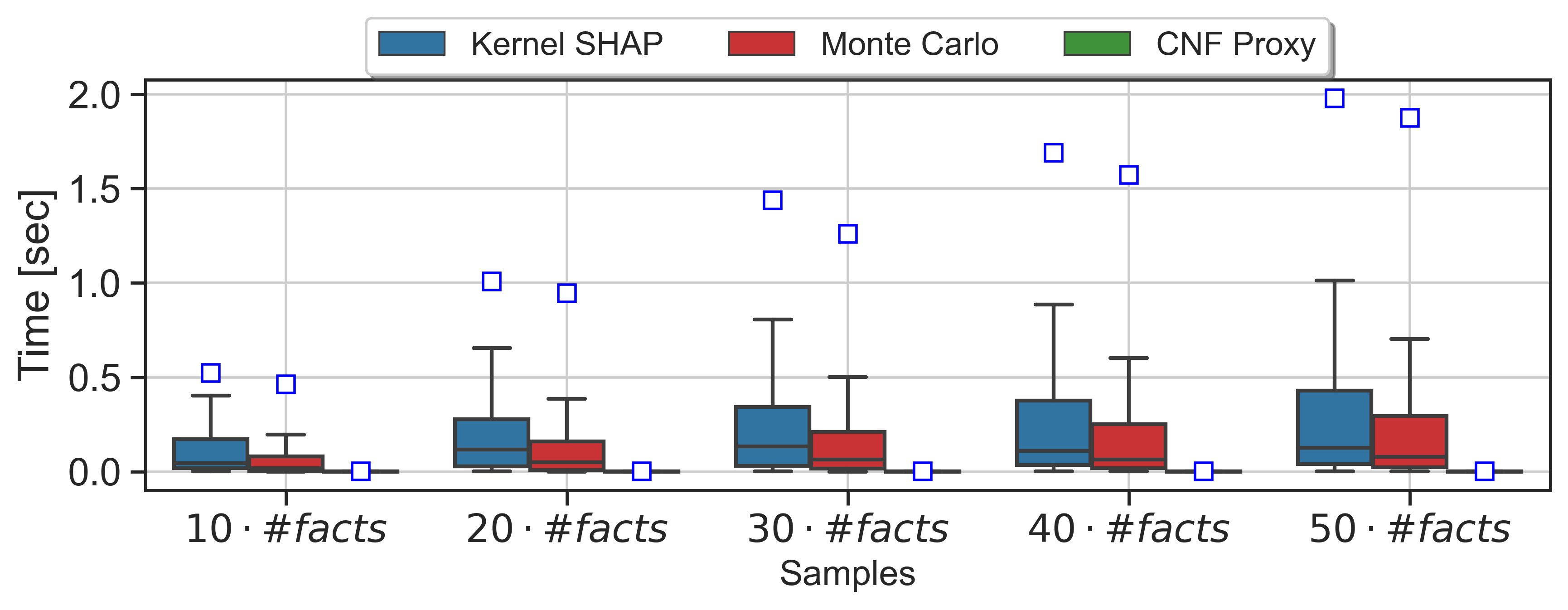}
    \caption{\revision{Execution time}}
    \label{fig:approximated_execution_time}
    \end{subfigure}    
    \begin{subfigure}{.32\linewidth}
    \centering
    \includegraphics[width=\linewidth]{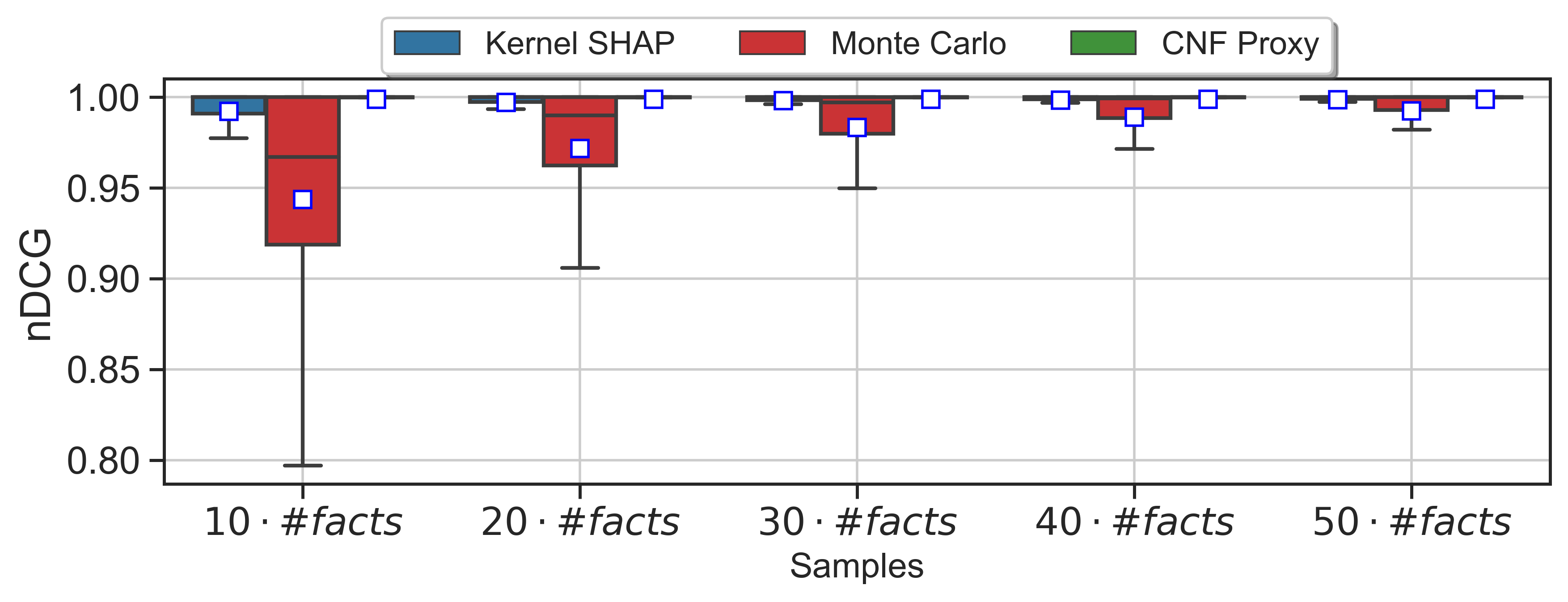}
    \caption{\revision{nDCG}}
    \label{fig:approximated_nDCG}
    \end{subfigure}
    \begin{subfigure}{.32\linewidth}
    \centering
    \includegraphics[width=\linewidth]{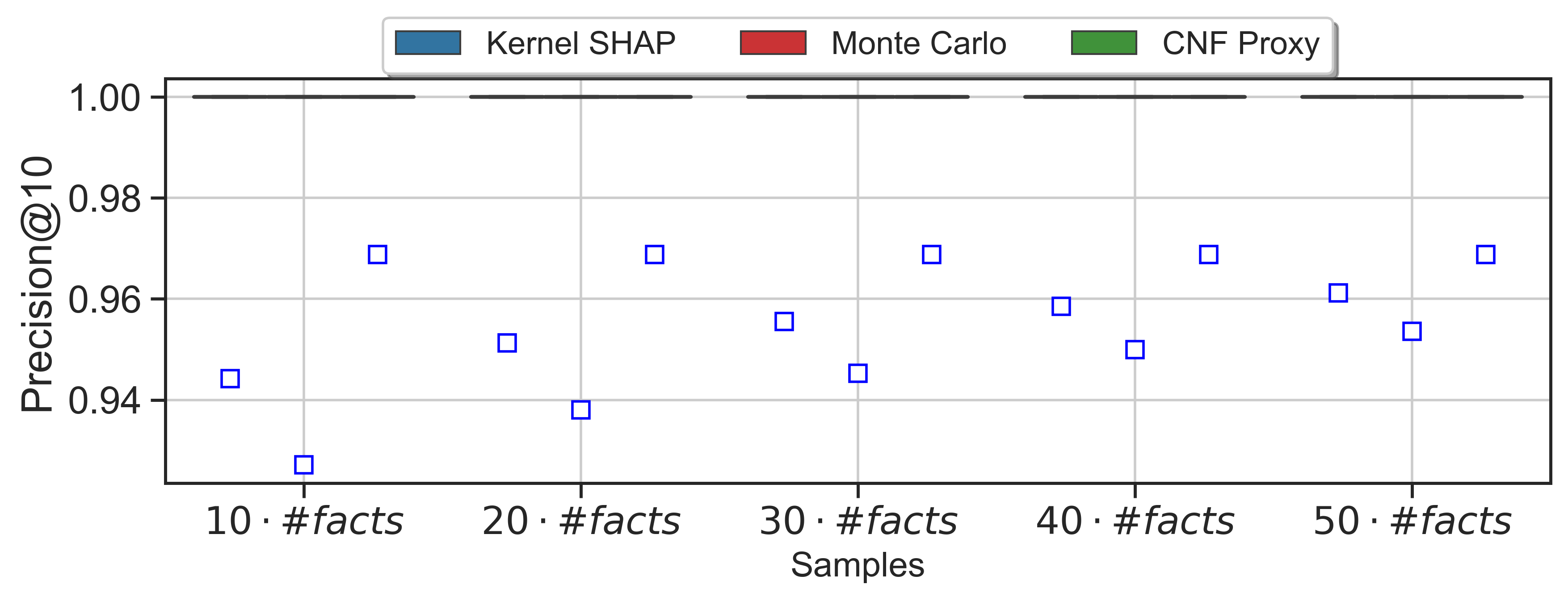}
    \caption{\revision{Precision@10}}
    \label{fig:approximated_precision_at_10}
    \end{subfigure}
    \vspace{-1em}
    \caption{\revision{Comparing various metrics for inexact methods as a function of the sampling budget; \cprox{} does not rely on sampling, thus remains constant for all budgets.}}
    \label{fig:approximated}
\end{figure*}

\begin{table}[t]
    \centering
    \small
    \caption{\revision{Median (resp., mean) performance. \mc{} and \kshap{} use $50\cdot \text{\#facts} $ samples}}
    \label{fig:approximated_table_at_50_samples}
    \vskip-0.5em
    \begin{tabular}{| c | c c c |}
        \hline
         &  \mc{} & \kshap{} & \cprox{}\\
         \hline
         Execution time  & \revision{0.079 (1.875)} & \revision{0.127 (1.978)} & \revision{\textbf{7e-4 (0.002)}}\\
         L1 & \revision{0.439 (0.448)} & \revision{\textbf{0.110 (0.109)}} & \revision{0.317 (0.315)}\\
         L2 & \revision{0.03 (0.034)} & \revision{\textbf{0.001 (0.002)}} & \revision{0.010 (0.014)}\\
         nDCG & \revision{1.0 (0.992)} & \revision{1.0 (0.998)} & \revision{\textbf{1.0 (0.999)}}\\
        %  Precision@3 & 1.0 (0.9882) & 1.0 (0.9951)  & \textbf{1.0 (0.9974)}\\
         Precision@5 & \revision{1.0 (0.955)} & \revision{1.0 (0.961)}  & \revision{\textbf{1.0 (0.989)}}\\
         Precision@10 & \revision{1.0 (0.953)} & \revision{1.0 (0.961)}  & \revision{\textbf{1.0 (0.968)}}\\
         \hline
    \end{tabular}
    \vskip-1em
\end{table}    

\smallskip

\textit{Accuracy metrics}
To evaluate the performance of the above methods we have used various
metrics, specified below. All metrics were computed with respect to the ground
truth values obtained by the knowledge compilation approach\revision{, and thus these experiments are confined to the cases where the exact computation succeeded.}

\begin{compactitem}
\item \paratitle{nDCG} is the normalized discounted cumulative gain score \cite{wang2013theoretical}, used to compare the ordering based on the inexact solution to the ordering based on the ground truth.
\item \paratitle{Precision@k} is the number of facts that appears in the top-$k$ of both the inexact and exact solutions, divided by $k$. This was evaluated for $k \in \{1, 3, 5, 10\}$.
\item \paratitle{L1} and \paratitle{L2} are the mean absolute error and squared error, respectively, of the results of an inexact computation method with respect to the ground truth (i.e. how different the results are from the actual Shapley values).

\end{compactitem}

\begin{figure}[t]
    \begin{subfigure}{.49\linewidth}
    \centering
    \includegraphics[width=\linewidth]{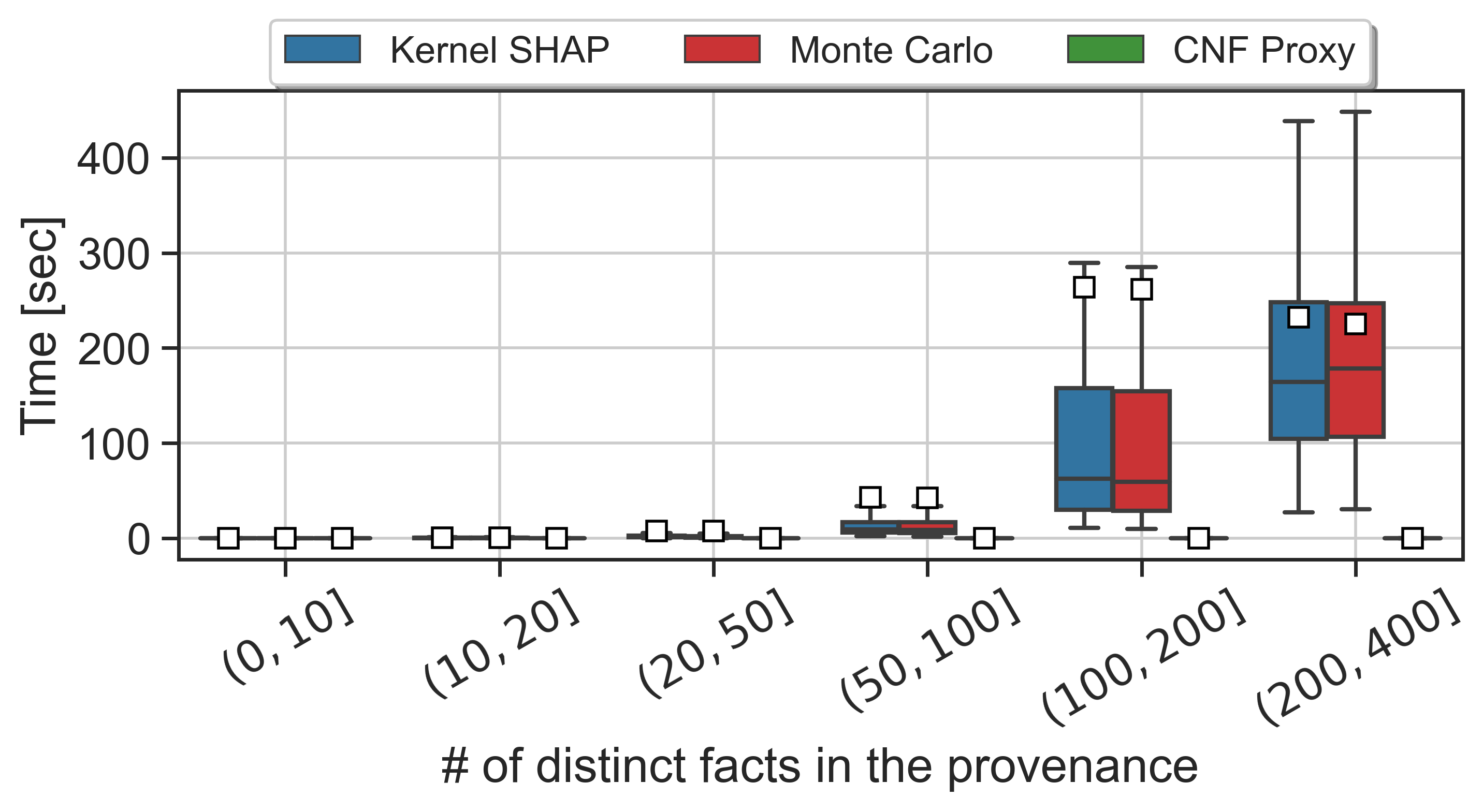}
    \caption{\revision{Running time distribution}}
    \label{fig:approx_methods_running}
    \end{subfigure}
    \begin{subfigure}{.49\linewidth}
    \centering
    \includegraphics[width=\linewidth]{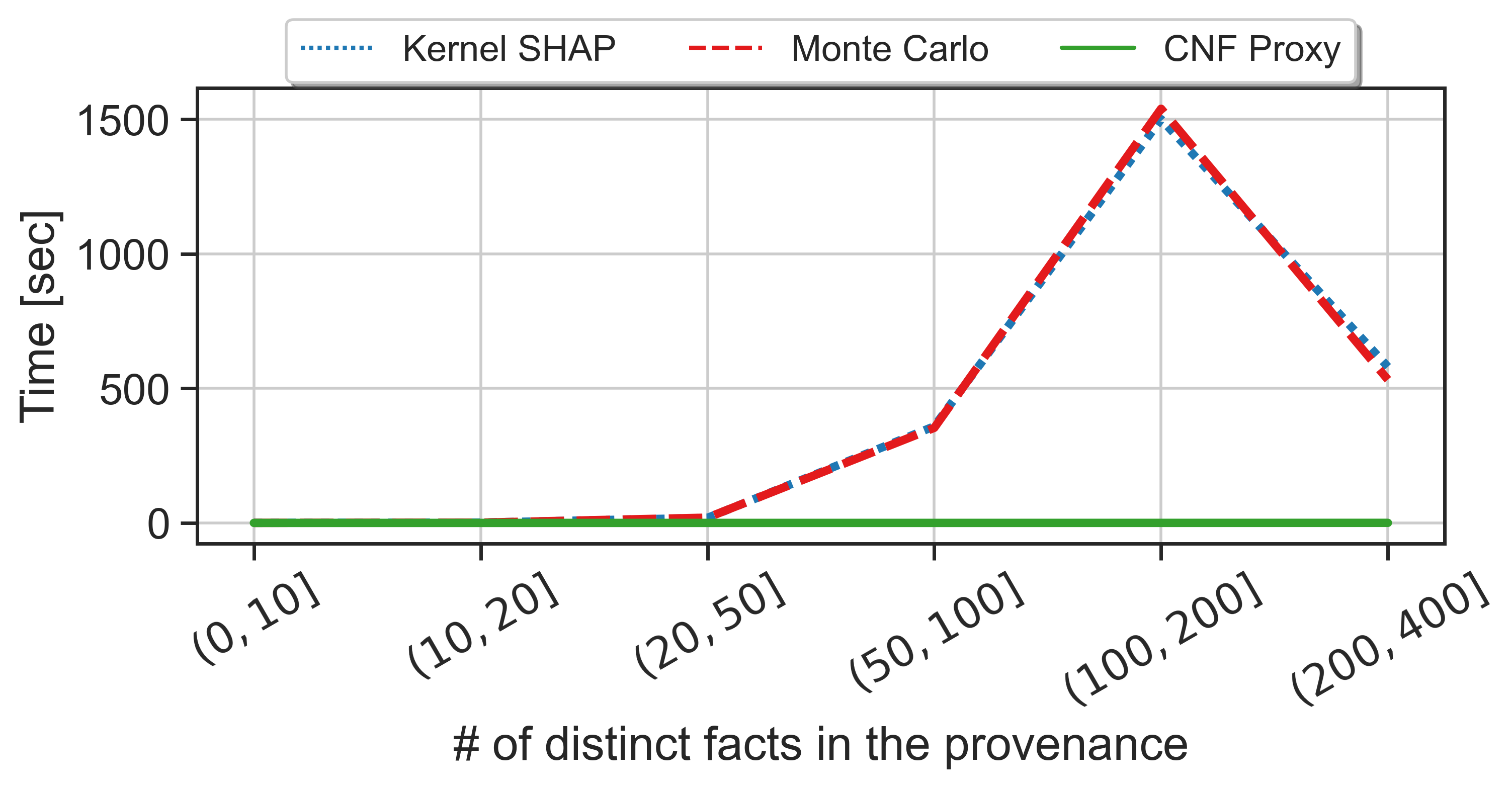}
    \caption{\revision{Worst case running time}}
    \label{fig:approx_methods_running_time_worse_case}
    \end{subfigure}    
    \centering
    \begin{subfigure}{.49\linewidth}
    \centering
    \includegraphics[width=\linewidth]{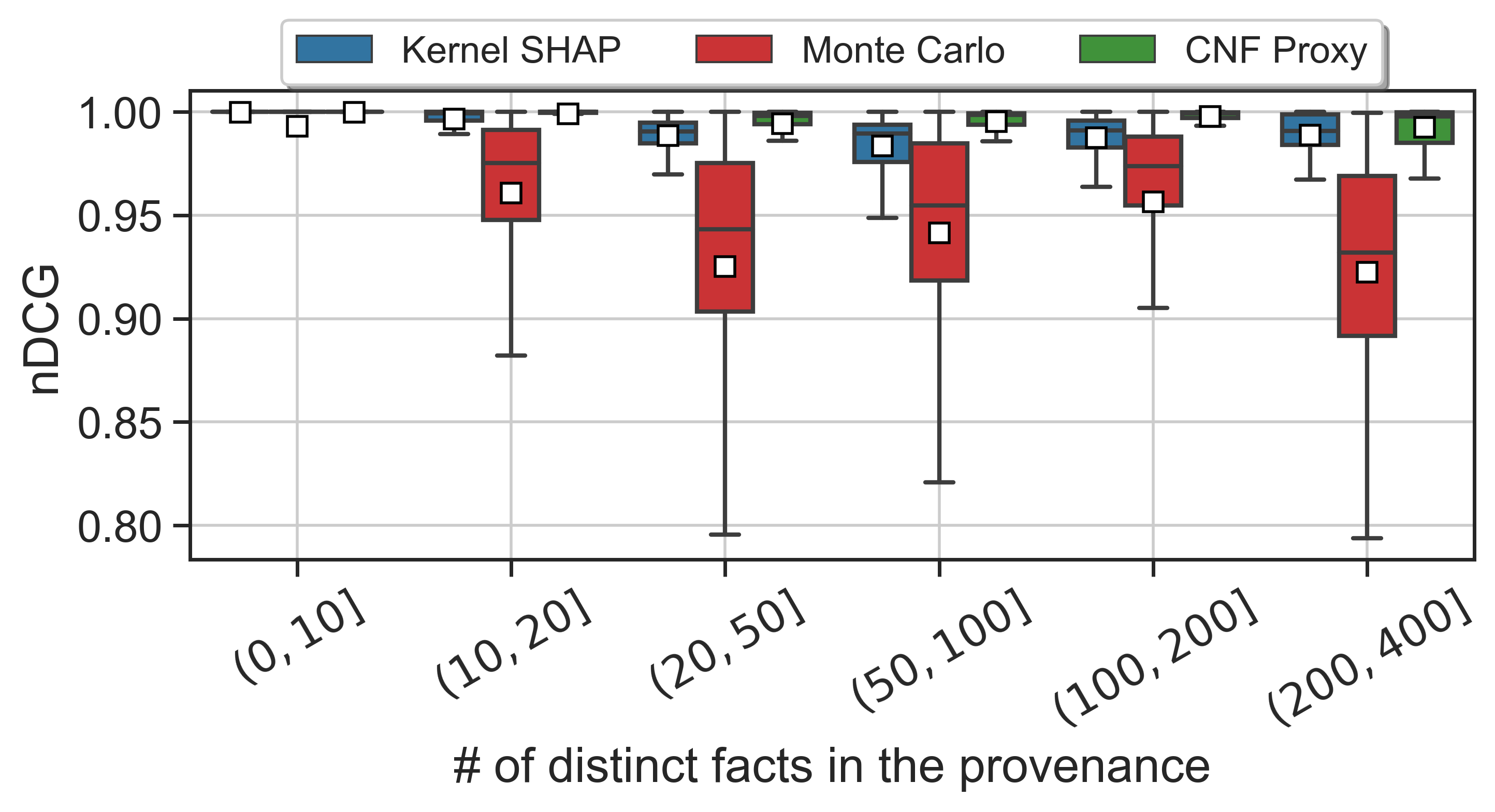}
    \caption{\revision{nDCG Distribution}}
    \label{fig:approx_methods_ndcg}
    \end{subfigure}
    \begin{subfigure}{.49\linewidth}
    \centering
    \includegraphics[width=\linewidth]{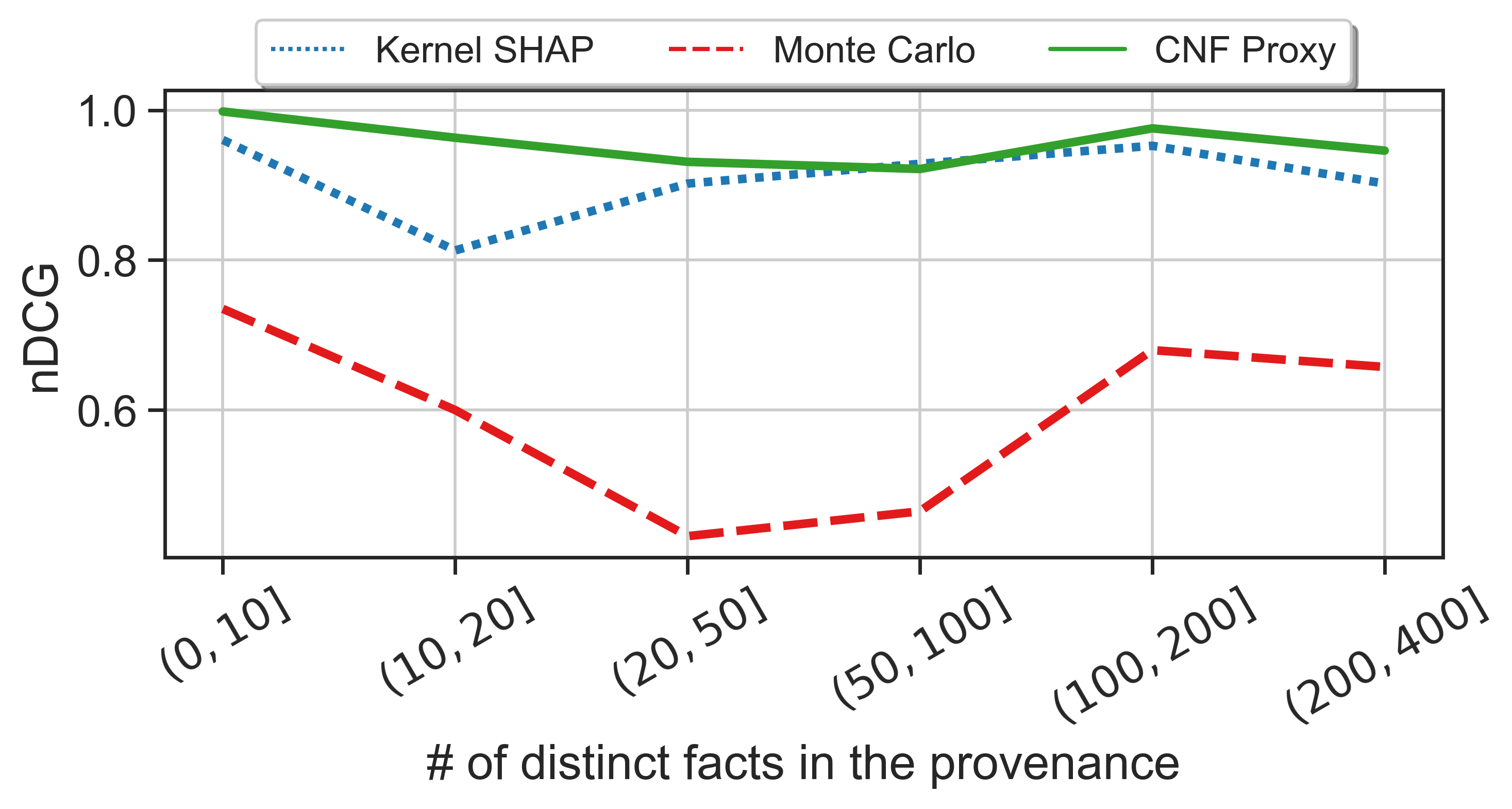}
    \caption{\revision{Worst case nDCG}}
    \label{fig:approx_methods_ndcg_wose_case}
    \end{subfigure}    
    \begin{subfigure}{.49\linewidth}
    \centering
    \includegraphics[width=\linewidth]{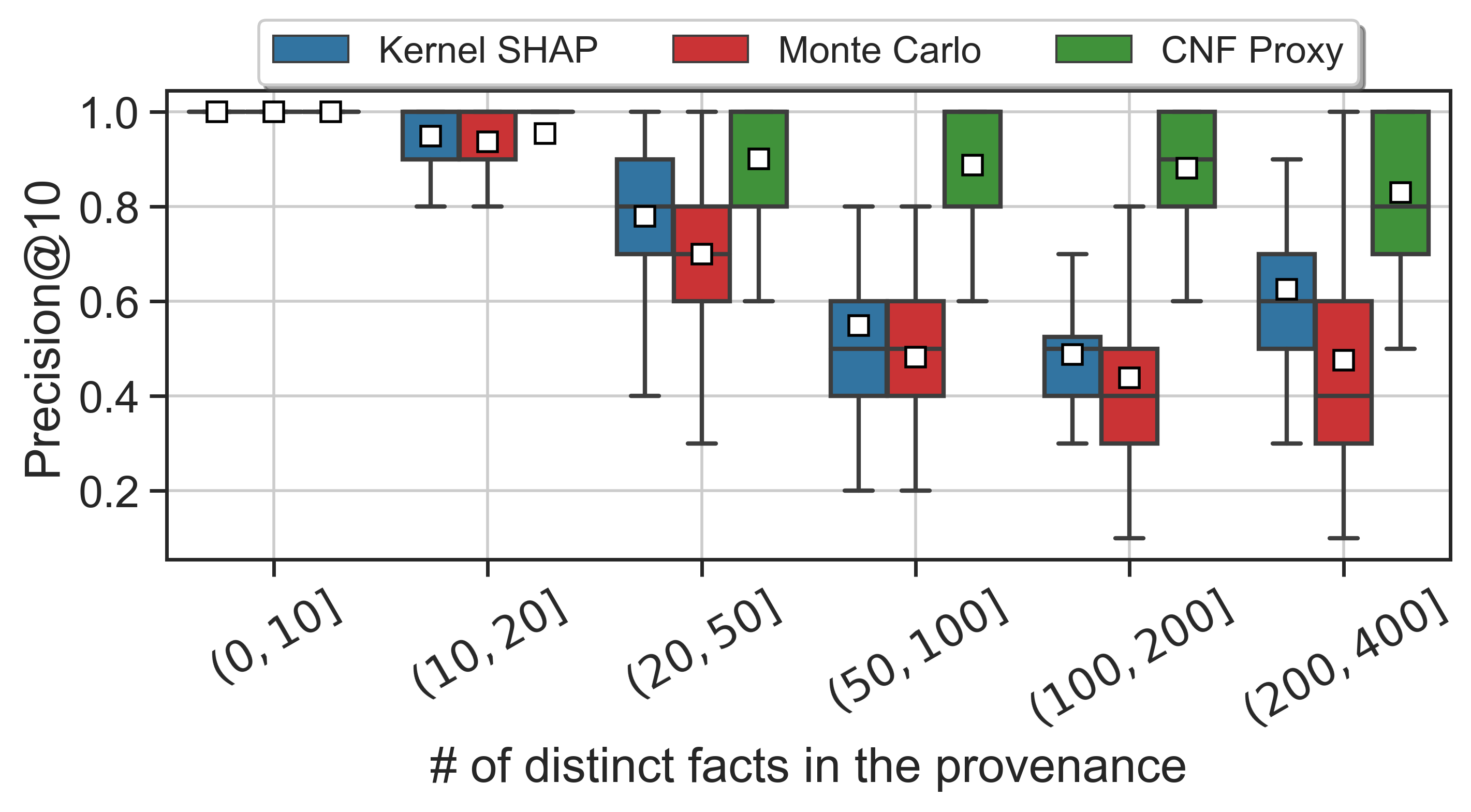}
    \caption{\revision{Precision@10 distribution}}
    \label{fig:approx_methods_precision_at_10}
    \end{subfigure}
    \begin{subfigure}{.49\linewidth}
    \centering
    \includegraphics[width=\linewidth]{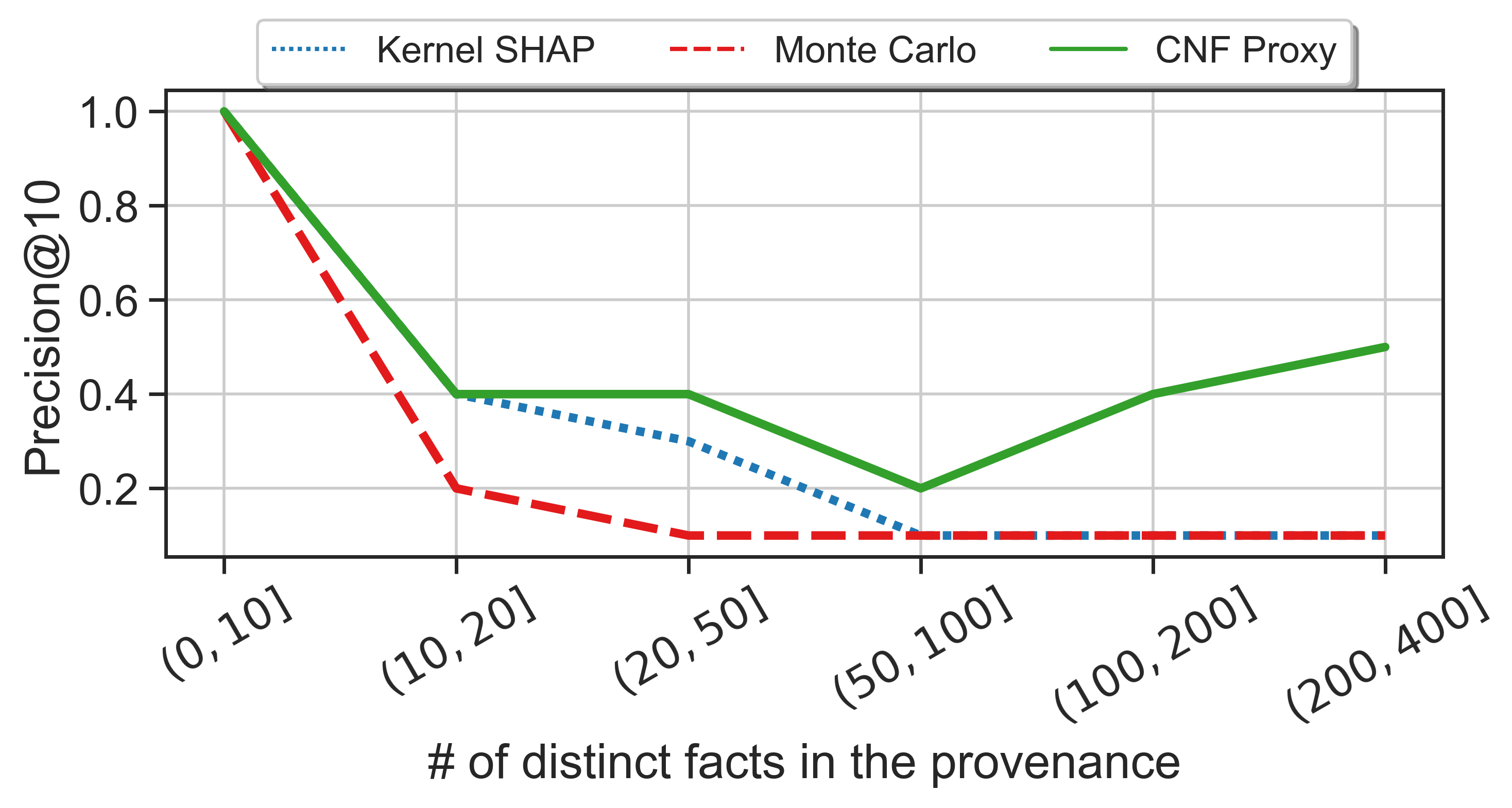}
    \caption{\revision{Worst case Precision@10}}
    \label{fig:approx_methods_precision_at_10_worse_case}
    \end{subfigure}
    \caption{\revision{Comparing inexact methods as a function the number of distinct facts in the provenance ($n$). Sampling methods (\kshap{} and \mc{}) presented with sampling budget of $m=20n$; different budgets led to similar trends.}}
    \label{fig:approx_methods}
    \vspace{-1em}
\end{figure}

\paragraph{Execution time.}
Figure \ref{fig:approximated_execution_time} depicts the execution time of the above methods as a function of the sampling budget. The execution times of \mc{} and \kshap{} are rather similar, while the \cprox{} method (which does not rely on sampling) is substantially faster, with a median of \revision{0.72 milliseconds} and a mean of \revision{2.06 milliseconds} for a single query output.
\revision{Figures~\ref{fig:approx_methods_running} and \ref{fig:approx_methods_running_time_worse_case} depict the distribution and worst-case running times of the three methods (for \mc{} and \kshap{} we used a budget of 20 samples per fact) as function of the number of distinct facts in the provenance expressions. \cprox{} is substantially faster than its competitors: in most cases \cprox{} completes in few milliseconds and 4 seconds in the worst case, whereas \mc{} and \kshap{} median execution time for circuits with 101--200 distinct input facts are 59 and 62 seconds respectively and may take up to 1,539 seconds.}

Recall our analysis of the execution time of the exact solution in Section~\ref{sec:exp_exact}. The computation of Shapley values (KC plus Algorithm~\ref{algo:main}) takes \revision{27,600 seconds} if performed for all output tuples of query q8d in the IMDB dataset; for comparison, the \cprox{} method has (inexactly, see quality analysis below) computed the Shapley value of the proxy functions for all query outputs within \revision{95 seconds, which is 0.3\% of the exact computation time}. Surprisingly, the execution time of the exact computation was comparable to \mc{} and \kshap{}, and even faster for large sampling budgets; For example, \kshap{} with $m=10n$ (where $n$ is the number of distinct facts in the provenance) has completed computation for all output tuples of the query q8d in \revision{16,447 seconds, which is 59\% of the exact computation time}. For $m=50n$, \kshap{} required \revision{58,161 seconds for completion of this computation, which is 210\% of the exact computation time.} This means that using such a sampling budget for \kshap{} is impractical in this setting, and we will use it to obtain upper bounds on \kshap{}'s quality.

\smallskip

\textit{Quality analysis.}
Figures~\ref{fig:approximated_nDCG} and~\ref{fig:approximated_precision_at_10} depict the ranking quality of the above methods as a function of the sampling budget. The rankings were compared for output tuples where the exact computation succeeded, so we have the ground truth. Recall that \cprox{} does not rely on sampling, and so it remains constant throughout the different budgets. 
Naturally, as the sampling budget grows, \mc{} and \kshap{} quality improves. Most notable is \mc{} improvement in terms of nDCG (Figure~\ref{fig:approximated_nDCG}), where its median (resp., mean) nDCG with budget of 10 samples per fact is \revision{0.9669 (resp., 0.9435)}, while with budget of 50 samples per fact it is \revision{1.0 (resp., 0.9923)}. Comparison between the two sampling methods (\mc{} and \kshap{}) reveals that \kshap{} is superior w.r.t all metrics, across the entire budgets range. 
In terms of nDCG,
all methods achieve rather high scores, but \cprox{} performs best. Indeed, the median (resp., mean) nDCG of \cprox{} is \revision{1.0 (resp., 0.9989), while \kshap{} requires 50 samples per fact to get a median (resp., mean) nDCG of 1.0 (resp., 0.9986)}; as previously noted such budget leads to slower execution than that of the exact computation. It is worth mentioning that also when looking at nDCG@k for $k \in \{1, 3, 5, 10\}$, \cprox{} performs better than \mc{} and \kshap{}, even when allowing a budget of~50 samples per fact.
In terms of identifying the top influential facts (i.e., Precision@k), \cprox{} also outperforms the other methods. For example, \cprox{}'s median (resp., mean) Precision@10 are \revision{1.0 (resp., 0.9688)}, while \kshap{} reaches \revision{1.0 (resp., 0.9611)} with~50 samples per fact. Similarly, \cprox{} outperforms \mc{} and \kshap{} in terms of Precision@k for $k\in \{1, 3, 5\}$. 

\revision{
Finally, Table~\ref{fig:approximated_table_at_50_samples} zooms in on the results when fixing the budgets of \mc{} and \kshap{} to 50 samples per fact, which is the highest budget tested (already for this budget, computation of \mc{} and \kshap{} is slower than for the exact algorithm). Note that \cprox{} is much faster then the other methods, while still being superior in terms of ranking (nDCG and Precision@k). As explained above, ranking of facts is indeed the use case we recommend for \cprox{}. Unsurprisingly, \kshap{} achieves better distance from the exact Shapley values (L1 and L2), at the cost of being slower by several orders of magnitude.}

\paragraph{Dependency on the provenance size.}
Figure~\ref{fig:approx_methods} depicts the performance of \mc{}, \kshap{}, and \cprox{} as a function of the number of facts in the provenance expression. \revision{The results are aggregated over all output tuples of all queries.}
Figure~\ref{fig:approx_methods_ndcg} presents the quality of facts ranking (nDCG) as a function of the number of distinct provenance facts. \cprox{} performs the best, and its quality remains steady regardless of the number of facts in the provenance expression. For example, with 1--10 facts the \cprox{} median (resp., mean) nDCG is \revision{1.0 (resp., 0.9999), and with 201--400 facts it is 0.9977 (resp., 0.9924)}. \kshap{} has a minor deterioration, where it drops from median (resp., mean) nDCG of \revision{1.0 (resp., 0.9998)} with 1--10 facts to \revision{0.9906 (resp., 0.9888) with 201--400 facts.} 
\revision{
Figure~\ref{fig:approx_methods_ndcg_wose_case} zooms in on the aggregated results over the worst case expressions of all output tuples for all queries, and shows that even the worst case \cprox{} is superior to the alternatives, and that the error in terms of nDCG is small (0.92 in the worst case evaluated).
}
Figure~\ref{fig:approx_methods_precision_at_10} depicts the dependency of Precision@10 on the number of provenance facts. Both \mc{} and \kshap{} suffer from a massive drop of median (resp., mean) Precision@10, down to \revision{0.4 (resp., 0.476) and 0.6 (resp., 0.6253) respectively with 201--400 facts, while \cprox{} remains at 0.8 (resp., 0.8293) median and mean Precision@10}. A similar trend is observed for Precision@5, whereas for Precision@3 and Precision@1 the drop is less significant.
\revision{
Here again, Figure~\ref{fig:approx_methods_precision_at_10_worse_case} zooms in on the worst case and again shows the superiority of \cprox{}.
}
\subsection{\revision{Hybrid computation}}
\label{sec:exp_hybrid}

Recall that in Section~\ref{sec:exp_exact} we have measured the success rate of
the exact computation: for IMDB the success rate was
99.96\% and for TPC-H the success rate was 84.43\%. We saw in Section~\ref{sec:exp_approx} that the
inexact method \cprox{} is very efficient and that the ranking of tuples based on \cprox{}
is typically close to the ranking obtained based on the real Shapley values.

In this section we consider a hybrid approach that works
as follows. First, we start by running the exact computation, that is, the
knowledge compilation step and Algorithm~\ref{algo:main}.  If the exact
computation completes successfully within less than $t$ seconds (where $t$ is configurable) we return its
result. Otherwise, we terminate the exact computation and execute the inexact
method \cprox{}, returning only a ranking of input tuples rather than their Shapley values.  Figure~\ref{fig:success_rate} depicts the success rate of the
exact computation given different timeouts. Note that given a timeout of 2.5
seconds, the exact computation succeed for 98.67\% of the IMDB output tuples, and
83.83\% for TPCH-H. Increasing the timeout has a rather minor impact on
the success rate: having a 15 seconds timeout increases the success rate for IMDB to
99.52\%, while the success rate for TPC-H remains unchanged.
Recall that having a timeout of one hour results in success rates of 99.96\%
and 84.43\% for IMDB and TPC-H respectively.
Figure~\ref{fig:hybrid_total_time} depicts the mean execution time of the hybrid
approach as a function of the chosen timeout $t$. Observe that given a timeout of 2.5
seconds the mean hybrid execution time is 0.31 seconds and 0.67 seconds for
IMDB and TPC-H respectively. The mean execution time of the hybrid approach grows very moderately w.r.t. the timeout for IMDB (since most cases do not reach the timeout); it grows faster for TPC-H, where the difficult cases for which timeout is reached have a more significant effect on the overall mean execution time.  

\begin{figure}[t]
    \centering
    \begin{subfigure}{.48\linewidth}
    \centering
    \includegraphics[width=\linewidth]{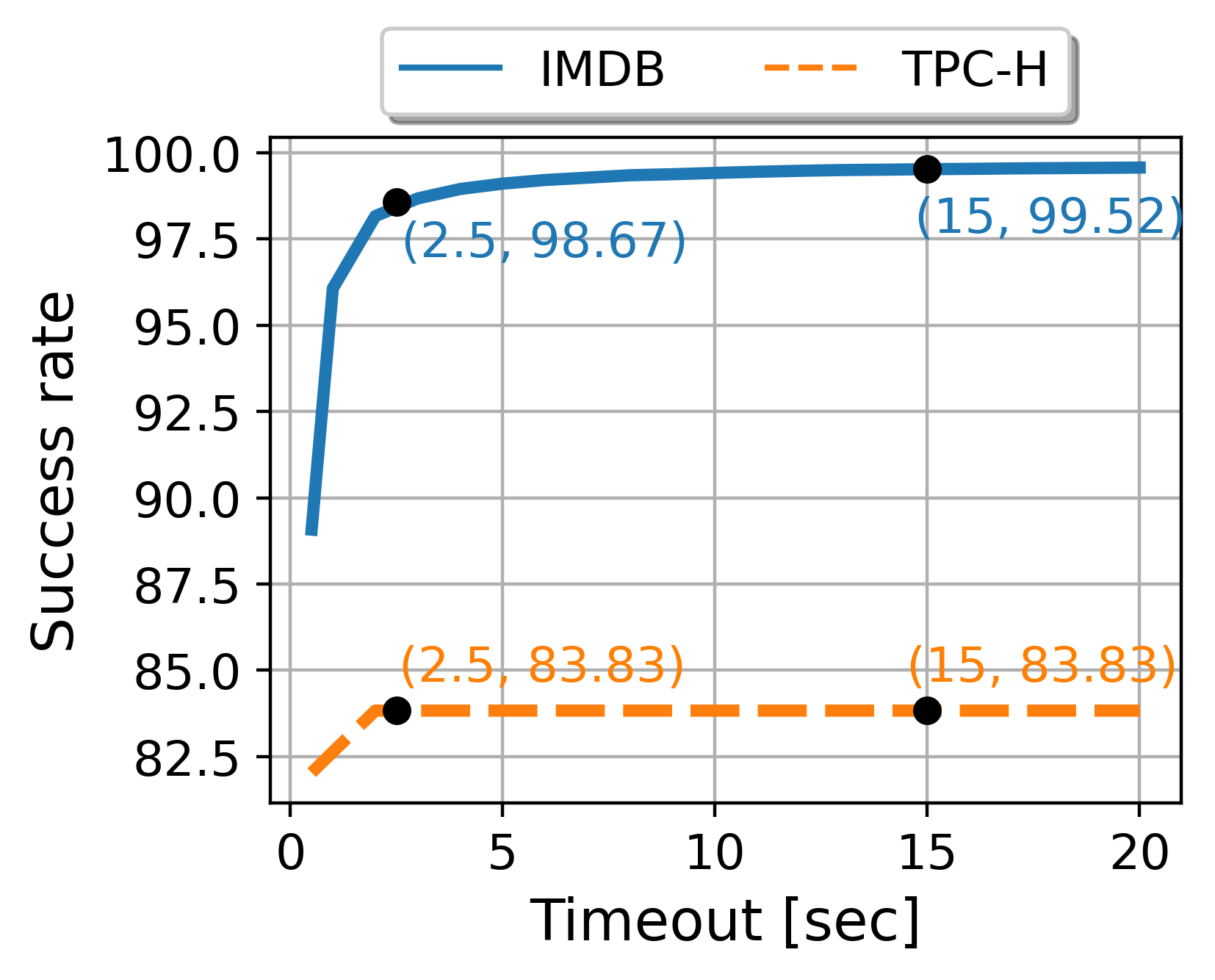}
    \caption{\revision{Exact computation success rate as function of the computation timeout\\}}
    \label{fig:success_rate}
    \end{subfigure}
    \hfill
    \begin{subfigure}{.48\linewidth}
    \centering
    \includegraphics[width=\linewidth]{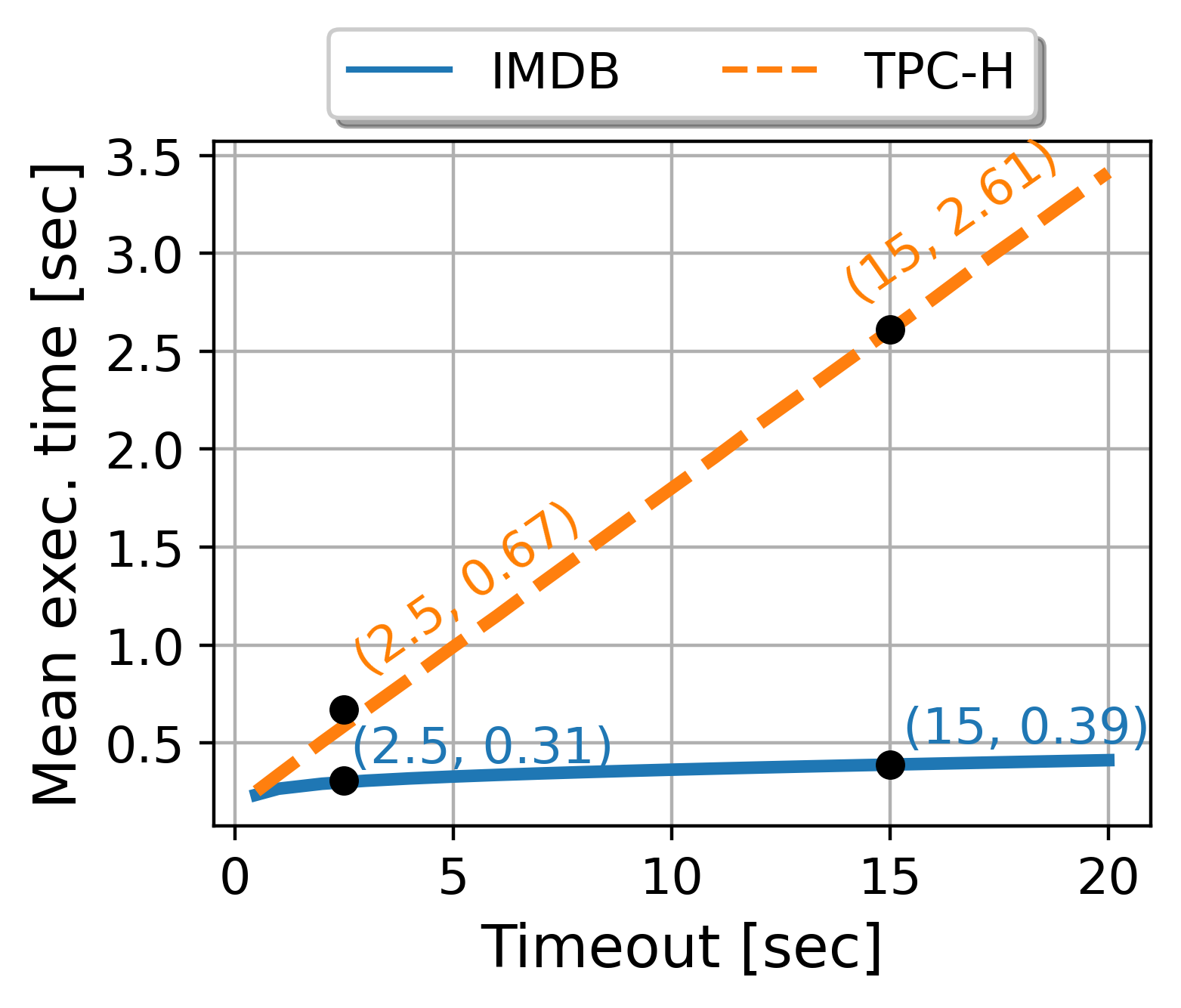}
    \caption{\revision{Mean execution time of the hybrid approach as a function of the computation timeout}}
    \label{fig:hybrid_total_time}
    \end{subfigure}   
    \vspace{-0.5em}
    \caption{\revision{Hybrid approach performance}}
    \label{fig:hybrid_approach}
    \vspace{-1.5em}
\end{figure}

\smallskip

\textit{Main conclusions.\,} Our experimental results indicate that for most output tuples (98.67\% for IMDB and 83.83\% for TPC-H) exact computation of Shapley values terminates within 2.5 seconds. 
When it does not, we propose an alternative of ranking facts according to \cprox{} values, which typically only takes several milliseconds (up to 4 seconds for an outlier case). Experimental evidence  shows that the ranking obtained via \cprox{} is both much faster to compute and more accurate (in terms of nDCG and Precision@k, measured in cases where exact computation does succeed, and so we have the ground truth) than the alternative of only using sampling to approximate the actual Shapley values.    

\vspace{-1mm}
\section{Conclusion}
\label{sec:conc}

We have proposed in this paper a first practical framework for computing the
contributions of database facts in query answering, quantified through Shapley
values. The framework includes an exact algorithm that computes the
contribution of input facts, and a faster algorithm that is practically
effective in ranking contributions of input facts, while producing inexact
Shapley values. Our practical implementation is currently designed for SPJU
queries
(that is, to the class of queries supported by ProvSQL).  In addition to these
practical contributions, we have also established a theoretical connection
between the problem of computing Shapley values and that of probabilistic query
evaluation, by showing that, for every query, the former can be reduced in
polynomial time to the latter.  We leave it open to determine whether there is
also a reduction in the other direction (Open Problem~\ref{open:pqe-to-shap}).
Other interesting directions would be to study further constructs such as
aggregates and negation, or to extend the framework to bag
semantics. Concerning bag semantics we observe that, by differentiating each copy of a same
tuple in a bag database (for instance, adding an identifier attribute), our
framework can be used as-is.  Nevertheless, it would be
interesting to see how one could adapt the definitions in order to consider
fact multiplicities in a more elaborate way.

\begin{acks}
This research has been partially funded by the European Research Council (ERC) under the European Union’s Horizon
2020 research and innovation programme (Grant agreement No. 804302).
The work of Benny Kimelfeld was supported by the Israel Science Foundation (ISF), Grant 768/19, and the German Research Foundation (DFG) Project 412400621 (DIP program).
\end{acks}

\bibliographystyle{ACM-Reference-Format}
\bibliography{sample}

%%% -*-BibTeX-*-
%%% Do NOT edit. File created by BibTeX with style
%%% ACM-Reference-Format-Journals [18-Jan-2012].

\begin{thebibliography}{39}

%%% ====================================================================
%%% NOTE TO THE USER: you can override these defaults by providing
%%% customized versions of any of these macros before the \bibliography
%%% command.  Each of them MUST provide its own final punctuation,
%%% except for \shownote{}, \showDOI{}, and \showURL{}.  The latter two
%%% do not use final punctuation, in order to avoid confusing it with
%%% the Web address.
%%%
%%% To suppress output of a particular field, define its macro to expand
%%% to an empty string, or better, \unskip, like this:
%%%
%%% \newcommand{\showDOI}[1]{\unskip}   % LaTeX syntax
%%%
%%% \def \showDOI #1{\unskip}           % plain TeX syntax
%%%
%%% ====================================================================

\ifx \showCODEN    \undefined \def \showCODEN     #1{\unskip}     \fi
\ifx \showDOI      \undefined \def \showDOI       #1{#1}\fi
\ifx \showISBNx    \undefined \def \showISBNx     #1{\unskip}     \fi
\ifx \showISBNxiii \undefined \def \showISBNxiii  #1{\unskip}     \fi
\ifx \showISSN     \undefined \def \showISSN      #1{\unskip}     \fi
\ifx \showLCCN     \undefined \def \showLCCN      #1{\unskip}     \fi
\ifx \shownote     \undefined \def \shownote      #1{#1}          \fi
\ifx \showarticletitle \undefined \def \showarticletitle #1{#1}   \fi
\ifx \showURL      \undefined \def \showURL       {\relax}        \fi
% The following commands are used for tagged output and should be
% invisible to TeX
\providecommand\bibfield[2]{#2}
\providecommand\bibinfo[2]{#2}
\providecommand\natexlab[1]{#1}
\providecommand\showeprint[2][]{arXiv:#2}

\bibitem[\protect\citeauthoryear{Abiteboul, Hull, and Vianu}{Abiteboul
  et~al\mbox{.}}{1995}]%
        {abiteboul1995foundations}
\bibfield{author}{\bibinfo{person}{Serge Abiteboul}, \bibinfo{person}{Richard
  Hull}, {and} \bibinfo{person}{Victor Vianu}.}
  \bibinfo{year}{1995}\natexlab{}.
\newblock \bibinfo{booktitle}{\emph{Foundations of Databases}}.
  Vol.~\bibinfo{volume}{8}.
\newblock \bibinfo{publisher}{Addison-Wesley Reading}.
\newblock
\urldef\tempurl%
\url{http://webdam.inria.fr/Alice/}
\showURL{%
\tempurl}


\bibitem[\protect\citeauthoryear{Arenas, Barcel{\'o}, Bertossi, and
  Monet}{Arenas et~al\mbox{.}}{2021a}]%
        {arenas2021complexity}
\bibfield{author}{\bibinfo{person}{Marcelo Arenas}, \bibinfo{person}{Pablo
  Barcel{\'o}}, \bibinfo{person}{Leopoldo Bertossi}, {and}
  \bibinfo{person}{Mika{\"e}l Monet}.} \bibinfo{year}{2021}\natexlab{a}.
\newblock \showarticletitle{On the complexity of SHAP-score-based explanations:
  Tractability via knowledge compilation and non-approximability results}.
\newblock \bibinfo{journal}{\emph{arXiv preprint}} (\bibinfo{year}{2021}).
\newblock
\urldef\tempurl%
\url{https://arxiv.org/abs/2104.08015}
\showURL{%
\tempurl}


\bibitem[\protect\citeauthoryear{Arenas, Barcel\'o, Bertossi, and Monet}{Arenas
  et~al\mbox{.}}{2021b}]%
        {arenas2021tractability}
\bibfield{author}{\bibinfo{person}{Marcelo Arenas}, \bibinfo{person}{Pablo
  Barcel\'o}, \bibinfo{person}{Leopoldo Bertossi}, {and}
  \bibinfo{person}{Mika\"el Monet}.} \bibinfo{year}{2021}\natexlab{b}.
\newblock \showarticletitle{The tractability of {SHAP}-score-based explanations
  over deterministic and decomposable Boolean circuits}. In
  \bibinfo{booktitle}{\emph{Proceedings of AAAI}}.
\newblock
\urldef\tempurl%
\url{https://arxiv.org/abs/2007.14045}
\showURL{%
\tempurl}


\bibitem[\protect\citeauthoryear{Buneman, Cheney, Tan, and Vansummeren}{Buneman
  et~al\mbox{.}}{2008}]%
        {Buneman2008}
\bibfield{author}{\bibinfo{person}{Peter Buneman}, \bibinfo{person}{James
  Cheney}, \bibinfo{person}{Wang-Chiew Tan}, {and} \bibinfo{person}{Stijn
  Vansummeren}.} \bibinfo{year}{2008}\natexlab{}.
\newblock \showarticletitle{Curated databases}. In
  \bibinfo{booktitle}{\emph{Proceedings of PODS}}. \bibinfo{pages}{1--12}.
\newblock
\urldef\tempurl%
\url{https://homepages.inf.ed.ac.uk/opb/papers/inv.pdf}
\showURL{%
\tempurl}


\bibitem[\protect\citeauthoryear{Buneman, Khanna, and Wang-Chiew}{Buneman
  et~al\mbox{.}}{2001}]%
        {whyprovenance}
\bibfield{author}{\bibinfo{person}{Peter Buneman}, \bibinfo{person}{Sanjeev
  Khanna}, {and} \bibinfo{person}{Tan Wang-Chiew}.}
  \bibinfo{year}{2001}\natexlab{}.
\newblock \showarticletitle{Why and where: A characterization of data
  provenance}. In \bibinfo{booktitle}{\emph{{ICDT}}}. Springer,
  \bibinfo{pages}{316--330}.
\newblock
\urldef\tempurl%
\url{https://repository.upenn.edu/cgi/viewcontent.cgi?article=1209&context=cis_papers}
\showURL{%
\tempurl}


\bibitem[\protect\citeauthoryear{Cui, Widom, and Wiener}{Cui
  et~al\mbox{.}}{2000}]%
        {cuiweinerwidom}
\bibfield{author}{\bibinfo{person}{Yingwei Cui}, \bibinfo{person}{Jennifer
  Widom}, {and} \bibinfo{person}{Janet~L Wiener}.}
  \bibinfo{year}{2000}\natexlab{}.
\newblock \showarticletitle{Tracing the lineage of view data in a warehousing
  environment}.
\newblock \bibinfo{journal}{\emph{ACM Transactions on Database Systems (TODS)}}
  \bibinfo{volume}{25}, \bibinfo{number}{2} (\bibinfo{year}{2000}),
  \bibinfo{pages}{179--227}.
\newblock
\urldef\tempurl%
\url{http://ilpubs.stanford.edu:8090/252/1/1997-3.pdf}
\showURL{%
\tempurl}


\bibitem[\protect\citeauthoryear{Dalvi and Suciu}{Dalvi and Suciu}{2007}]%
        {dalvi2007efficient}
\bibfield{author}{\bibinfo{person}{Nilesh Dalvi} {and} \bibinfo{person}{Dan
  Suciu}.} \bibinfo{year}{2007}\natexlab{}.
\newblock \showarticletitle{Efficient query evaluation on probabilistic
  databases}.
\newblock \bibinfo{journal}{\emph{VLDB J.}} \bibinfo{volume}{16},
  \bibinfo{number}{4} (\bibinfo{year}{2007}), \bibinfo{pages}{523--544}.
\newblock
\urldef\tempurl%
\url{https://homes.cs.washington.edu/~suciu/vldbj-probdb.pdf}
\showURL{%
\tempurl}


\bibitem[\protect\citeauthoryear{Dalvi and Suciu}{Dalvi and Suciu}{2013}]%
        {dalvi2013dichotomy}
\bibfield{author}{\bibinfo{person}{Nilesh Dalvi} {and} \bibinfo{person}{Dan
  Suciu}.} \bibinfo{year}{2013}\natexlab{}.
\newblock \showarticletitle{The dichotomy of probabilistic inference for unions
  of conjunctive queries}.
\newblock \bibinfo{journal}{\emph{Journal of the ACM (JACM)}}
  \bibinfo{volume}{59}, \bibinfo{number}{6} (\bibinfo{year}{2013}),
  \bibinfo{pages}{1--87}.
\newblock
\urldef\tempurl%
\url{https://homes.cs.washington.edu/~suciu/jacm-dichotomy.pdf}
\showURL{%
\tempurl}


\bibitem[\protect\citeauthoryear{Darwiche}{Darwiche}{2001}]%
        {darwiche2001tractability}
\bibfield{author}{\bibinfo{person}{Adnan Darwiche}.}
  \bibinfo{year}{2001}\natexlab{}.
\newblock \showarticletitle{On the tractable counting of theory models and its
  application to truth maintenance and belief revision}.
\newblock \bibinfo{journal}{\emph{J. Applied Non-Classical Logics}}
  \bibinfo{volume}{11}, \bibinfo{number}{1-2} (\bibinfo{year}{2001}).
\newblock
\urldef\tempurl%
\url{https://arxiv.org/abs/cs/0003044}
\showURL{%
\tempurl}


\bibitem[\protect\citeauthoryear{Darwiche}{Darwiche}{2004}]%
        {darwiche2004new}
\bibfield{author}{\bibinfo{person}{Adnan Darwiche}.}
  \bibinfo{year}{2004}\natexlab{}.
\newblock \showarticletitle{New advances in compiling CNF to decomposable
  negation normal form}. In \bibinfo{booktitle}{\emph{Proceedings of ECAI}}.
  Citeseer, \bibinfo{pages}{328--332}.
\newblock
\urldef\tempurl%
\url{http://citeseerx.ist.psu.edu/viewdoc/summary?doi=10.1.1.178.2262}
\showURL{%
\tempurl}


\bibitem[\protect\citeauthoryear{Darwiche and Marquis}{Darwiche and
  Marquis}{2002}]%
        {darwiche2002knowledge}
\bibfield{author}{\bibinfo{person}{Adnan Darwiche} {and}
  \bibinfo{person}{Pierre Marquis}.} \bibinfo{year}{2002}\natexlab{}.
\newblock \showarticletitle{A knowledge compilation map}.
\newblock \bibinfo{journal}{\emph{Journal of Artificial Intelligence Research}}
   \bibinfo{volume}{17} (\bibinfo{year}{2002}), \bibinfo{pages}{229--264}.
\newblock
\urldef\tempurl%
\url{https://arxiv.org/abs/1106.1819}
\showURL{%
\tempurl}


\bibitem[\protect\citeauthoryear{Deutch, Frost, and Gilad}{Deutch
  et~al\mbox{.}}{2020}]%
        {deutch2020explaining}
\bibfield{author}{\bibinfo{person}{Daniel Deutch}, \bibinfo{person}{Nave
  Frost}, {and} \bibinfo{person}{Amir Gilad}.} \bibinfo{year}{2020}\natexlab{}.
\newblock \showarticletitle{Explaining natural language query results}.
\newblock \bibinfo{journal}{\emph{The VLDB Journal}} \bibinfo{volume}{29},
  \bibinfo{number}{1} (\bibinfo{year}{2020}), \bibinfo{pages}{485--508}.
\newblock
\urldef\tempurl%
\url{https://arxiv.org/abs/2007.04454}
\showURL{%
\tempurl}


\bibitem[\protect\citeauthoryear{Deutch, Frost, Kimelfeld, and Monet}{Deutch
  et~al\mbox{.}}{2021}]%
        {frost2021github}
\bibfield{author}{\bibinfo{person}{Daniel Deutch}, \bibinfo{person}{Nave
  Frost}, \bibinfo{person}{Benny Kimelfeld}, {and} \bibinfo{person}{Mika{\"e}l
  Monet}.} \bibinfo{year}{2021}\natexlab{}.
\newblock \bibinfo{title}{Shapley for database facts source code}.
\newblock
  \bibinfo{howpublished}{\url{https://github.com/navefr/ShapleyForDbFacts}}.
\newblock


\bibitem[\protect\citeauthoryear{Green, Karvounarakis, and Tannen}{Green
  et~al\mbox{.}}{2007}]%
        {green2007provenance}
\bibfield{author}{\bibinfo{person}{Todd~J Green}, \bibinfo{person}{Grigoris
  Karvounarakis}, {and} \bibinfo{person}{Val Tannen}.}
  \bibinfo{year}{2007}\natexlab{}.
\newblock \showarticletitle{Provenance semirings}. In
  \bibinfo{booktitle}{\emph{Proceedings of PODS}}. \bibinfo{pages}{31--40}.
\newblock
\urldef\tempurl%
\url{https://repository.upenn.edu/cgi/viewcontent.cgi?article=1022&context=db_research}
\showURL{%
\tempurl}


\bibitem[\protect\citeauthoryear{Green and Tannen}{Green and Tannen}{2017}]%
        {DBLP:conf/pods/GreenT17}
\bibfield{author}{\bibinfo{person}{Todd~J Green} {and} \bibinfo{person}{Val
  Tannen}.} \bibinfo{year}{2017}\natexlab{}.
\newblock \showarticletitle{The semiring framework for database provenance}. In
  \bibinfo{booktitle}{\emph{Proceedings of PODS}}. \bibinfo{pages}{93--99}.
\newblock
\urldef\tempurl%
\url{https://dl.acm.org/doi/10.1145/3034786.3056125}
\showURL{%
\tempurl}


\bibitem[\protect\citeauthoryear{Hunter and Konieczny}{Hunter and
  Konieczny}{2010}]%
        {DBLP:journals/ai/HunterK10}
\bibfield{author}{\bibinfo{person}{Anthony Hunter} {and}
  \bibinfo{person}{S{\'e}bastien Konieczny}.} \bibinfo{year}{2010}\natexlab{}.
\newblock \showarticletitle{On the measure of conflicts: Shapley inconsistency
  values}.
\newblock \bibinfo{journal}{\emph{Artificial Intelligence}}
  \bibinfo{volume}{174}, \bibinfo{number}{14} (\bibinfo{year}{2010}),
  \bibinfo{pages}{1007--1026}.
\newblock
\urldef\tempurl%
\url{http://www.cril.univ-artois.fr/~konieczny/papers/aij10a.pdf}
\showURL{%
\tempurl}


\bibitem[\protect\citeauthoryear{Imielinski and Jr.}{Imielinski and
  Jr.}{1984}]%
        {imielinski84Incomplete}
\bibfield{author}{\bibinfo{person}{Tomasz Imielinski} {and}
  \bibinfo{person}{Witold~Lipski Jr.}} \bibinfo{year}{1984}\natexlab{}.
\newblock \showarticletitle{Incomplete Information in Relational Databases}.
\newblock \bibinfo{journal}{\emph{J. {ACM}}} \bibinfo{volume}{31},
  \bibinfo{number}{4} (\bibinfo{year}{1984}), \bibinfo{pages}{761--791}.
\newblock
\urldef\tempurl%
\url{https://doi.org/10.1145/1634.1886}
\showDOI{\tempurl}


\bibitem[\protect\citeauthoryear{Jha and Suciu}{Jha and Suciu}{2013}]%
        {jha2013knowledge}
\bibfield{author}{\bibinfo{person}{Abhay Jha} {and} \bibinfo{person}{Dan
  Suciu}.} \bibinfo{year}{2013}\natexlab{}.
\newblock \showarticletitle{Knowledge compilation meets database theory:
  compiling queries to decision diagrams}.
\newblock \bibinfo{journal}{\emph{Theory of Computing Systems}}
  \bibinfo{volume}{52}, \bibinfo{number}{3} (\bibinfo{year}{2013}),
  \bibinfo{pages}{403--440}.
\newblock
\urldef\tempurl%
\url{https://link.springer.com/article/10.1007/s00224-012-9392-5}
\showURL{%
\tempurl}


\bibitem[\protect\citeauthoryear{Leis, Gubichev, Mirchev, Boncz, Kemper, and
  Neumann}{Leis et~al\mbox{.}}{2015}]%
        {leis2015good}
\bibfield{author}{\bibinfo{person}{Viktor Leis}, \bibinfo{person}{Andrey
  Gubichev}, \bibinfo{person}{Atanas Mirchev}, \bibinfo{person}{Peter Boncz},
  \bibinfo{person}{Alfons Kemper}, {and} \bibinfo{person}{Thomas Neumann}.}
  \bibinfo{year}{2015}\natexlab{}.
\newblock \showarticletitle{How good are query optimizers, really?}
\newblock \bibinfo{journal}{\emph{Proceedings of the VLDB Endowment}}
  \bibinfo{volume}{9}, \bibinfo{number}{3} (\bibinfo{year}{2015}),
  \bibinfo{pages}{204--215}.
\newblock
\urldef\tempurl%
\url{https://www.vldb.org/pvldb/vol9/p204-leis.pdf}
\showURL{%
\tempurl}


\bibitem[\protect\citeauthoryear{Livshits, Bertossi, Kimelfeld, and
  Sebag}{Livshits et~al\mbox{.}}{2020}]%
        {DBLP:conf/icdt/LivshitsBKS20}
\bibfield{author}{\bibinfo{person}{Ester Livshits},
  \bibinfo{person}{Leopoldo~E. Bertossi}, \bibinfo{person}{Benny Kimelfeld},
  {and} \bibinfo{person}{Moshe Sebag}.} \bibinfo{year}{2020}\natexlab{}.
\newblock \showarticletitle{The Shapley value of tuples in query answering}. In
  \bibinfo{booktitle}{\emph{{ICDT}}}, Vol.~\bibinfo{volume}{155}.
  \bibinfo{publisher}{Schloss Dagstuhl}, \bibinfo{pages}{20:1--20:19}.
\newblock
\urldef\tempurl%
\url{https://arxiv.org/abs/1904.08679}
\showURL{%
\tempurl}


\bibitem[\protect\citeauthoryear{Lundberg, Erion, Chen, DeGrave, Prutkin, Nair,
  Katz, Himmelfarb, Bansal, and Lee}{Lundberg et~al\mbox{.}}{2020}]%
        {lundberg2020local}
\bibfield{author}{\bibinfo{person}{Scott~M Lundberg}, \bibinfo{person}{Gabriel
  Erion}, \bibinfo{person}{Hugh Chen}, \bibinfo{person}{Alex DeGrave},
  \bibinfo{person}{Jordan~M Prutkin}, \bibinfo{person}{Bala Nair},
  \bibinfo{person}{Ronit Katz}, \bibinfo{person}{Jonathan Himmelfarb},
  \bibinfo{person}{Nisha Bansal}, {and} \bibinfo{person}{Su-In Lee}.}
  \bibinfo{year}{2020}\natexlab{}.
\newblock \showarticletitle{From local explanations to global understanding
  with explainable AI for trees}.
\newblock \bibinfo{journal}{\emph{Nature machine intelligence}}
  \bibinfo{volume}{2}, \bibinfo{number}{1} (\bibinfo{year}{2020}),
  \bibinfo{pages}{2522--5839}.
\newblock
\urldef\tempurl%
\url{https://arxiv.org/pdf/1905.04610.pdf}
\showURL{%
\tempurl}


\bibitem[\protect\citeauthoryear{Lundberg and Lee}{Lundberg and Lee}{2017}]%
        {lundberg2017unified}
\bibfield{author}{\bibinfo{person}{Scott~M Lundberg} {and}
  \bibinfo{person}{Su-In Lee}.} \bibinfo{year}{2017}\natexlab{}.
\newblock \showarticletitle{A unified approach to interpreting model
  predictions}. In \bibinfo{booktitle}{\emph{Advances in neural information
  processing systems}}. \bibinfo{pages}{4765--4774}.
\newblock
\urldef\tempurl%
\url{http://papers.nips.cc/paper/7062-a-unified-approach-to-interpreting-model-predictions.pdf}
\showURL{%
\tempurl}


\bibitem[\protect\citeauthoryear{Mann and Shapley}{Mann and Shapley}{1960}]%
        {mann1960values}
\bibfield{author}{\bibinfo{person}{Irwin Mann} {and} \bibinfo{person}{LS
  Shapley}.} \bibinfo{year}{1960}\natexlab{}.
\newblock \showarticletitle{Values for large games, IV: Evaluating the
  Electoral College by Monte Carlo Techniques. The Rand Corporation}.
\newblock \bibinfo{journal}{\emph{Research Memorandum}}  \bibinfo{volume}{2651}
  (\bibinfo{year}{1960}).
\newblock
\urldef\tempurl%
\url{https://www.rand.org/pubs/research_memoranda/RM2651.html}
\showURL{%
\tempurl}


\bibitem[\protect\citeauthoryear{Meliou, Gatterbauer, Moore, and Suciu}{Meliou
  et~al\mbox{.}}{2010}]%
        {DBLP:journals/pvldb/MeliouGMS11}
\bibfield{author}{\bibinfo{person}{Alexandra Meliou}, \bibinfo{person}{Wolfgang
  Gatterbauer}, \bibinfo{person}{Katherine~F. Moore}, {and}
  \bibinfo{person}{Dan Suciu}.} \bibinfo{year}{2010}\natexlab{}.
\newblock \showarticletitle{The complexity of causality and responsibility for
  query answers and non-answers}.
\newblock \bibinfo{journal}{\emph{{PVLDB}}} \bibinfo{volume}{4},
  \bibinfo{number}{1} (\bibinfo{year}{2010}), \bibinfo{pages}{34--45}.
\newblock
\urldef\tempurl%
\url{https://www.vldb.org/pvldb/vol4/p34-meliou.pdf}
\showURL{%
\tempurl}


\bibitem[\protect\citeauthoryear{Meliou, Roy, and Suciu}{Meliou
  et~al\mbox{.}}{2014}]%
        {DBLP:journals/pvldb/MeliouRS14}
\bibfield{author}{\bibinfo{person}{Alexandra Meliou}, \bibinfo{person}{Sudeepa
  Roy}, {and} \bibinfo{person}{Dan Suciu}.} \bibinfo{year}{2014}\natexlab{}.
\newblock \showarticletitle{Causality and explanations in databases}.
\newblock \bibinfo{journal}{\emph{Proceedings of the VLDB Endowment (PVLDB)}}
  \bibinfo{volume}{7}, \bibinfo{number}{13} (\bibinfo{year}{2014}),
  \bibinfo{pages}{1715--1716}.
\newblock
\urldef\tempurl%
\url{http://www.vldb.org/pvldb/vol7/p1715-meliou.pdf}
\showURL{%
\tempurl}


\bibitem[\protect\citeauthoryear{Monet}{Monet}{2020}]%
        {monet2020solving}
\bibfield{author}{\bibinfo{person}{Mika{\"e}l Monet}.}
  \bibinfo{year}{2020}\natexlab{}.
\newblock \showarticletitle{Solving a Special Case of the Intensional vs
  Extensional Conjecture in Probabilistic Databases}. In
  \bibinfo{booktitle}{\emph{Proceedings of PODS}}. \bibinfo{pages}{149--163}.
\newblock
\urldef\tempurl%
\url{https://arxiv.org/abs/1912.11864}
\showURL{%
\tempurl}


\bibitem[\protect\citeauthoryear{Reshef, Kimelfeld, and Livshits}{Reshef
  et~al\mbox{.}}{2020}]%
        {reshef2020impact}
\bibfield{author}{\bibinfo{person}{Alon Reshef}, \bibinfo{person}{Benny
  Kimelfeld}, {and} \bibinfo{person}{Ester Livshits}.}
  \bibinfo{year}{2020}\natexlab{}.
\newblock \showarticletitle{The impact of negation on the complexity of the
  Shapley value in conjunctive queries}. In
  \bibinfo{booktitle}{\emph{Proceedings of PODS}}. \bibinfo{pages}{285--297}.
\newblock
\urldef\tempurl%
\url{https://arxiv.org/abs/1912.12610}
\showURL{%
\tempurl}


\bibitem[\protect\citeauthoryear{Roth}{Roth}{1988}]%
        {1988TSv}
\bibfield{author}{\bibinfo{person}{Alvin~E Roth}.}
  \bibinfo{year}{1988}\natexlab{}.
\newblock \bibinfo{booktitle}{\emph{The Shapley Value: Essays in Honor of Lloyd
  S. Shapley}}.
\newblock \bibinfo{publisher}{Cambridge University Press}.
\newblock
\urldef\tempurl%
\url{http://www.library.fa.ru/files/Roth2.pdf}
\showURL{%
\tempurl}


\bibitem[\protect\citeauthoryear{Roy, Orr, and Suciu}{Roy
  et~al\mbox{.}}{2015}]%
        {DBLP:journals/pvldb/RoyOS15}
\bibfield{author}{\bibinfo{person}{Sudeepa Roy}, \bibinfo{person}{Laurel~J.
  Orr}, {and} \bibinfo{person}{Dan Suciu}.} \bibinfo{year}{2015}\natexlab{}.
\newblock \showarticletitle{Explaining query answers with explanation-ready
  databases}.
\newblock \bibinfo{journal}{\emph{Proceedings of the VLDB Endowment (PVLDB)}}
  \bibinfo{volume}{9}, \bibinfo{number}{4} (\bibinfo{year}{2015}),
  \bibinfo{pages}{348--359}.
\newblock
\urldef\tempurl%
\url{http://www.vldb.org/pvldb/vol9/p348-roy.pdf}
\showURL{%
\tempurl}


\bibitem[\protect\citeauthoryear{Salimi, Bertossi, Suciu, and den
  Broeck}{Salimi et~al\mbox{.}}{2016}]%
        {DBLP:conf/tapp/SalimiBSB16}
\bibfield{author}{\bibinfo{person}{Babak Salimi}, \bibinfo{person}{Leopoldo~E.
  Bertossi}, \bibinfo{person}{Dan Suciu}, {and} \bibinfo{person}{Guy~Van den
  Broeck}.} \bibinfo{year}{2016}\natexlab{}.
\newblock \showarticletitle{Quantifying causal effects on query answering in
  databases}. In \bibinfo{booktitle}{\emph{TaPP}}. \bibinfo{publisher}{{USENIX}
  Association}.
\newblock
\urldef\tempurl%
\url{http://web.cs.ucla.edu/~guyvdb/papers/SalimiTaPP16.pdf}
\showURL{%
\tempurl}


\bibitem[\protect\citeauthoryear{Senellart, Jachiet, Maniu, and
  Ramusat}{Senellart et~al\mbox{.}}{2018}]%
        {senellart2018provsql}
\bibfield{author}{\bibinfo{person}{Pierre Senellart}, \bibinfo{person}{Louis
  Jachiet}, \bibinfo{person}{Silviu Maniu}, {and} \bibinfo{person}{Yann
  Ramusat}.} \bibinfo{year}{2018}\natexlab{}.
\newblock \showarticletitle{Provsql: Provenance and probability management in
  postgresql}.
\newblock \bibinfo{journal}{\emph{Proceedings of the VLDB Endowment (PVLDB)}}
  \bibinfo{volume}{11}, \bibinfo{number}{12} (\bibinfo{year}{2018}),
  \bibinfo{pages}{2034--2037}.
\newblock
\urldef\tempurl%
\url{https://hal.inria.fr/hal-01851538/file/p976-senellart.pdf}
\showURL{%
\tempurl}


\bibitem[\protect\citeauthoryear{Shapley}{Shapley}{1953}]%
        {shapley1953value}
\bibfield{author}{\bibinfo{person}{Lloyd~S Shapley}.}
  \bibinfo{year}{1953}\natexlab{}.
\newblock \showarticletitle{A value for n-person games}.
\newblock \bibinfo{journal}{\emph{Contributions to the Theory of Games}}
  \bibinfo{volume}{2}, \bibinfo{number}{28} (\bibinfo{year}{1953}),
  \bibinfo{pages}{307--317}.
\newblock
\urldef\tempurl%
\url{http://www.library.fa.ru/files/Roth2.pdf\#page=39}
\showURL{%
\tempurl}


\bibitem[\protect\citeauthoryear{Suciu, Olteanu, R{\'e}, and Koch}{Suciu
  et~al\mbox{.}}{2011}]%
        {suciu2011probabilistic}
\bibfield{author}{\bibinfo{person}{Dan Suciu}, \bibinfo{person}{Dan Olteanu},
  \bibinfo{person}{Christopher R{\'e}}, {and} \bibinfo{person}{Christoph
  Koch}.} \bibinfo{year}{2011}\natexlab{}.
\newblock \bibinfo{booktitle}{\emph{Probabilistic Databases}}.
\newblock \bibinfo{publisher}{Morgan \& Claypool}.
\newblock
\urldef\tempurl%
\url{https://www.morganclaypool.com/doi/abs/10.2200/S00362ED1V01Y201105DTM016}
\showURL{%
\tempurl}


\bibitem[\protect\citeauthoryear{(TPC)}{(TPC)}{2017}]%
        {tpch}
\bibfield{author}{\bibinfo{person}{Transaction Processing Performance~Council
  (TPC)}.} \bibinfo{year}{2017}\natexlab{}.
\newblock
  \bibinfo{title}{\href{http://www.tpc.org/tpc_documents_current_versions/pdf/tpc-h_v2.17.2.pdf}{{TPC-H}
  benchmark}}.
\newblock
\newblock
\urldef\tempurl%
\url{http://www.tpc.org/tpc_documents_current_versions/pdf/tpc-h_v2.17.2.pdf}
\showURL{%
\tempurl}


\bibitem[\protect\citeauthoryear{Tseitin}{Tseitin}{1983}]%
        {tseitin1983complexity}
\bibfield{author}{\bibinfo{person}{Grigori~S Tseitin}.}
  \bibinfo{year}{1983}\natexlab{}.
\newblock \showarticletitle{On the complexity of derivation in propositional
  calculus}.
\newblock In \bibinfo{booktitle}{\emph{Automation of reasoning}}.
  \bibinfo{publisher}{Springer}, \bibinfo{pages}{466--483}.
\newblock
\urldef\tempurl%
\url{https://link.springer.com/chapter/10.1007/978-3-642-81955-1_28}
\showURL{%
\tempurl}


\bibitem[\protect\citeauthoryear{Van~den Broeck, Lykov, Schleich, and
  Suciu}{Van~den Broeck et~al\mbox{.}}{2021}]%
        {van2021tractability}
\bibfield{author}{\bibinfo{person}{Guy Van~den Broeck}, \bibinfo{person}{Anton
  Lykov}, \bibinfo{person}{Maximilian Schleich}, {and} \bibinfo{person}{Dan
  Suciu}.} \bibinfo{year}{2021}\natexlab{}.
\newblock \showarticletitle{On the tractability of shap explanations}. In
  \bibinfo{booktitle}{\emph{Proceedings of AAAI}}.
\newblock
\urldef\tempurl%
\url{https://arxiv.org/abs/2009.08634}
\showURL{%
\tempurl}


\bibitem[\protect\citeauthoryear{Vardi}{Vardi}{1982}]%
        {vardi1982complexity}
\bibfield{author}{\bibinfo{person}{Moshe~Y Vardi}.}
  \bibinfo{year}{1982}\natexlab{}.
\newblock \showarticletitle{The complexity of relational query languages}. In
  \bibinfo{booktitle}{\emph{STOC}}. ACM, \bibinfo{pages}{137--146}.
\newblock
\urldef\tempurl%
\url{http://www.dis.uniroma1.it/~degiacom/didattica/semingsoft/SIS05-06/materiale/1-query-congiuntive/riferimenti/vardi-1982.pdf}
\showURL{%
\tempurl}


\bibitem[\protect\citeauthoryear{Wang, Wang, Li, He, Chen, and Liu}{Wang
  et~al\mbox{.}}{2013}]%
        {wang2013theoretical}
\bibfield{author}{\bibinfo{person}{Yining Wang}, \bibinfo{person}{Liwei Wang},
  \bibinfo{person}{Yuanzhi Li}, \bibinfo{person}{Di He}, \bibinfo{person}{Wei
  Chen}, {and} \bibinfo{person}{Tie-Yan Liu}.} \bibinfo{year}{2013}\natexlab{}.
\newblock \showarticletitle{A theoretical analysis of NDCG ranking measures}.
  In \bibinfo{booktitle}{\emph{Proceedings of COLT}}, Vol.~\bibinfo{volume}{8}.
  \bibinfo{pages}{6}.
\newblock
\urldef\tempurl%
\url{https://citeseerx.ist.psu.edu/viewdoc/download?doi=10.1.1.680.490&rep=rep1&type=pdf}
\showURL{%
\tempurl}


\bibitem[\protect\citeauthoryear{Yun, Vesic, Croitoru, and Bisquert}{Yun
  et~al\mbox{.}}{2018}]%
        {DBLP:conf/ijcai/YunVCB18}
\bibfield{author}{\bibinfo{person}{Bruno Yun}, \bibinfo{person}{Srdjan Vesic},
  \bibinfo{person}{Madalina Croitoru}, {and} \bibinfo{person}{Pierre
  Bisquert}.} \bibinfo{year}{2018}\natexlab{}.
\newblock \showarticletitle{Inconsistency Measures for Repair Semantics in
  {OBDA}}. In \bibinfo{booktitle}{\emph{{IJCAI}}}.
  \bibinfo{publisher}{ijcai.org}, \bibinfo{pages}{1977--1983}.
\newblock
\urldef\tempurl%
\url{https://www.ijcai.org/proceedings/2018/0273.pdf}
\showURL{%
\tempurl}


\end{thebibliography}

\end{document}